\title{Maximum Matching in Online Preemptive Model}
\author[1]{Ashish Chiplunkar}
\author[2]{Sumedh Tirodkar}
\author[3]{Sundar Vishwanathan}
\affil[1-3]{Department of CSE, IIT Bombay\\
  Mumbai, India\\
  \texttt{\{ashishc,sumedht,sundar\}@cse.iitb.ac.in}}
\authorrunning{A.\, Chiplunkar, S.\, Tirodkar and S.\, Vishwanathan} %mandatory. First: Use abbreviated first/middle names. Second (only in severe cases): Use first author plus 'et. al.'
\subjclass{F.1.2 Models of Computation}% mandatory: Please choose ACM 1998 classifications from http://www.acm.org/about/class/ccs98-html . E.g., cite as "F.1.1 Models of Computation". 
\keywords{Online Preemptive Model, Barely Random Algorithms,  Primal-Dual Analysis}% mandatory: Please provide 1-5 keywords
 \newtheorem{theorem}{Theorem}[section]
  \newtheorem{lemma}[theorem]{Lemma}
  \newtheorem{claim}[theorem]{Claim}
\DeclareMathOperator{\alg}{ALG}
\DeclareMathOperator{\opt}{OPT}
\DeclareMathOperator{\adv}{ADV}
\title{On Randomized Algorithms for Matching in the Online Preemptive Model}
 \author{Ashish Chiplunkar\\
 \{ashish.chiplunkar@gmail.com\}
 \and Sumedh Tirodkar\\
 \{sumedht@cse.iitb.ac.in\}
 \and Sundar Vishwanathan\\
 \{sundar@cse.iitb.ac.in\}}
 \date{}
\begin{document}

\maketitle

\begin{abstract}
We investigate the power of randomized algorithms for the maximum cardinality matching (MCM) and the maximum weight matching (MWM) problems in the online preemptive model. In this model, the edges of a graph are revealed one by one and the algorithm is required to always maintain a valid matching. On seeing an edge, the algorithm has to either accept or reject the edge. If accepted, then the adjacent edges are discarded. The complexity of the problem is settled for deterministic algorithms~\cite{mcgregor,varadaraja}.

Almost nothing is known for randomized algorithms. A lower bound of $1.693$ is known for MCM with a trivial upper bound of two. An upper bound of $5.356$ is known for MWM. We initiate a systematic study of the same in this paper with an aim to isolate and understand the difficulty.
%We present a primal-dual analysis for the deterministic algorithm due to~\cite{mcgregor}, and extend this technique to barely random algorithms for MCM on paths and growing trees. Next, we identify certain natural classes of randomized online algorithms, and of input models, and prove lower bounds on the competitive ratio achievable for these classes. We also present the best possible $\frac{4}{3}$-competitive randomized algorithm for MCM on paths.
We begin with  a primal-dual analysis of the deterministic algorithm due to~\cite{mcgregor}. All deterministic lower bounds are on instances which are trees at 
every step. For this class of (unweighted) graphs we present a randomized algorithm which is $\frac{28}{15}$-competitive. The analysis is a considerable extension of the (simple) primal-dual analysis for the deterministic case. The key new technique is that the distribution of primal charge to  dual variables depends on the ``neighborhood'' and needs to be done after having seen the  entire input. The assignment is asymmetric: in that edges may assign different charges to the two end-points. Also the proof depends on a non-trivial structural statement on the performance of the algorithm on the input tree.

The other main result of this paper is an extension of  the deterministic lower bound of Varadaraja~\cite{varadaraja} to a natural  class of  randomized algorithms which decide whether to accept a new edge  or not using {\em independent} random choices. This indicates that randomized algorithms will have to use {\em dependent} coin tosses to succeed. Indeed, the few known randomized algorithms, even in very restricted models follow this.

We also present the best possible $\frac{4}{3}$-competitive randomized algorithm for MCM on paths.
\end{abstract}

\section{Introduction}
Matching has been a central problem in combinatorial optimization. Indeed, algorithm design in various models of computations, sequential, parallel, streaming, etc., have been influenced by techniques used for matching. We study the maximum cardinality matching (MCM) and the maximum weight matching (MWM) problems in the online preemptive model. In this model, edges $e_1,\dots,$  $e_m$ of a graph, possibly weighted, are presented one by one. An algorithm is required to output a matching $M_i$ after the arrival of each edge $e_i$. This model constrains an algorithm to accept/reject an edge as soon as it is revealed. If accepted, the adjacent edges, if any, have to be discarded from $M_i$.

 An algorithm is said to have a \textit{competitive ratio} $\alpha$ if 
 the cost of the matching maintained by the algorithm is at
 least $\frac{1}{\alpha}$ times the cost of the offline optimum over all
 inputs.
The deterministic complexity of this problem is settled. 
For maximum cardinality matching (MCM), it is an easy exercise to prove
a tight bound of two.

The weighted version (MWM) is more difficult. 
Improving an earlier result of Feigenbaum et al, McGregor~\cite{mcgregor} gave a deterministic algorithm together with an ingenious analysis
to get a competitive ratio of $3+2\sqrt{2}\approx5.828$.
Later,  this was proved to be  optimal by Varadaraja~\cite{varadaraja}. 

Very little is known on the power of randomness for this problem.
Recently, Epstein et al.~\cite{epstein} proved a lower bound of 
$1+\ln 2 \approx1.693$ on the competitive ratio of randomized algorithms
for MCM. This is the best lower bound known even for MWM.
Epstein et al.~\cite{epstein} also give a $5.356$-competitive randomized
algorithm for MWM. 

In this paper, we initiate a systematic study of the power of randomness 
for this problem.  Our main contribution is perhaps to throw some light
on where lies the difficulty. We first give an analysis of McGregor's
algorithm using the traditional Primal-Dual framework (see Appendix~\ref{pd}).  All lower bounds
for deterministic algorithms (both for MCM and MWM) employ
{\em growing trees}.
That is,  the input graph is a tree at every stage. It is then natural to start our investigation for this class of inputs. For this class, we give a 
randomized algorithm (that uses two bits of randomness) that is 
$\frac{28}{15}$ competitive. While this result is modest, already the
analysis is considerably more involved than the
traditional primal dual analysis. In the traditional primal dual analysis
of the matching problem, the primal charge (every selected edge 
contributes one to the charge) is distributed (perhaps equally) to
the two end-points. In the online case, this is usually done as
the algorithm proceeds. Our assignment depends on the structure
of the final tree, so this assignment happens at the end.
Our charge distribution is {\em not} symmetric. It depends on
the position of the edge in the tree (we make this clear in 
the analysis) as also the behavior of neighboring edges. The
main technical lemma shows that the charge distribution will
depend on a neighborhood of distance at most four.
We also note that these algorithms are (restricted versions of)
randomized greedy algorithms even in the offline setting. Obtaining an approximation ratio less than two for general graphs, even in the offline setting is
a notorious problem. See \cite{poloczek,chan} for a glimpse of the difficulty.

The optimal maximal matching algorithm for MCM, and McGregor's~\cite{mcgregor} optimal deterministic algorithm for MWM are both local algorithms. The choice of whether a new edge should accepted or rejected is  based only on the weight of the new edge and the weight of the conflicting edges, if any, in the current matching.

It is natural to add randomness to such local algorithms, and to ask 
whether they do better than the known deterministic lower bounds. An
obvious way to add randomness is to accept/reject the new edge with 
certain probability, which is only dependent on the new edge and 
the conflicting edges in the current matching.
The choice of adding a new edge is independent of the previous 
coin tosses used by the algorithm. We call such algorithms 
{\em randomized local algorithms}. We show that randomized local 
algorithms cannot do better than optimal deterministic algorithms. 
This indicates that randomized algorithms may have to use dependent coin
tosses to get better approximation ratios. 
Indeed, the algorithm by Epstein et al. does this. So does
our randomized algorithms.

The randomized algorithm of Epstein et al.~\cite{epstein} works as 
follows. For a parameter $\theta$, they round the weights of the edges 
to powers of $\theta$ randomly, and then they update the matching using 
a deterministic algorithm. The weights get distorted by a factor $\frac{\theta\ln\theta}{\theta-1}$ in the rounding step, and the deterministic algorithm has a competitive ratio of $2+\frac{2}{\theta-2}$ on \textit{$\theta$-structured graphs}, i.e., graphs with edge weights being powers of $\theta$. The overall competitive ratio of the randomized algorithm is $\frac{\theta \ln \theta}{\theta-1}\cdot \left(2+\frac{2}{\theta-2}\right)$ which is minimized at $\theta\approx5.356$. 
A natural approach  to reducing this competitive ratio is to improve the approximation ratio for $\theta$ structured graphs. However, we prove that the competitive ratio $2+\frac{2}{\theta-2}$ is tight for $\theta$-structured graphs, as long as $\theta\geq 4$, for deterministic algorithms.

One (minor) contribution of this paper is
a randomized algorithms for MCM on paths, 
that achieves a competitive ratio of $\frac{4}{3}$, 
with a matching lower bound. 

The other (minor) contribution of this paper  is to highlight
model specific bounds.  There is a  difference in the models
in which the lower and upper bounds have been proved
and this may be one reason for the large gaps. 
%A primal-dual based analysis of McGregor's deterministic algorithm~\cite{mcgregor} is presented in the Appendix (section~\ref{pd}).
\begin{comment}
\subsection{Organization of the paper}
Due to space restrictions, the two main results, the
upper bound on growing trees and the lower bound for
randomized algorithms which use independent coin tosses
are presented in the main body of the paper. An analysis
of McGregor's algorithm, 
an optimal randomized algorithm for MCM on paths,
and on bounded degree trees is moved to the appendix.
In section~\ref{lb}, we present various lower bounds on specific type of algorithms and inputs. 
\end{comment}
\section{Barely Random Algorithms for MCM}
In this section, we present barely random algorithms, that is,
algorithms that use a constant number of random bits, for MCM on growing trees. 

The ideal way to read the paper, for a reader of leisure, is to first read our analysis of McGregor's algorithm (presented in Appendix~\ref{pd}), then the analysis of the algorithm for trees with maximum vertex degree three (presented in Appendix~\ref{ub2}) and then this section. The dual variable management which is the key contribution gets progressively more complicated. It is local in the first two cases. The Appendix~\ref{sa_gt} also gives an example which shows why a non-local analysis is needed. Here are the well known Primal and Dual formulations of the matching problem. The primal formulation is known to be optimum for bipartite graphs. For general graphs, odd set constraints have to be added. But they are not needed in this paper.
 \begin{center}
 \begin{tabular}{c|c}
 Primal LP & Dual LP \\\hline
 $\max \sum_e x_e$ & $\min \sum_v y_v$\\
 $\forall v:\sum_{v\in e}x_e \leq 1$ & $\forall e: y_u + y_v \geq 1$\\
 $x_e\geq 0$ & $y_v \geq 0$
 \end{tabular}
\end{center}
\subsection{Randomized Algorithm for MCM on Growing Trees}\label{ub3}
 In this section, by using only two bits of randomness, we beat the deterministic lower bound of $2$ for MCM on growing trees. 
   \begin{algorithm}[H]
  \caption{Randomized Algorithm for Growing Trees}
   \begin{enumerate}
    \item The algorithm maintains four matchings: $M_1,M_2,M_3,$ and $M_4$.
    \item On receipt of an edge $e$, the processing happens in two phases.
   \begin{enumerate}
	   \item {\bf The augment phase.} The new edge $e$
		   is added to each $M_i$ in which there are no 
		   edges  adjacent to $e$.
	   \item {\bf The switching phase.}
		   For $i=2,3,4$, in order, $e$ is added to
		   $M_i$ (if it was not added in the previous 
		   phase) and the conflicting edge is discarded,
		   provided it decreases the quantity 
		   $\sum_{i,j\in[4],i\neq j}|M_i\cap M_j|$.
   \end{enumerate}
   \item Output  matching $M_i$ with probability $\frac{1}{4}$.
   \end{enumerate}
 \end{algorithm}
 We begin by assuming (we justify this below) that all edges that do not belong to any matching are leaf edges. This helps in simplifying the analysis. Suppose that there is an edge $e$ which does not belong to any matching, but is not a leaf edge. By removing $e$, the tree is partitioned into two subtrees. The edge $e$ is added to the tree in which it has $4$ neighboring edges. (There must be such a subtree, see next para.)
 Each tree is analysed separately.
 
 We will say that a vertex(/an edge) is {\em covered } by a matching $M_i$ if there is an edge in $M_i$ which is incident on(/adjacent to) the vertex(/edge). We also say that an edge is {\em covered } by a matching $M_i$ if it belongs to $M_i$.
We begin with the  following observations.
\begin{itemize}
 \item After an edge is revealed, its end points are
	 covered by all $4$ matchings. 
 \item An edge $e$ that does not belong to any matching has $4$ edges 
	 incident on one of its end points such that each of these edges
	 belong to a distinct matching.
	 This  holds when the edge is revealed, and does not change subsequently.
%  \item The vertices of an internal edge are covered by $4$ matchings, i.e. either there are $4$ distinct edges incident on these vertices and each of them belongs to some matching, or there are less than $4$ edges incident on these vertices, but some edge/edges belongs/belong to multiple matchings.
%  \item Every edge in the tree has to be covered by at least $3$ matchings. An edge covered by only $3$ matchings is defined to be a ``bad'' edge. A ``bad'' edge can only be incident on a vertex of degree $2$ or $3$.
%  \item An edge belonging to $3$ matchings can only be a leaf edge.
\end{itemize}
An edge is called {\em internal} if there are edges incident on both its end points. An edge is called {\em bad} if its end points are covered by only $3$ matchings. 

We begin by proving some properties about the algorithm. The key structural lemma that keeps ``influences'' of bad edges local is given below.  The two assertions in the Lemma have to be proved together by induction.
\begin{lemma}\label{internal}
\begin{enumerate}
 \item An internal edge is covered by at least four matchings (when counted with multiplicities). It is not necessary that these four edges be in distinct matchings.
 \item If $p,q$ and $r$ are three consecutive vertices on a path, then bad edges cannot be incident on all $3$ of these vertices, (as in figure~\ref{IC}).
\end{enumerate}
\end{lemma}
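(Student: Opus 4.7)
The plan is to prove both assertions simultaneously by induction on the number of edges in the tree, following the online arrivals. In the base case, a tree with at most one edge has no internal edges and no path of three consecutive vertices, so both assertions hold vacuously.

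For the inductive step, I would assume both assertions hold just before a new edge $e=(u,v)$ arrives; by the tree-growing assumption I may take $u$ to be a new leaf. The augment phase inserts $e$ into every $M_i$ that currently has no edge incident to $v$ (since $u$ is new, the only possible conflicts for $e$ lie at $v$), and for $i\in\{2,3,4\}$ the switching phase swaps $e$ in place of the unique conflicting $M_i$-edge at $v$ exactly when $\Phi=\sum_{i\neq j}|M_i\cap M_j|$ strictly decreases. The decisive consequence is that no edge outside the small neighborhood of $v$ can change its matching memberships in this step, so every edge whose closed neighborhood avoids $v$ keeps both its coverage pattern and its good/bad status, and the induction hypothesis takes care of it immediately.

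For assertion (1), the key point is to show that after processing $e$, every matching $M_i$ still has an edge incident to $v$ -- either an old edge that survived switching, or $e$ itself, placed during augment or swapped in. Consequently every edge with $v$ as an endpoint is covered at its $v$-side by all four matchings, so its coverage with multiplicity is at least $4$. This handles simultaneously the new edge $e$ when it has become internal (namely, when $v$ already had neighbors) and every previously internal edge incident to $v$; internal edges not incident to $v$ are unaffected, and the induction hypothesis finishes the job.

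For assertion (2) I would argue by contradiction. Suppose after processing $e$ there exist three consecutive vertices $p,q,r$ each incident to some bad edge. Since (2) held for the previous tree, at least one of these bad edges must have just become bad, and the only edges whose status can change during this step are those incident to $v$ (including $e$ itself). I would then case-analyze by the position of $v$ within $p,q,r$, by whether the offending bad edge at each of $p,q,r$ lies on the path or leaves it, and by which single matching fails to cover it; in every case I expect to derive either (a) a swap that the algorithm was obliged to perform because it strictly decreases $\Phi$, contradicting the switching rule, or (b) a configuration that already violated (1) or (2) in the previous tree, contradicting the induction hypothesis. The main obstacle is precisely this case analysis: because $\Phi$ couples the four matchings globally, one has to invoke the locality principle flagged in the introduction -- that the ripple of a single arrival extends at most a neighborhood of radius $4$ -- to close all remaining cases, and the simultaneous induction is genuinely needed, since (1) is used repeatedly when checking that a hypothetical unperformed swap really does lower $\Phi$.
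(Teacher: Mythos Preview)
Your induction plan for part~(1) contains a concrete gap. You claim that ``internal edges not incident to $v$ are unaffected'' when the new leaf edge $e=(u,v)$ is processed, but this is not true. In the switching phase the algorithm may remove an edge $g=(v,w)$ from some $M_i$ and replace it by $e$; the vertex $w$ then loses its $M_i$-coverage, so every internal edge $h$ incident to $w$ (and not to $v$) loses one unit of coverage-with-multiplicity at $w$. Your induction hypothesis only tells you that $h$ had multiplicity $\ge 4$ \emph{before} the step, not after, so the argument as written does not close. One would need a separate argument that such an $h$ still has multiplicity $\ge 4$, and this is exactly where the work lies.

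The paper avoids this difficulty by \emph{not} doing a global induction on arrivals. For part~(1) it fixes a single edge $(u,v)$, looks at the moment it was revealed, and does a five-way case split on how many matchings it entered at that moment ($0,1,2,3,4$). In each case it tracks forward in time the matching edges touching $u$ and $v$, using the key observation that whenever a matching copy on $(u,v)$ switches away, it switches to an edge that is still adjacent to $(u,v)$; so the count at the relevant endpoint never drops. This edge-centric history argument sidesteps the ``ripple to $w$'' issue entirely.

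For part~(2) the paper's argument is also much shorter and more direct than the case analysis you outline: assuming a bad leaf edge at $q$, it notes that at the moment this edge was revealed one of $(p,q)$ or $(q,r)$ must have been in two matchings, and for that edge to subsequently drop to one matching a further edge must arrive at its \emph{other} endpoint, forcing degree $\ge 3$ there and hence precluding a bad edge at that vertex. Your sketch (``in every case I expect to derive \ldots'') is not yet a proof and, in particular, the invocation of a ``radius-$4$ ripple'' principle is not something established anywhere in the paper; that remark in the introduction refers to the dual-charge analysis of Lemma~\ref{tree1}, not to this structural lemma.
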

The proof of this lemma is in the Appendix~\ref{pf_inbad}.
\begin{figure}
\centering
\begin{tikzpicture}
\fill [color=black] (0,0) circle (1pt);
\fill [color=black] (2,0) circle (1pt);
\fill [color=black] (4,0) circle (1pt);
\fill [color=black] (6,0) circle (1pt);
\fill [color=black] (8,0) circle (1pt);
\fill [color=black] (2,2) circle (1pt);
\fill [color=black] (4,2) circle (1pt);
\fill [color=black] (6,2) circle (1pt);

\draw (0,0) -- (2,0);
\draw (2,0) -- (4,0);
\draw (4,0) -- (6,0);
\draw (6,0) -- (8,0);

\draw (2,0) -- (2,2);
\draw (4,0) -- (4,2);
\draw (6,0) -- (6,2);

\draw (1.8,0) node[anchor=north west] {$p$};
\draw (3.8,0) node[anchor=north west] {$q$};
\draw (5.8,0) node[anchor=north west] {$r$};

\draw (0.9,1.3) node[anchor=north west] {``bad''};
\draw (2.9,1.3) node[anchor=north west] {``bad''};
\draw (4.9,1.3) node[anchor=north west] {``bad''};

\end{tikzpicture}
\caption{Forbidden Configuration}\label{IC}
\end{figure}
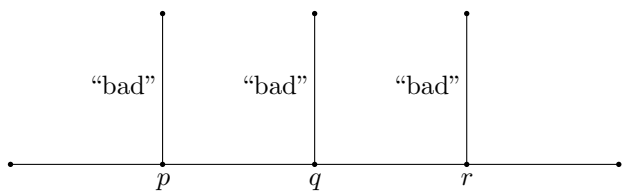

\begin{theorem}
 The randomized algorithm for finding MCM on growing trees is $\frac{28}{15}$-competitive.
\end{theorem}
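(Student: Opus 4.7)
The plan is to use the primal--dual framework. A tree is bipartite, so the LP relaxation of maximum matching is integral, and $\opt$ equals the minimum of $\sum_v y_v$ over dual-feasible $y$ (satisfying $y_u+y_v\ge 1$ for every edge $(u,v)$ and $y\ge 0$). Because $\mathbb{E}[\alg]=\tfrac14\sum_{i=1}^{4}|M_i|$, the desired bound $\opt\le\tfrac{28}{15}\mathbb{E}[\alg]$ will follow once I exhibit a feasible dual with $\sum_v y_v\le \tfrac{7}{15}\sum_i|M_i|$. I will construct $y$ from a non-negative vertex charge $c(v)$ obtained by splitting the unit weight of each matching edge $(x,y)\in M_i$ between its two endpoints, so that $\sum_v c(v)=\sum_i|M_i|$, and then setting $y_v=\tfrac{7}{15}c(v)$. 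The entire task reduces to proving the pointwise inequality $c(u)+c(v)\ge \tfrac{15}{7}$ on every edge $(u,v)$ of the final tree.

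As a baseline I would analyze the symmetric $\tfrac12$--$\tfrac12$ split, for which $c(v)=\tfrac12\operatorname{cov}(v)$, where $\operatorname{cov}(v)$ denotes the number of matchings in which $v$ is matched; then the pointwise inequality becomes $\operatorname{cov}(u)+\operatorname{cov}(v)\ge 5$. Using the paper's reduction that every non-matching edge is a leaf edge, every internal edge $e=(u,v)$ lies in at least one $M_i$, and Lemma~3.1(1) yields $\operatorname{cov}(u)+\operatorname{cov}(v)\ge 4+|\{i:e\in M_i\}|\ge 5$. Hence the symmetric split already suffices on every internal edge, and the analysis is confined to leaf edges. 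A short case split on $k=|\{i:e\in M_i\}|$ and on $\operatorname{cov}(u)$ (with $v$ the leaf) shows that the symmetric split fails only when $k=0$ (in which case $\operatorname{cov}(u)=4$ by the observation on non-matching edges) or when $k=1$ with $\operatorname{cov}(u)\in\{2,3\}$. In each such deficient configuration the deficit $\tfrac{15}{7}-\tfrac12(\operatorname{cov}(u)+\operatorname{cov}(v))$ is at most $\tfrac{9}{14}$, and the non-leaf endpoint $u$ always carries at least one matching edge $(u,w)\in M_j$ reaching into the interior of the tree.

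To cover the deficits I would introduce an asymmetric redistribution. Along a donor edge $(u,w)\in M_j$, the symmetric split is perturbed to $\tfrac12+\delta$ at $u$ and $\tfrac12-\delta$ at $w$, with $\delta$ chosen to supply precisely the amount missing at the deficient leaf edge at $u$. Since the donor edge $(u,w)$ may itself be close to other deficient configurations, the donation cascades outward in the tree, and the charge that a given vertex ultimately carries depends on its neighborhood of distance up to four, exactly as the paper warns. The bookkeeping then is to (i) prescribe, for every deficient leaf edge, which donor edges feed the shift and in what amounts; (ii) aggregate the resulting load along each interior edge; and (iii) verify that this load never exceeds the edge's local slack $\tfrac12(\operatorname{cov}(x)+\operatorname{cov}(y))-\tfrac{15}{7}$ supplied by Lemma~3.1(1).

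The main obstacle is step~(iii). A priori, a long chain of deficient leaf configurations could pile donations onto the same interior edge, and without further structure the slack may not suffice. Lemma~3.1(2), the forbidden-configuration statement, is exactly what rules this out: it forbids bad edges from being incident to three consecutive vertices of any path, which, once translated into a statement about deficient leaf edges and their donor edges, forces the deficient configurations along any shift-path to be separated. The upshot is that the cumulative load on any interior edge is bounded by a quantity strictly smaller than its slack, after a case analysis over its depth-four neighborhood. Once this load bound is in place, the pointwise inequality $c(u)+c(v)\ge\tfrac{15}{7}$ holds globally, and the claimed $\tfrac{28}{15}$-competitive ratio follows from the LP duality setup of the first paragraph.
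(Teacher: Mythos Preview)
Your framework, target inequality ($c(u)+c(v)\ge\tfrac{15}{7}$ summed over the four matchings), and reliance on Lemma~\ref{internal} all match the paper exactly. The difference is in how the asymmetric split is organized. The paper does not start from the symmetric split and patch deficits; instead it defines a \emph{rank} on vertices (leaves have rank~$0$, and the rank of $v$ is one plus the second-highest rank among its neighbors, so every vertex has at most one neighbor of rank at least its own) and directs charge along each matching edge from the lower-rank to the higher-rank endpoint. Concretely, an edge in one matching sends $0$, $\epsilon$, or $2\epsilon$ to its lower endpoint (depending on the local picture) and the rest up; an edge in two matchings sends at most $3\epsilon$ down; an edge in three or four matchings sends everything up. The case analysis (Lemma~\ref{tree1}) is then indexed by how many matchings contain $e$ and whether $e$ is bad; each of the six cases verifies $y_u+y_v\ge 2+\epsilon$, and $\epsilon=\tfrac17$ gives the bound.

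The rank orientation is the idea your proposal is missing. It supplies a global, acyclic direction for charge flow, so there are no cascades to track: each matching edge's split is fixed a~priori, and the verification at any edge looks only one step up or down in rank. Your reactive donor scheme instead requires the load-along-a-path argument you flag as step~(iii); that is the entire substance of the proof, and you have not carried it out. Note also that your donor assumption that ``the non-leaf endpoint $u$ always carries a matching edge reaching into the interior'' fails when $u$ is a star center all of whose neighbors are leaves; in the paper's rank-based scheme this is a non-issue because all of those leaf matching edges push their full charge up to $u$.
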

A local analysis like the one in Appendix~\ref{ub2} will not work here. For a reason, see Appendix~\ref{sa_gt}. The analysis of this algorithm proceeds in two steps. Once all edges have been seen, we impose a partial order on the vertices of the tree and then with the help of this partial order, we distribute the  primal charge to the dual variables, and use the primal-dual framework to infer the competitive ratio. If every edge had four adjacent edges in some matching (counted with multiplicities) then the distribution of dual charge is easy. However we do have edges which have only three adjacent edges in matchings. We would like the
edges in matchings to contribute more to the end-points of these edges. Then, the charge on the other end-point would be less and we need to balance this through other edges. Details follow.\\
\textbf{Ranks:} Consider a vertex $v$. Let $v_1,\dots,v_k$ be the neighbors of $v$. For each $i$, let $d_i$ denote the maximum distance from $v$ to any leaf if there was no edge between $v$ and $v_i$.The rank of $v$ is defined as the minimum of all the $d_i$. Observe that the rank of $v$ is one plus the second highest rank among the neighbors of $v$. Thus there can be at most one neighbor of vertex $v$ which has rank at least the rank of $v$. All leaves have rank $0$. Rank $1$ vertices have at most one non-leaf neighbor.
\begin{lemma}\label{tree1}
There exists an assignment of the primal charge amongst the dual variables such that the dual constraint for each edge $e\equiv (u,v)$ is satisfied at least $\frac{15}{28}$ in expectation, i.e. $\mathbb{E}[y_u+y_v]\geq \frac{15}{28}$.
 \end{lemma}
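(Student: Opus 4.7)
The plan is to define an asymmetric primal-to-dual charge assignment, using the partial order induced by ranks on the tree and Lemma~\ref{internal} to argue that every input edge receives enough dual charge. For each matched edge $e=(u,v)\in M_i$ I write $\alpha_u^{i,e}+\alpha_v^{i,e}=1$ and set $y_v=\tfrac{1}{4}\sum_i\sum_{f\in M_i,\,f\ni v}\alpha_v^{i,f}$, so that $\sum_v y_v=\mathbb{E}[|\alg|]$. The goal reduces to choosing the $\alpha$'s so that $y_u+y_v\ge \tfrac{15}{28}$ holds for every input edge $e=(u,v)$, which immediately yields the competitive ratio $28/15$.

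I would begin with the symmetric baseline $\alpha_u^{i,e}=\alpha_v^{i,e}=1/2$, for which $y_u+y_v=\tfrac{1}{8}C_e$ with $C_e=\sum_i(\mathbf{1}[u\text{ is covered by }M_i]+\mathbf{1}[v\text{ is covered by }M_i])$. Edges with $C_e$ comfortably above $30/7$, in particular the non-bad internal edges whose endpoints are both covered by all four matchings, are already safe under the baseline with room to donate. The deficiencies live on \emph{bad} internal edges, where $C_e$ may be as low as $4$ (by Lemma~\ref{internal}(1)), and on leaf edges, where the leaf endpoint contributes only via $e$ itself. To redistribute charge towards these deficient edges I would exploit the rank-induced partial order: since every vertex has at most one neighbour of rank at least its own, ranks point every edge towards the high-rank centre of the tree, and the asymmetric shifts will flow along this direction.

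The asymmetric assignment is then set up locally around each deficient edge. A matched leaf edge $(u,v)$ with $v$ a leaf pushes its entire weight onto $v$, since $v$ can be covered only by $e$. In the radius-four neighbourhood of every bad input edge $e^\star$, a fixed fraction of charge is redirected from good donor edges along the rank-induced direction towards the endpoints of $e^\star$, while matched edges whose entire radius-four neighbourhood is clean keep the symmetric split. Verification of $y_u+y_v\ge 15/28$ then splits by neighbourhood type: good internal edges satisfy the bound by the baseline with slack; bad internal edges meet it tightly after absorbing donations; leaf edges meet it because, by the second observation preceding Lemma~\ref{internal}, the non-leaf endpoint either has four distinct covering edges (one per matching) or is the full-charge recipient of its own matched leaf edge. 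Throughout, Lemma~\ref{internal}(2) is invoked to rule out the forbidden three-vertex bad configuration that would otherwise force two bad edges to compete for a single donor.

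The hard part will be the simultaneous consistency of the shifts. A single good donor may lie in the radius-four neighbourhoods of several bad edges, and its split must satisfy every demand at once. The forbidden configuration of Lemma~\ref{internal}(2) is exactly what bounds this overlap, but even so the verification reduces to enumerating a finite list of admissible neighbourhood types up to distance four and solving the associated linear arithmetic. Producing a scheme in which every such configuration balances exactly at $\tfrac{15}{28}$, rather than strictly above, is what pins down the donation fractions and is where the bulk of the technical work will lie; any looser choice would be improvable, and any tighter attempt would break in one of the bad neighbourhoods.
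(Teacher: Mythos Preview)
Your overall framework—rank-induced orientation, asymmetric splits, using Lemma~\ref{internal} to bound how bad neighbourhoods can overlap—matches the paper's. But two concrete points derail the scheme as written.

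First, there are no ``bad internal edges.'' By Lemma~\ref{internal}(1), every internal edge is covered by at least four matchings (with multiplicity), whereas a bad edge is covered by only three; so every bad edge is a leaf edge. Your invocation of Lemma~\ref{internal}(1) to say $C_e$ can be as low as~$4$ on bad internal edges conflates two different counts and misplaces the deficiency.

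Second, and more seriously, pushing a matched leaf edge's entire charge onto the leaf is the wrong direction; the paper pushes to the \emph{higher}-rank endpoint. Consider a star: five leaf edges $f_1,\dots,f_5$ at a centre $u$, revealed in order. The algorithm ends with $f_1,\dots,f_4$ each in exactly one distinct matching and $f_5$ in none. None of these edges is bad (the centre is covered by all four matchings), so your scheme triggers no redirection. Under your rule each matched $f_i$ sends all its charge to its own leaf, so $y_u=0$; since $f_5$'s leaf $v_5$ is uncovered, $y_{v_5}=0$ as well, and the constraint for $f_5$ gets $0<15/28$. Even reverting to the symmetric baseline gives only $y_u+y_{v_5}=\tfrac12<\tfrac{15}{28}$. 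The paper's fix is exactly the opposite orientation: each $f_i$ sends at least $1-2\epsilon$ of its unit charge \emph{up} to $u$, so $y_u\ge 3(1-2\epsilon)$ summed over the four matchings, which with $\epsilon=1/7$ gives $y_u+y_{v_5}\ge 15/28$ in expectation (this is Case~1 of the paper's proof). Your justification ``$v$ can be covered only by $e$'' is precisely why charge should \emph{not} go to the leaf: $y_v$ helps only the single constraint for $e$, while $y_u$ serves every edge at $u$, including the unmatched ones.
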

 
\begin{proof}
  Consider an edge $e\equiv (u,v)$ where rank of $u$ is $i$ and rank of $v$ is $j$. We will show that $y_u+y_v\geq 2+\epsilon$ for such an edge, when summed over all four matchings. The value of $\epsilon$ is chosen later. The proof is by induction on the lexicographic order of $<j, i>$, $j\geq i$. \\ 
\textbf{Dual Variable Management:} Consider an edge $e$ from a vertex of rank $i$ to a vertex of rank $j$, such that $i\leq j$. This edge will distribute its primal weight between its end-points. The exact values are discussed in the proof of the claim below.
	In general, we look to transfer all of the primal 
	charge to the higher ranked vertex. But this does
	not work and we need a finer strategy. This is 
	detailed below.
\begin{itemize}
 \item If $e$ does not belong to any matching, then it does not contribute to the value of dual variables.
 \item If $e$ belongs to a single matching then, depending
	 on the situation, one of $0$, $\epsilon$ or $2\epsilon$
	 of its primal charge will be assigned to the rank $i$
	 vertex and rest will be assigned to the rank $j$ vertex.
	 The small constant $\epsilon$ is determined later.
 \item If $e$ belongs to two matchings, then at 
	 most $3\epsilon$ of its primal charge will be
	 assigned to the rank $i$ vertex as required. The
	 rest is assigned to the rank $j$ vertex.
 \item If $e$ belongs to three or four matchings, then its entire primal 
	 charge is assigned to the rank $j$ vertex.
 \end{itemize}

  The analysis breaks up into six cases.
  
  \textbf{Case 1.} Suppose $e$ does not belong to any matching. Then it must be
	 a leaf edge. Hence, $i=0$. There must be $4$ edges incident on 
	 $v$ besides $e$, each belonging to a distinct matching.
	 Of these $4$, at least $3$ say  $e_1$, $e_2$, and $e_3$, must be
	 from lower ranked vertices to the rank $j$ vertex $v$.
	 The edges $e_1$, $e_2$, and $e_3$, each assign a charge
	 of $1-2\epsilon$
	 to $y_v$. Therefore, $y_u+y_v\geq 3-6\epsilon \geq 2+\epsilon$.
	 
   \textbf{Case 2.} Suppose $e$ is a bad edge that belongs to a single matching. Since no internal edge can be a bad edge, $i=0$. This implies (Lemma~\ref{internal}) that, there is an edge  $e_1$ from a rank $j-1$ vertex to $v$, which belongs to a single	  matching. Also, there is an edge $e_2$, from $v$ to a higher ranked vertex, which also belongs to a single matching. The edge $e$ assigns a charge of $1$ to $y_v$. If $e_1$ assigns a charge of $1$ (or $1-\epsilon$) to $y_v$, then $e_2$ assigns $\epsilon$ (or $2\epsilon$ respectively) to $y_v$.  In either case, $y_u+y_v=2+\epsilon$.
	  %   The primal charge of $e$, and the remaining primal charge of $e_1$ is transferred to $y_v$. Suppose, $e_1$ transfers a charge $\epsilon$ to a lower ranked vertex. This happens if there was a ``bad'' edge incident on the rank $j-1$ vertex of $e_1$. Then $2\epsilon$ is borrowed from $e_2$ and given to $y_v$. Else, $\epsilon$ is borrowed from $e_2$ and given to $y_v$. Thus, $y_u+y_v=2+\epsilon$. 
  The key fact is  that $e_1$ could not have assigned $2\epsilon$ to a
  lower ranked vertex. Since, then, by Lemma~\ref{internal}, $e$ cannot
  be a bad edge.
  
  \textbf{Case 3.} Suppose $e$ is not a bad edge, and it belongs to a single matching. \\
   \textit{Case 3(a).} $i=0$. There are two sub cases. 
   \begin{itemize}
    \item There is an edge $e_1$ from some rank $j-1$ vertex to $v$
		which belongs to $2$ matchings, or there are two other edges
		$e_2$ and $e_3$ from some lower ranked vertices to $v$,
		each belonging to separate matchings. The edge $e$ assigns
		a charge of $1$ to $y_v$. Either $e_1$ assigns a charge of
		at least $2-3\epsilon$ to $y_v$, or $e_2$ and $e_3$ assign a
		charge of at least $1-2\epsilon$ each, to $y_v$.
		In either case, $y_u+y_v\geq 3-4\epsilon\geq 2+\epsilon$.
    
    \item There is one edge $e_1$, from a rank $j-1$ vertex to $v$,
		which belongs to a single matching, and there is one edge $e_2$,
		from $v$ to a higher ranked vertex, which belongs to $2$
		matchings. 
		The edge $e$ assigns a charge of $1$ to $y_v$.
		If $e_1$ assigns a charge of $1$
		(or $1-\epsilon$ or $1-2\epsilon$) to $y_v$, then $e_2$ 
		assigns $\epsilon$ (or $2\epsilon$ or $3\epsilon$ respectively) to $y_v$. In either case, $y_u+y_v=2+\epsilon$.
\end{itemize}
   \textit{Case 3(b).} $i>0$. There are two sub cases.
    \begin{itemize}
      \item There are at least two edges $e_1$ and $e_2$ from lower ranked
		  vertices to $u$, and one edge $e_3$ from $v$ to a higher ranked
		  vertex. Each of these edges are in one matching only 
		  (not necessarily the same matching).
      \item There is one edge $e_4$ from a vertex of lower rank to $u$, at least one edge $e_5$ from a lower ranked vertex to $v$, and one edge $e_6$ from $v$ to a vertex of higher rank. All these edges belong to a single matching (not necessarily the same).
    \end{itemize}
    The edge $e$ assigns a charge of $1$ among $y_u$ and $y_v$. 
	If $e_1$ and $e_2$ assign a charge of at least $1-2\epsilon$ each,
	to $y_u$, then $y_u+y_v\geq3-4\epsilon\geq 2+\epsilon$. 
	Similarly, if $e_4$ assigns a charge of at least $1-2\epsilon$
	to $y_u$, and $e_5$ assigns a charge of at least $1-2\epsilon$ 
	to $y_v$, then $y_u+y_v\geq3-4\epsilon\geq 2+\epsilon$.
    
  \textbf{Case 4.} Suppose $e$ is a bad edge that belongs to two matchings.
	  Then $i=0$. This implies that there is an edge $e_1$, from $v$ to a 
	  vertex of higher rank which belongs to a single matching.
	  The edge $e$ assigns a charge of $2$ to $y_v$, and the edge $e_1$ 
	  assigns a charge of $\epsilon$ to $y_v$. Thus, $y_u+y_v=2+\epsilon$. 
  
  \textbf{Case 5.} Suppose $e$ is not a bad edge and it belongs to two matchings.
	  This means that  either there is an edge $e_1$ from a lower ranked
	  vertex to $u$, which belongs to at least one matching, 
	  or there is an edge from some lower ranked vertex to $v$
	  that belongs to at least one matching,
	  or there is an edge from $v$ to some higher ranked vertex which
	  belongs to two matchings. The edge $e$ assigns a charge of 
	  $2$ among $y_u$ and $y_v$. The neighboring edges assign
	  a charge of $\epsilon$ to $y_u$ or $y_v$ (depending on which vertex it is incident), to give  $y_u+y_v\geq 2+\epsilon$.
	  
  \textbf{Case 6.} Suppose, $e$ belongs to $3$ or $4$ matchings, then trivially $y_u+y_v\geq 2+\epsilon$.
  From the above conditions, the best value for the competitive ratio is obtained when $\epsilon=\frac{1}{7}$, yielding $\mathbb{E}[y_u+y_v]\geq \frac{15}{28}$.
\end{proof}
Lemma~\ref{tree1} implies that the competitive ratio of the algorithm is at most $\frac{28}{15}$.
\section{Lower Bounds}\label{lb}

\subsection{Lower Bound for MWM}\label{sec_lb_mwm}

In this section, we prove a lower bound on the competitive ratio of a natural class of randomized algorithms in the online preemptive model for MWM. The algorithms in this class, which we call \textit{local} algorithms, have the property that their decision to accept or to reject a new edge is completely determined by the weights of the new edge and the conflicting edges in the matching maintained by the algorithm. Indeed, the optimal deterministic algorithm by McGregor \cite{mcgregor} is a local algorithm. The notion of locality can be extended to randomized algorithms as well. In case of {\em randomized local algorithms}, the event that a new edge is accepted is independent of all such previous events, given the current matching maintained by the algorithm. Furthermore, the probability of this event is completely determined by the weight of the new edge and the conflicting edges in the matching maintained by the algorithm. Given that the optimal $(3+2\sqrt{2})$-competitive deterministic algorithm for MWM is a local algorithm, it is natural to ask whether randomized local algorithms can beat the deterministic lower bound of $(3+2\sqrt{2})$ by Varadaraja \cite{varadaraja}. We answer this question in the negative, and prove the following theorem.

\begin{theorem}\label{thm_local}
No randomized local algorithm for the MWM problem can have a competitive ratio less than $\alpha=3+2\sqrt{2}\approx5.828$.
\end{theorem}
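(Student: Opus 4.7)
The idea is to lift Varadaraja's deterministic lower bound to randomized local algorithms by exploiting the fact that, under locality, each accept/reject decision depends only on the weight of the arriving edge and the weight of its (at most one) conflict in the current matching, with fresh and independent randomness at every step. Against any fixed acceptance function $p(w_{\text{new}}, w_{\text{conf}})$, I would reveal essentially Varadaraja's ``growing'' instance: a path $e_1,e_2,\dots$ in which $e_i$ is incident on the growing endpoint of $e_{i-1}$ and has weight $w_i=\beta^{\,i-1}$ for a parameter $\beta>1$ to be optimised at the end. On such a path every arriving edge conflicts with at most one edge of the algorithm's matching, so by locality and coin-toss independence the index $J_t$ of the algorithm's ``active'' edge after step $t$ is a Markov chain whose transition kernel is determined entirely by $p$ and $\beta$.

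Let $\Phi_t := E[w_{J_t}]$ be the expected weight of the algorithm's matching after step $t$. Because the offline optimum of the revealed prefix is at least a universal constant times $\beta^{t-1}$ (adjoining a terminating gadget if needed, exactly as in Varadaraja's construction), the competitive ratio on the instance cut off at $t^\star$ is at least $c\cdot\beta^{t^\star-1}/\Phi_{t^\star}$. The central claim I would then establish is that for every $p$ there exists a cut-off $t^\star$ at which the normalised potential $\phi_{t^\star}:=\Phi_{t^\star}/\beta^{t^\star-1}$ is at most $c/\alpha$, with $\alpha=3+2\sqrt 2$. Geometric scale-invariance of the weights makes the one-step recurrence for $\phi_t$ identical in form to the deterministic potential inequality underpinning Varadaraja's proof; randomization in $p$ corresponds to a convex combination of deterministic thresholds, and an adaptively chosen $t^\star$ exposes the weakest member of that combination. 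Optimising $\beta$ then forces the familiar $\beta=1+\sqrt 2$ and yields $\alpha=3+2\sqrt 2$.

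I expect the main obstacle to be formalising the convexity/averaging step, i.e.\ ruling out that some distribution over thresholds could simultaneously defend against every possible stopping time. I would attempt this either directly, by taking $t^\star$ to be the first step at which $\phi_t$ drops below the target $1/\alpha$ and verifying its existence by a monotone-potential argument on the normalised recurrence, or indirectly via Yao's principle applied to a distribution over cut-off times, which reduces the problem to lower-bounding the expected ratio of an arbitrary deterministic local algorithm on a random-length growing path --- a setting already covered by Varadaraja's analysis. The remainder of the argument is the routine verification of the recurrence and the one-variable optimisation in $\beta$.
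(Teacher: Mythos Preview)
Your plan has a gap at exactly the point you flagged as the main obstacle, and neither suggested route closes it. A randomized local algorithm re-samples its accept/reject coin \emph{independently at every step}; this is not the same as a convex combination of deterministic threshold algorithms, which would sample one threshold once and reuse it. Consequently the ``convexity'' reduction does not go through. The Yao fallback fails for the same reason: fixing all the random bits of a randomized local algorithm yields a deterministic algorithm whose $t$-th decision depends on $t$ (through the $t$-th coin), not only on the local weights, so it is \emph{not} a deterministic local algorithm, and Varadaraja's bound for that class does not transfer back. Equivalently, Yao over a distribution of cut-off times would have to be taken against \emph{all} deterministic algorithms, where the right answer is $1+\ln 2$, not $3+2\sqrt2$. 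On a single growing path you are therefore left analysing a genuine Markov chain whose transition probabilities the algorithm controls through $f_1$, with no reduction to the deterministic setting and no adaptive stopping rule available to an oblivious adversary.

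The paper's proof supplies the missing idea, which is \emph{width}. Given the algorithm's acceptance function $f_1$, Lemma~\ref{lem_xy} shows that for any small $\delta$ and any current weight $w_1$ there exist nearby weights $x,y$ with $f_1(w_1,x)\ge\delta$ but $f_1(w_1,y)\le\delta$; iterating this builds sequences $(x_i)$ and $(y_i)$ with $y_i\approx x_{i+1}$. The adversarial instance is a layered graph with $m$ vertices per layer: complete bipartite graphs $J_i$ of weight $x_i$ between consecutive layers, and pendant matchings $M_i$ of weight $y_i$. The $m$ parallel copies average out the algorithm's independent coins by a law-of-large-numbers argument (Lemmas~\ref{lem_Y}--\ref{lem_Yn}): with probability $1-(1-\delta)^{\Theta(m)}$ each layer-$i$ vertex gets matched into $J_i$, so the algorithm retains at most $\delta m+O(1/\delta)$ edges from each $M_i$ and $O(1/\delta)$ from each $J_i$ except the last. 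Sending $m\to\infty$, then $\delta,\epsilon\to0$, produces a positive sequence satisfying the hypothesis of Lemma~\ref{lem_varadaraja}, whence $\beta\ge3+2\sqrt2$. The replication across $m$ copies is precisely what substitutes for the adaptive stopping rule that works against deterministic algorithms but is unavailable against independent randomness.
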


Note that the randomized algorithm by Epstein et al. \cite{epstein} does not fall in this category, since the decision of accepting or rejecting a new edge is also dependent on the outcome of the coins tossed at the beginning of the run of the algorithm. (For details, see Section 3 of \cite{epstein}.) In order to prove Theorem \ref{thm_local}, we will crucially use the following lemma, which is a consequence of Section 4 of \cite{varadaraja}.

\begin{lemma}\label{lem_varadaraja}
If there exists an infinite sequence $(x_n)_{n\in\mathbb{N}}$ of positive real numbers such that for all $n$, $\beta x_n\geq\sum_{i=1}^{n+1}x_i+x_{n+1}$, then $\beta\geq3+2\sqrt{2}$.
\end{lemma}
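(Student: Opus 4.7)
The plan is to distill the additive inequality into a one-variable nonlinear iteration on the ratios $t_n := S_n/x_n$, where $S_n := \sum_{i=1}^n x_i$, and then show that if $\beta < 3+2\sqrt{2}$ the iteration map has no fixed point in its natural domain, forcing the sequence to escape the allowed range. First I would rewrite the hypothesis as $2x_{n+1} \leq \beta x_n - S_n$. Positivity of $x_{n+1}$ then gives $t_n < \beta$, and the case $n=1$ yields $(\beta-1)x_1 \geq 2x_2 > 0$, so in particular $\beta > 1$. Combining with $S_{n+1} = S_n + x_{n+1}$ and dividing by the upper bound on $x_{n+1}$, I obtain the scalar recurrence
\[
t_{n+1} \;=\; 1 + \frac{S_n}{x_{n+1}} \;\geq\; 1 + \frac{2S_n}{\beta x_n - S_n} \;=\; 1 + \frac{2 t_n}{\beta - t_n} \;=:\; \phi(t_n).
\]

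Next I would analyze the map $\phi : [0,\beta) \to \mathbb{R}$. A direct manipulation gives $\phi(t) - t = Q(t)/(\beta - t)$ with $Q(t) := t^2 - (\beta - 1)t + \beta$, and $\phi'(t) = 2\beta/(\beta - t)^2 > 0$, so $\phi$ is strictly increasing on $[0,\beta)$ and blows up at the right endpoint. The discriminant of $Q$ equals $\beta^2 - 6\beta + 1$, whose roots are $\beta = 3 \pm 2\sqrt{2}$. Now assume for contradiction that $\beta < 3 + 2\sqrt{2}$; since $\beta > 1 > 3 - 2\sqrt{2}$, the discriminant is strictly negative, hence $Q(t) > 0$ for every real $t$, and therefore $\phi(t) > t$ on all of $[0,\beta)$.

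Starting from $t_1 = 1 < \beta$, a trivial induction shows $t_n \in [1,\beta)$ for every $n$ and $t_{n+1} \geq \phi(t_n) > t_n$, so $(t_n)$ is strictly increasing and bounded above by $\beta$; let $t^* := \lim_n t_n \in (1, \beta]$. If $t^* < \beta$, passing to the limit in $t_{n+1} \geq \phi(t_n)$ and using continuity of $\phi$ at $t^*$ gives $t^* \geq \phi(t^*) > t^*$, absurd. Hence $t^* = \beta$, but then $\phi(t_n) \to +\infty$ while $t_{n+1} \leq \beta$, again a contradiction. Either way we reach a contradiction, so $\beta \geq 3 + 2\sqrt{2}$. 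The only delicate point I expect is tracking the inequality directions carefully through the division by $x_{n+1}$ (and verifying that $t_n$ genuinely remains in the domain $[0,\beta)$); once the scalar recurrence $t_{n+1} \geq \phi(t_n)$ is established, the rest is the elementary observation that a strictly monotone iteration without a fixed point cannot remain in a bounded interval.
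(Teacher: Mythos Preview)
Your argument is correct. Note, however, that the paper does not actually prove this lemma: it is stated as ``a consequence of Section~4 of~\cite{varadaraja}'' and used as a black box. So there is no in-paper proof to compare against; your write-up supplies a self-contained proof where the paper only gives a citation.

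As for the argument itself, the reduction to the scalar iteration $t_{n+1}\ge\phi(t_n)$ with $\phi(t)=1+2t/(\beta-t)$ is clean, and the fixed-point analysis via the discriminant of $Q(t)=t^2-(\beta-1)t+\beta$ is exactly the right way to see where $3+2\sqrt{2}$ comes from. The two endgame cases ($t^*<\beta$ and $t^*=\beta$) are handled correctly: in the first, continuity of $\phi$ at $t^*$ forces a fixed point that cannot exist; in the second, $\phi(t_n)\to\infty$ contradicts $\phi(t_n)\le t_{n+1}<\beta$. The one place worth tightening in exposition is the sentence ``a trivial induction shows $t_n\in[1,\beta)$'': the upper bound $t_n<\beta$ is not inductive at all---it follows directly from $2x_{n+1}\le\beta x_n-S_n$ and $x_{n+1}>0$---and only the lower bound uses the monotonicity $t_{n+1}>t_n$. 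This is cosmetic; the logic is sound.
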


\subsubsection{Characterization of local randomized algorithms}

Suppose, for a contradiction, that there exists a randomized local algorithm $\mathcal{A}$ with a competitive ratio $\beta<\alpha=3+2\sqrt{2}$, $\beta\geq1$. Define the constant $\gamma$ to be
\[\gamma=\frac{\beta\left(1-\frac{1}{\alpha}\right)}{\left(1-\frac{\beta}{\alpha}\right)}=\frac{\beta(\alpha-1)}{\alpha-\beta}\geq1>\frac{1}{\alpha}\]
% The behavior of $\mathcal{A}$ is completely described using three functions: $f_0:\mathbb{R}^+\rightarrow[0,1]$, $f_1: \mathbb{R}^+ \times \mathbb{R}^+ \rightarrow [0,1]$ and $f_2: \mathbb{R}^+ \times \mathbb{R}^+ \times \mathbb{R}^+ \rightarrow [0,1]$, which we describe below. 
For $i=0,1,2$, if $w$ is the weight of a new edge and it has $i$ conflicting edges, in the current matching, of weights $w_1,\ldots,w_i$, then $f_i(w_1,\ldots,w_i,w)$ gives the probability of switching to the new edge. The behavior of $\mathcal{A}$ is completely described by these three functions. We need the following key lemma to state our construction of the adversarial input.

The lemma states (informally) that given an edge of weight $w_1$, there exists weights $x$ and $y$, close to each other such that if an edge of weight $x$ (respective $y$) is adjacent to an edge of weight $w_1$, the probability of switching is at most (respectively at least) $\delta$.

\begin{lemma}\label{lem_xy}
For every $\delta\in(0,1/\alpha)$, $\epsilon>0$, and $w_1$, there exist $x$ and $y$ such that $f_1(w_1,x)\geq\delta$, $f_1(w_1,y)\leq\delta$, $x-y\leq\epsilon$, and $w_1/\alpha\leq y\leq x\leq\gamma w_1$.
\end{lemma}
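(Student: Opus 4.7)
The plan is to locate a point $y_0 \in A := \{w \in [w_1/\alpha, \gamma w_1] : f_1(w_1, w) \leq \delta\}$ and a point $x_0 \in B := \{w \in [w_1/\alpha, \gamma w_1] : f_1(w_1, w) \geq \delta\}$ with $y_0 \leq x_0$, and then to bisect $[y_0, x_0]$ to squeeze the gap below $\epsilon$. For $x_0$, I would use the two-edge instance: $e_1$ of weight $w_1$ followed by an adjacent edge $e_2$ of weight $\gamma w_1$. Writing $p = f_1(w_1, \gamma w_1)$, we have $\opt = \gamma w_1$ and expected $\alg$-weight $w_1(1 + p(\gamma - 1))$, so competitive ratio at most $\beta$ forces $p \geq (\gamma - \beta)/(\beta(\gamma - 1))$, which simplifies via $\gamma = \beta(\alpha - 1)/(\alpha - \beta)$ to exactly $1/\alpha > \delta$. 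So $x_0 := \gamma w_1 \in B$.

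For $y_0$, the key step, I would use a growing star centered at a vertex $v$: a backbone edge $e_0 = (u, v)$ of weight $w_1$ is revealed first, then arbitrarily many spokes $e_i = (v, v_i)$ of weight $w_1/\alpha$ with fresh leaves $v_i$. Since every edge is incident to $v$, the maximum-weight matching is $\{e_0\}$, so $\opt = w_1$. Let $p = f_1(w_1, w_1/\alpha)$. Once the algorithm abandons $e_0$ it cannot return, so $\Pr[\text{algorithm still holds } e_0 \text{ after } k \text{ spokes}] = (1 - p)^k$, and consequently the expected $\alg$-weight is $(1-p)^k w_1 + (1 - (1-p)^k)\cdot w_1/\alpha$, giving a competitive ratio of $\alpha / ((1-p)^k(\alpha - 1) + 1)$. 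If $p > 0$ this tends to $\alpha > \beta$ as $k \to \infty$, contradicting $\beta$-competitiveness for $k$ large enough. Hence $p = 0 \leq \delta$, placing $y_0 := w_1/\alpha$ in $A$. Nicely, $f_1(w_1/\alpha, w_1/\alpha)$ does not enter the analysis: once the backbone is gone, the algorithm's weight is $w_1/\alpha$ regardless of which spoke it holds.

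Finally, I would bisect: starting from $(y, x) = (w_1/\alpha, \gamma w_1)$, at each step let $m := (y + x)/2$ and update $y := m$ if $f_1(w_1, m) \leq \delta$, else $x := m$. This preserves the invariants $y \in A$, $x \in B$, $y \leq x$ while halving $x - y$, so after $O(\log((\gamma w_1 - w_1/\alpha)/\epsilon))$ rounds $x - y \leq \epsilon$ and the resulting pair fulfils the lemma. The main obstacle, I expect, is the star step: a single-edge adversary at weight $w_1/\alpha$ only gives $p \leq (\beta - 1)\alpha/(\beta(\alpha - 1))$, which can be close to $1$ and hence far above $\delta$ when $\beta$ is near $\alpha$. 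The star amplifies any positive bias $p$ over many rounds into an exponentially small survival probability of the heavy edge, forcing $p$ all the way to $0$.
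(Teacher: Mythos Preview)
Your overall architecture matches the paper's: pin down the endpoints $y_0=w_1/\alpha$ and $x_0=\gamma w_1$, then discretize. Your star argument for $y_0$ is essentially the paper's Lemma~\ref{lem_f1_lb}, and your bisection is an equivalent alternative to the paper's $\epsilon$-grid. But the two-edge argument for $x_0$ has a genuine gap.

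You write $E[\alg]=w_1(1+p(\gamma-1))$ where $p=f_1(w_1,\gamma w_1)$. This formula silently assumes the algorithm always accepts the first edge, i.e.\ $f_0(w_1)=1$. Nothing in the setup forces that; the only constraint from a single-edge input is $f_0(w_1)\geq1/\beta$. The true expectation is
\[
E[\alg]=f_0(w_1)\bigl[(1-p)w_1+p\gamma w_1\bigr]+(1-f_0(w_1))\,f_0(\gamma w_1)\,\gamma w_1,
\]
and if $f_0(\gamma w_1)$ is close to $1$ while $f_0(w_1)$ is moderate (say $\leq(\alpha-1)/\alpha$), then $p=0$ already satisfies $E[\alg]\geq\gamma w_1/\beta$. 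Concretely, with $p=0$ and $f_0(\gamma w_1)=1$ the constraint becomes $f_0(w_1)\leq(\alpha-1)/\alpha$, which is compatible with $f_0(w_1)\geq1/\beta$. So the two-edge instance does not force $p\geq\delta$, let alone $p\geq1/\alpha$.

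The paper's fix (Lemma~\ref{lem_f1_ub}) is exactly the amplification trick you used on the other side: present a star with many edges of weight $w_1$ first, so that the algorithm holds a $w_1$-edge with probability tending to $1$, and only then reveal the heavy edge of weight $\gamma w_1$. That neutralises the $f_0$ uncertainty and yields $f_1(w_1,\gamma w_1)\geq1/\alpha>\delta$ cleanly. (Incidentally, the same $f_0(w_1)$ factor is missing from your survival probability $(1-p)^k$ in the $y_0$ step; there it is harmless because you only need an upper bound on $E[\alg]$, and the missing factor only helps.)
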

The proof of this lemma can be found in Appendix (section~\ref{sec_lemmas}).

\subsubsection{The adversarial input}

The adversarial input is parameterized by four parameters: $\delta\in(0,1/\alpha)$, $\epsilon>0$, $m$, and $n$, where $m$ and $n$ determine the graph and $\delta$ and $\epsilon$ determine the weights of its edges.

Define the infinite sequences $(x_i)_{i\in\mathbb{N}}$ and $(y_i)_{i\in\mathbb{N}}$, as functions of $\epsilon$ and $\delta$, as follows. $x_1=1$, and for all $i$, having defined $x_i$, let $x_{i+1}$ and $y_i$ be such that $f_1(x_i,x_{i+1})\geq\delta$, $f_1(x_i,y_i)\leq\delta$, $x_{i+1}-y_i\leq\epsilon$, and $x_i/\alpha\leq y_i\leq x_{i+1}\leq\gamma x_i$. Lemma \ref{lem_xy} ensures that such $x_{i+1}$ and $y_i$ exist. Furthermore, by induction on $i$, it is easy to see that for all $i$,
\begin{equation}\label{eqn_bounds}
1/\alpha^i\leq y_i\leq x_{i+1}\leq\gamma^i
\end{equation}
These sequences will be the weights of the edges in the input graph.

Given $m$ and $n$, the input graph contains several layers of vertices, namely $A_1,A_2,\dots,A_{n+1},A_{n+2}$ and $B_1,B_2,\dots,B_{n+1}$; each layer containing $m$ vertices. The vertices in the layer $A_i$ are named $a^i_1,a^i_2,\ldots,a^i_m$, and those in layer $B_i$ are named analogously. We have a complete bipartite graph $J_i$ between layer $A_i$ and $A_{i+1}$ and an edge between $a^i_j$ and $b^i_j$ for every $i$, $j$ (that is, a matching $M_i$ between $A_i$ and $B_i$).

For $i=1$ to $n$, the edges $\{(a^i_j,a^{i+1}_{j'})|1\leq j,j'\leq m\}$, in the complete bipartite graph between $A_i$ and $A_{i+1}$, have weight $x_i$, and the edges $\{(a^i_j,b^i_j)|1\leq j\leq m\}$, in the matching between $A_i$ and $B_i$, have weight $y_i$. The edges in the complete graph $J_{n+1}$ have weight $x_n$, and those in the matching $M_{n+1}$ have weight $y_n$. Note that weights $x_i$ and $y_i$ depend on $\epsilon$ and $\delta$, but are independent of $m$ and $n$. Clearly, the weight of the maximum weight matching in this graph is bounded from below by the weight of the matching $\bigcup_{i=1}^{n+1}M_i$. Since $y_i\geq x_{i+1}-\varepsilon$, we have
\begin{equation}\label{eqn_opt_lb}
\opt\geq m\left(\sum_{i=1}^ny_i+y_n\right)\geq m\left(\sum_{i=2}^{n+1}x_i+x_{n+1}-(n+1)\epsilon\right)
\end{equation}

The edges of the graph are revealed in $n+1$ phases. In the $i^{\text{\tiny{th}}}$ phase, the edges in $J_i\cup M_i$ are revealed as follows. The phase is divided into $m$ sub phases. In the $j^{\text{\tiny{th}}}$ sub phase of the $i^{\text{\tiny{th}}}$ phase, edges incident on $a^i_j$ are revealed, in the order $(a^i_j,a^{i+1}_1),(a^i_j,a^{i+1}_2),\ldots,(a^i_j,a^{i+1}_m),$ $(a^i_j,b^i_j)$. %Finally, in the $n+1^{\text{\tiny{st}}}$ phase, the edges of $M_{n+1}$ are revealed in the order $(a^{n+1}_1,b^{n+1}_1),(a^{n+1}_2,b^{n+1}_2),\ldots,(a^{n+1}_m,b^{n+1}_m)$.

\subsubsection{Analysis of the lower bound}

The overall idea of bounding the weight of the algorithm's matching is as follows. In each phase $i$, we will prove that as many as $m-O(1)$ edges of $J_i$ and only $\delta m+O(1)$ edges of $M_i$ are picked by the algorithm. 
%Since the probability of switching from an edge in $K_i$ to an edge in $M_i$ is at most $\delta$, the algorithm picks at most $\delta m+O(1)$ edges out of the $m$ edges of $M_i$. 
Furthermore, in the ${i+1}^{\text{\tiny{th}}}$ phase, since $m-O(1)$ edges from $J_{i+1}$ are picked, all but $O(1)$ edges of the edges picked from $J_i$ are discarded. Thus, the algorithm ends up with $\delta m+O(1)$ edges from each $M_i$, and $O(1)$ edges from each $J_i$, except possibly $J_n$ and $J_{n+1}$. The algorithm can end up with at most $m$ edges from $J_n\cup J_{n+1}$, since the size of the maximum matching in $J_n\cup J_{n+1}$ is $m$. Thus, the weight of the algorithm's matching is at most $mx_n$ plus a quantity that can be neglected for large $m$ and small $\delta$.

Let $X_i$ (resp. $Y_i$) be the set of edges of $J_i$ (resp. $M_i$) held by the algorithm 
%immediately after the $i^{\text{\tiny{th}}}$ phase is over. Let $X'_i$ be the set of edges of $K_i$ held by the algorithm 
at the end of input. Then we have,

\begin{lemma}\label{lem_Y}
For all $i=1$ to $n$
\[E[|Y_i|]\leq\delta m+\frac{1-\delta}{\delta}\]
\end{lemma}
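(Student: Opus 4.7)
The plan is to bound $\Pr[(a^i_j,b^i_j)\in Y_i]$ separately for each $j$ and then sum. Fix $i\in\{1,\ldots,n\}$ throughout.

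First, observe that the fate of $(a^i_j,b^i_j)$ is sealed the moment it is revealed at the end of sub-phase $j$ of phase $i$: no later edge is adjacent to it, because $b^i_j$ has no other incident edges and every edge at $a^i_j$ lies in $J_{i-1}\cup J_i$, which has already been processed. Let $W_j$ be the weight of the edge in the algorithm's matching incident on $a^i_j$ just before $(a^i_j,b^i_j)$ arrives, with $W_j=0$ if $a^i_j$ is unmatched. The only edges at $a^i_j$ revealed so far come from $J_{i-1}$ or $J_i$, so $W_j\in\{0,x_{i-1},x_i\}$ (with $W_j\in\{0,x_1\}$ when $i=1$). The probability of acceptance is then $f_0(y_i)$, $f_1(x_{i-1},y_i)$, or $f_1(x_i,y_i)$ accordingly; using $f_1(x_i,y_i)\leq\delta$ by construction of the sequences and bounding the other two probabilities by $1$, I obtain
\[
\Pr[(a^i_j,b^i_j)\in Y_i]\;\leq\;\delta\Pr[W_j=x_i]+\Pr[W_j\neq x_i]\;\leq\;\delta+\Pr[W_j\neq x_i].
\]
Summing over $j$ reduces the task to showing $\sum_{j=1}^m\Pr[W_j\neq x_i]\leq (1-\delta)/\delta$.

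The central step is to prove $\Pr[W_j\neq x_i]\leq (1-\delta)^{m-j+1}$. I will first argue that once any $x_i$ edge incident on $a^i_j$ is accepted in sub-phase $j$, $a^i_j$ stays matched via some $x_i$ edge for the remainder of the sub-phase (subsequent accepts only swap one $x_i$ edge for another), so $W_j\neq x_i$ requires rejection of all $m$ edges $(a^i_j,a^{i+1}_k)$ revealed in sub-phase $j$. At the start of sub-phase $j$, at most $j-1$ of the vertices $a^{i+1}_1,\ldots,a^{i+1}_m$ are matched, since the only edges that can touch them are $J_i$ edges accepted in earlier sub-phases of phase $i$, each of which contributes at most one. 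Moreover, within sub-phase $j$ the edges touch pairwise distinct $a^{i+1}_k$'s, so processing one does not disturb the state of another. Conditioning on the rejection of the first $k-1$ edges of sub-phase $j$, the state of $a^i_j$ is unchanged from the start of the sub-phase and the state of $a^{i+1}_k$ remains its start-of-sub-phase state. For each of the at least $m-j+1$ indices $k$ with $a^{i+1}_k$ initially unmatched, the new edge has at most one conflict (at $a^i_j$), and its acceptance probability is $f_0(x_i)$ or $f_1(x_{i-1},x_i)$; both are at least $\delta$, the second by construction of the sequence $(x_i)$ and the first because applying the $\beta$-competitiveness hypothesis to a single-edge input forces $f_0(w)\geq 1/\beta>1/\alpha>\delta$. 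Multiplying these conditional rejection probabilities, the at least $m-j+1$ ``easy'' edges each contribute a factor of at most $1-\delta$ while the rest contribute at most $1$, yielding $\Pr[W_j\neq x_i]\leq (1-\delta)^{m-j+1}$.

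Finally, the geometric sum $\sum_{j=1}^m(1-\delta)^{m-j+1}=\sum_{k=1}^m(1-\delta)^k\leq (1-\delta)/\delta$ closes the argument. The main obstacle will be the combinatorial bound $\Pr[W_j\neq x_i]\leq (1-\delta)^{m-j+1}$: it needs both the counting observation that at most $j-1$ of the $a^{i+1}_k$'s can already be matched at the start of sub-phase $j$, and the small but essential ingredient that a $\beta$-competitive algorithm cannot have $f_0(w)$ too small, which prevents an adversarial local algorithm from trivially defeating the bound by refusing free edges.
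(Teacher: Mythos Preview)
Your proof is correct and follows essentially the same approach as the paper: you establish that $(a^i_j,b^i_j)$ is never preempted once revealed, bound its acceptance probability by $\delta+\Pr[W_j\neq x_i]$, and prove $\Pr[W_j\neq x_i]\leq(1-\delta)^{m-j+1}$ via the same counting argument on unmatched vertices in $A_{i+1}$ combined with the lower bounds $f_1(x_{i-1},x_i)\geq\delta$ and $f_0(x_i)\geq 1/\beta>\delta$. The paper packages the bound on $\Pr[W_j\neq x_i]$ as a separate lemma (Lemma~\ref{lem_preempt}) and the bound $f_0(w)>1/\alpha$ as Lemma~\ref{lem_f0}, whereas you have inlined both; otherwise the arguments coincide.
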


\begin{lemma}\label{lem_X}
For all $i=1$ to $n-1$
\[E[|X_i|]\leq\frac{1-\delta}{\delta}\]
\end{lemma}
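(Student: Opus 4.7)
The plan is to decompose $|X_i|$ vertex-by-vertex and argue sub-phase by sub-phase. Write $|X_i|=\sum_{j=1}^m W_j$, where $W_j$ is the indicator of the event that $a^{i+1}_j$ is matched to a $J_i$ edge at the end of the input. The goal is then to show $\Pr[W_j]\leq(1-\delta)^{m-j+1}$, from which the geometric sum $\sum_{\ell=1}^m(1-\delta)^\ell\leq(1-\delta)/\delta$ delivers the lemma.

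To bound $\Pr[W_j]$, I would first localize in time: after sub-phase $j$ of phase $i+1$, no revealed edge is incident on $a^{i+1}_j$, and a discarded edge never returns, so $W_j$ forces the algorithm to reject every one of the $m$ arriving edges $(a^{i+1}_j,a^{i+2}_k)$ of sub-phase $j$ while a $J_i$ edge of weight $x_i$ sits at $a^{i+1}_j$ throughout. I would then make the key combinatorial observation: at the start of sub-phase $j$ of phase $i+1$, at most $j-1$ vertices of $A_{i+2}$ can be matched, because only the preceding $j-1$ sub-phases have ever touched $A_{i+2}$, and each of them concerns a single $a^{i+1}$-vertex and so leaves at most one new $a^{i+2}$-vertex matched. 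Consequently, at least $m-j+1$ of the $m$ arrivals in sub-phase $j$ hit a currently-unmatched $A_{i+2}$-endpoint.

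Next, as long as the algorithm has not yet switched in the current sub-phase, the matching has not moved, so each such arrival is an instance of the single-conflict case with incumbent weight $x_i$ and new weight $x_{i+1}$. By the construction of the sequence, $f_1(x_i,x_{i+1})\geq\delta$, so the non-switch probability at such an arrival is at most $1-\delta$. A chain-rule expansion, conditioning successively on previous non-switch decisions (which preserve the single-conflict structure on subsequent unmatched-$A_{i+2}$ edges), multiplies the $\geq m-j+1$ bounds together to yield $\Pr[W_j]\leq(1-\delta)^{m-j+1}$, and summing finishes the proof.

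The main obstacle, as I see it, is the combinatorial step bounding the number of already-matched $A_{i+2}$-vertices by $j-1$: this is what turns a trivial $(1-\delta)^1$ per-vertex bound (which would only sum to $m(1-\delta)$) into a geometrically decreasing series that sums to the required constant. The chain-rule step itself is routine, once one observes that a sequence of non-switch decisions does not alter the matching state, so the single-conflict configuration at each unmatched $a^{i+2}_k$ is genuinely preserved throughout sub-phase $j$; the locality assumption then applies the same bound $1-\delta$ at each such arrival.
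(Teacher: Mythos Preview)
Your proposal is correct and follows essentially the same route as the paper: decompose $|X_i|$ over the vertices $a^{i+1}_j$, observe that $a^{i+1}_j$ can remain matched into $J_i$ only if it fails to switch to any $J_{i+1}$ edge during sub-phase $j$ of phase $i+1$, and bound this failure probability by $(1-\delta)^{m-j+1}$ via the observation that at least $m-j+1$ of the $A_{i+2}$ endpoints are still unmatched. The paper packages this last step as a separate lemma (Lemma~\ref{lem_preempt}), which you have reproved inline; the combinatorial count and the chain-rule argument are identical.
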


\begin{lemma}\label{lem_Yn}
\[E[|Y_{n+1}|]\leq\delta m+\frac{1-\delta}{\delta}\]
\end{lemma}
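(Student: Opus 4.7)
The argument parallels that of Lemma~\ref{lem_Y}, and is slightly cleaner because in phase $n+1$ both $J_n$ and $J_{n+1}$ carry weight $x_n$, so only one conflicting weight class can sit on $a^{n+1}_j$ when the $M_{n+1}$-edge arrives. First, I would observe that $(a^{n+1}_j, b^{n+1}_j) \in Y_{n+1}$ iff it is accepted on arrival: $b^{n+1}_j$ is touched by no other edge in the input, and every edge revealed after this one lies in some sub phase $j' > j$, which touches neither $a^{n+1}_j$ nor $b^{n+1}_j$.

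Next, let $\tau_j$ denote the moment just before $(a^{n+1}_j, b^{n+1}_j)$ is revealed, and let $F_j$ be the event that $a^{n+1}_j$ is matched at $\tau_j$. Since accepting a new edge at a matched vertex only replaces its partner (it never leaves the vertex unmatched), once $a^{n+1}_j$ becomes matched---either during phase $n$ or during the $J_{n+1}$-edges of sub phase $j$---it stays matched through $\tau_j$, and its partner necessarily has weight $x_n$. By the choice of $y_n$ in Lemma~\ref{lem_xy} one has $f_1(x_n, y_n) \leq \delta$, so
\[
E[|Y_{n+1}|] \leq \sum_{j=1}^m \bigl(\delta\, P[F_j] + P[\bar F_j]\bigr) = \delta m + (1-\delta) \sum_{j=1}^m P[\bar F_j],
\]
which reduces the lemma to the single estimate $\sum_{j=1}^m P[\bar F_j] \leq 1/\delta$.

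For this estimate I would replay the amortisation that handles the analogous step inside Lemma~\ref{lem_Y}. The event $\bar F_j$ forces (i) $a^{n+1}_j$ to be unmatched at the start of sub phase $j$, and (ii) none of the $m$ weight-$x_n$ edges $(a^{n+1}_j, a^{n+2}_{j'})$ revealed during that sub phase to be accepted. An induction on $j$, tracking which $a^{n+2}_{j'}$ are still available as unmatched partners, yields geometric decay with ratio $1-\delta$ across successive failure sub phases, summing to at most $1/\delta$. Substituting back gives $E[|Y_{n+1}|] \leq \delta m + (1-\delta)/\delta$, as required.

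The main obstacle is precisely that last estimate. The per-edge acceptance probabilities inside a sub phase depend on the unpredictable matching history of the corresponding $A_{n+2}$-partners, but the correlations across sub phases turn out to be favourable: each failure leaves the $A_{n+2}$-side in a state with more unmatched vertices, which makes success more likely in the next sub phase, and this is what drives the geometric bound. Making this tracking precise is the technical core, and it is exactly what is carried out for Lemma~\ref{lem_Y}; in the present setting the two weight classes $x_{i-1}, x_i$ that appear there collapse to the single class $x_n$, so one case of that proof is removed rather than added.
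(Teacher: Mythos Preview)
Your plan is correct and follows the paper's structure: $Y_{n+1}$ is exactly the set of $M_{n+1}$-edges accepted on arrival, any conflicting edge at $a^{n+1}_j$ has weight $x_n$, so $f_1(x_n,y_n)\le\delta$ controls the matched case, and it remains to bound $\sum_j P[\bar F_j]$.

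The paper handles this last step more simply than you anticipate, via a direct per-$j$ bound stated separately as Lemma~\ref{lem_preempt1} (the phase-$(n{+}1)$ analogue of Lemma~\ref{lem_preempt}): at the start of sub phase $j$ of phase $n{+}1$, at most $j-1$ vertices of $A_{n+2}$ can have been matched, since each earlier sub phase matches at most one; hence at least $m-j+1$ are free. If $a^{n+1}_j$ is still unmatched when the edge to a free $a^{n+2}_{j'}$ arrives, that edge has no conflicts and is accepted with probability $f_0(x_n)>1/\alpha>\delta$. This gives $P[\bar F_j]\le(1-\delta)^{m-j+1}$ outright, and summing over $j$ yields $(1-\delta)/\delta$. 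No induction tracking correlations across sub phases is needed, neither here nor in the paper's proof of Lemma~\ref{lem_Y}; the ``favourable correlations'' obstacle you describe does not arise once one uses the deterministic count of free vertices in $A_{n+2}$.
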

The proof of the above lemmas can be found in Appendix (section~\ref{sec_lemmas}).

We are now ready to prove Theorem \ref{thm_local}. The expected weight of the  matching held by $\mathcal{A}$ is
\[E[\alg]\leq\sum_{i=1}^ny_iE[|Y_i|]+y_nE[|Y_{n+1}|]+\sum_{i=1}^{n-1}x_iE[|X_i|]+x_nE[|X_n\cup X_{n+1}|]\]
Using Lemmas \ref{lem_Y}, \ref{lem_Yn}, \ref{lem_X}, and the facts that $y_i\leq x_{i+1}$ for all $i$ and $E[|X_n\cup X_{n+1}|]\leq m$ (since $X_n\cup X_{n+1}$ is a matching in $J_n\cup J_{n+1}$), we have
\[E[\alg]\leq\left(\delta m+\frac{1-\delta}{\delta}\right)\left(\sum_{i=2}^{n+1}x_i+x_{n+1}\right)+\frac{1-\delta}{\delta}\sum_{i=1}^{n-1}x_i+mx_n\]
Since the algorithm is $\beta$-competitive, for all $n$, $m$, $\delta$ and $\epsilon$ we must have $E[\alg]$ $\geq \opt/\beta$. From the above and equation (\ref{eqn_opt_lb}), we must have
\begin{center}
\begin{tabular}{ccc}
$\left(\delta m+\frac{1-\delta}{\delta}\right)\left(\sum_{i=2}^{n+1}x_i+x_{n+1}\right)$ & \multirow{2}{*}{$\geq$}& \multirow{2}{*}{$\frac{m}{\beta}\left(\sum_{i=2}^{n+1}x_i+x_{n+1}-(n+1)\epsilon\right)$}\\
$+\frac{1-\delta}{\delta}\sum_{i=1}^{n-1}x_i+mx_n$
\end{tabular}
 \end{center}
Since the above holds for arbitrarily large $m$, ignoring the terms independent of $m$ (recall that $x_i$'s are functions of $\epsilon$ and $\delta$ only), we have for all $\delta$ and $\epsilon$,
\[\delta\left(\sum_{i=2}^{n+1}x_i+x_{n+1}\right)+x_n\geq\frac{1}{\beta}\left(\sum_{i=2}^{n+1}x_i+x_{n+1}-(n+1)\epsilon\right)\]
that is,
\[x_n\geq\frac{1}{\beta}\left(\sum_{i=2}^{n+1}x_i+x_{n+1}-(n+1)\epsilon\right)-\delta\left(\sum_{i=2}^{n+1}x_i+x_{n+1}\right)\]
Taking limit inferior as $\delta\rightarrow0$ in the above inequality, and noting that limit inferior is super-additive we get for all $\epsilon$,
\begin{center}
 \begin{tabular}{ccl}
  \multirow{2}{*}{$\liminf_{\delta\rightarrow0}x_n$} & \multirow{2}{*}{$\geq$} &$\frac{1}{\beta}\left(\sum_{i=2}^{n+1}\liminf_{\delta\rightarrow0}x_i+\liminf_{\delta\rightarrow0}x_{n+1}-(n+1)\epsilon\right)$\\
  &&$-\limsup_{\delta\rightarrow0}\delta\left(\sum_{i=2}^{n+1}x_i+x_{n+1}\right)$
 \end{tabular}

\end{center}
\begin{comment}
\[\liminf_{\delta\rightarrow0}x_n\geq\frac{1}{\beta}\left(\sum_{i=2}^{n+1}\liminf_{\delta\rightarrow0}x_i+\liminf_{\delta\rightarrow0}x_{n+1}-(n+1)\epsilon\right)-\limsup_{\delta\rightarrow0}\delta\left(\sum_{i=2}^{n+1}x_i+x_{n+1}\right)\]
\end{comment}
Recall that $x_i$'s are functions of $\epsilon$ and $\delta$, and that from equation (\ref{eqn_bounds}), $1/\alpha^i\leq x_{i+1}\leq \gamma^i$, where the bounds are independent of $\delta$. Thus, all the limits in the above inequality exist. Moreover, $\lim_{\delta\rightarrow0}\delta\left(\sum_{i=2}^{n+1}x_i+x_{n+1}\right)$ exists and is $0$, for all $\epsilon$. This implies $\limsup_{\delta\rightarrow0}\delta\left(\sum_{i=2}^{n+1}x_i+x_{n+1}\right)=0$ and we get for all $\varepsilon$,
\[\liminf_{\delta\rightarrow0}x_n\geq\frac{1}{\beta}\left(\sum_{i=2}^{n+1}\liminf_{\delta\rightarrow0}x_i+\liminf_{\delta\rightarrow0}x_{n+1}-(n+1)\epsilon\right)\]
Again, taking limit inferior as $\epsilon\rightarrow0$, and using super-additivity,
\[\liminf_{\epsilon\rightarrow0}\liminf_{\delta\rightarrow0}x_n\geq\frac{1}{\beta}\left(\sum_{i=2}^{n+1}\liminf_{\epsilon\rightarrow0}\liminf_{\delta\rightarrow0}x_i+\liminf_{\epsilon\rightarrow0}\liminf_{\delta\rightarrow0}x_{n+1}\right)\]
Note that the above holds for all $n$. Finally, let $\overline{x_n}=\liminf_{\epsilon\rightarrow0}\liminf_{\delta\rightarrow0}x_{n+1}$. Then we have the infinite sequence $(\overline{x_n})_{n\in\mathbb{N}}$ such that for all $n$, $\beta\overline{x_n}\geq\sum_{i=1}^{n+1}\overline{x_i}+\overline{x_{n+1}}$. Thus, by Lemma \ref{lem_varadaraja}, we have $\beta\geq3+2\sqrt{2}$.

\subsection{Lower Bound for $\theta$ structured graphs}
Recall that an edge weighted graph is said to be $\theta$-structured if the weights of the edges are powers of $\theta$. The following bound applies to any deterministic algorithm for MWM on $\theta$-structured graphs.
\begin{theorem}\label{thm_theta}
No deterministic algorithm  can have a competitive ratio less than $2+\frac{2}{\theta-2}$ for MWM on $\theta$-structured graphs, for $\theta\geq 4$.
\end{theorem}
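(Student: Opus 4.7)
The plan is to exhibit, for every deterministic online preemptive algorithm $\mathcal{A}$, a $\theta$-structured adversarial input on which the competitive ratio of $\mathcal{A}$ is at least $2 + \tfrac{2}{\theta-2}$. The construction should mirror Varadaraja's lower bound of $3+2\sqrt{2}$ for unrestricted weights \cite{varadaraja}, but carried out under the restriction that every edge weight be a power of $\theta$. Since this discretization shrinks the adversary's weight palette, the achievable bound degrades from $3+2\sqrt{2}$ to $2 + \tfrac{2}{\theta-2}$, and the analysis must track this degradation quantitatively.

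First, I would set up an adaptive ``growing-path'' construction: the adversary reveals edges $e_1, e_2, \ldots$ on a path $v_0, v_1, v_2, \ldots$, where $e_i = (v_{i-1}, v_i)$ has weight $\theta^{c_i}$ for an exponent $c_i \in \mathbb{Z}_{\geq 0}$ chosen based on the state of $\mathcal{A}$'s matching after step $i-1$. As long as $\mathcal{A}$ keeps switching to the new, heavier edge, the adversary escalates $c_i$ by one; as soon as $\mathcal{A}$ rejects the heavy edge (or after sufficiently many rounds), the adversary halts and may add a small number of pendant edges so that $\opt$ on the final graph is an alternating matching covering nearly every vertex of the path revealed so far. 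The decision tree of $\mathcal{A}$ determines which branch the adversary takes at each step.

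Next, I would bound $\alg$ and $\opt$ in each branch. In the ``keeps rejecting'' branch, $\mathcal{A}$'s matching holds few but moderately heavy edges, while $\opt$ can collect an alternating matching roughly twice as large, yielding the factor of $2$ in the target bound. In the ``keeps switching'' branch, the edges $\mathcal{A}$ successively displaced sum, via a geometric series of ratio $1/\theta$, to $\tfrac{1}{\theta-1}$ per endpoint of each OPT edge; a tight amortized accounting that uses the discreteness of the $\theta$-structured switching threshold yields the additional $\tfrac{2}{\theta-2}$ term rather than the cruder $\tfrac{2}{\theta-1}$ obtained from naive displacement bookkeeping.

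The main obstacle, I expect, is to unify these two branches into a single amortized argument that works against any $\mathcal{A}$: at each step $n$, one must maintain the invariant that either the ratio $\opt_n/\alg_n$ already witnesses the claimed bound, or the adversary can safely extend the instance by one more level without losing the invariant. When the invariant can no longer be extended, the adversary terminates, and the resulting graph certifies the bound. The hypothesis $\theta \geq 4$ enters precisely because $\tfrac{2}{\theta-2} \leq 1$ in that range, which ensures that the MCM-type factor of $2$ dominates the geometric-series contribution and that the claimed bound, rather than a weaker one, is the tight one.
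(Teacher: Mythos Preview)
Your growing-path construction is too weak to force the ratio $2+\tfrac{2}{\theta-2}$. On a path $v_0,\dots,v_n$ with escalating weights $1,\theta,\dots,\theta^{n-1}$, the edges that $\mathcal A$ discards while switching are consecutive on the path and therefore pairwise adjacent; the optimum can recover only every other one of them. Hence the extra contribution to $\opt$ beyond the top edge is governed by a geometric series with ratio $1/\theta^{2}$, giving at best a correction term of order $1/(\theta^{2}-1)$, not $2/(\theta-2)$. For $\theta=4$ this is $1/15$ versus the required $1$, so the path argument falls short by a large constant factor; adding a few pendant edges can recover the ``$2$'' but not the ``$2/(\theta-2)$''.

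The paper's proof is structurally different: it builds a binary tree via a recursive procedure \textsc{MakeTree}$(n)$ that joins two copies of \textsc{MakeTree}$(n-1)$ by a weight-$\theta^{n}$ edge and then offers one more weight-$\theta^{n}$ edge to force a switch. The point of the doubling is that the adversary's matching contains $2^{\,n-i}$ edges of weight $\theta^{i}$ from layer $i$, and these are all pairwise non-adjacent because they live in disjoint subtrees. Their total weight is $\sum_{i=0}^{n}2^{\,n-i}\theta^{i}=(\theta^{\,n+1}-2^{\,n+1})/(\theta-2)$, while the algorithm is left with a single edge of weight $\theta^{n}$; a short top-level gadget in \textsc{Adv}$(n)$ adds one more weight-$\theta^{n}$ edge to the adversary's side, yielding the limit ratio $2+\tfrac{2}{\theta-2}$. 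The condition $\theta\ge 4$ is used only in the ``discard'' branch, where one must check that a tree rejected mid-construction already witnesses ratio at least $2+\tfrac{2}{\theta-2}$; this reduces to the inequality $\tfrac{\theta}{2}+1\ge 2+\tfrac{2}{\theta-2}$, i.e.\ $(\theta-2)^{2}\ge 4$. Your proposal does not identify this role for $\theta\ge 4$ and, more importantly, lacks the tree-doubling idea that makes the $2/(\theta-2)$ term attainable.
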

The proof of the above theorem can be found in Appendix (section~\ref{theta_bnd}).

 \section{Randomized Algorithm for Paths}\label{ub}
 
 When the input graph is restricted to be a collection of paths, then every new edge that arrives connects two (possibly empty) paths. Our algorithm consists of several cases, depending on the lengths of the two paths. 
 \begin{algorithm}[H]
  \caption{Randomized Algorithm for Paths}
  \begin{algorithmic}[1]
   \STATE $M=\emptyset$. \COMMENT{$M$ is the matching stored by the algorithm.}
   \FOR{each new edge $e$}
    \STATE Let $L_1\geq L_2$ be the lengths of the two (possibly empty) paths $P_1,P_2$ that $e$ connects.
    \STATE If $L_1>0$ (resp. $L_2>0$), let $e_1$ (resp. $e_2$) be the edge on $P_1$ (resp. $P_2$) adjacent to $e$.
    \IF{$e$ is a disjoint edge \COMMENT{$L_1=L_2=0$ } } 
    \STATE{$M=M\cup \{e\}$.}
    \ELSIF{$e$ is revealed on a disjoint edge $e_1$ \COMMENT{$L_1=1,L_2=0$. $e_1\in M$}}  \STATE{with probability $\frac{1}{2}$, $M=M\setminus\{e_1\}\cup\{e\}$}.
    \ELSIF{$e$ is revealed on a end point of path of length $>1$ \COMMENT{$L_1>1,L_2=0$}} \STATE{if $e_1\notin M$, $M=M\cup \{e\}$ }.
    \ELSIF{$e$ joins two disjoint edges \COMMENT{$L_1=L_2=1$. $e_1,e_2\in M$}} \STATE{with probability $\frac{1}{2}$, $M=M\setminus \{e_1,e_2\}\cup \{e\}$}.
    \ELSIF{$e$ joins a path and a disjoint edge \COMMENT{$L_1>1,L_2=1$. $e_2\in M$}} \STATE{if $e_1\notin M$, $M=M\setminus\{e_2\}\cup\{e\}$}.
    \ELSIF{$e$ joins two paths of length $>1$\COMMENT{$L_1>1,L_2>1$}} \STATE{if $e_1\notin M$ and $e_2\notin M$, $M=M\cup \{e\}$}.
    \ENDIF
    \STATE Output $M$.
    \ENDFOR
   \end{algorithmic}
 \end{algorithm}

The following simple observations can be made by looking at the algorithm:
\begin{itemize}
 \item All isolated edges belong to $M$ with probability one.
 \item The end vertex of any path of $length>1$ is covered by $M$ with probability $\frac{1}{2}$, and this is independent of the end vertex of any other path being covered.
 \item For a path of length $2,3,$ or $4$, each maximal matching is present in $M$ with probability $\frac{1}{2}$.
\end{itemize}

 \begin{theorem}\label{th1}
  The randomized algorithm for finding MCM on path graphs is $\frac{4}{3}$-competitive.
 \end{theorem}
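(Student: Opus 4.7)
The plan is to analyze the algorithm component-by-component: for each connected component $P$ of the final graph (either an isolated vertex or a path of length $\ell\ge 1$), show that $E[|M\cap P|]\ge \tfrac{3}{4}\,\opt(P)$ where $\opt(P)=\lceil \ell/2\rceil$, and then sum over components. Before doing this, I would establish by induction on the number of processed edges the three bullet observations preceding the theorem, together with the auxiliary fact that the algorithm maintains a maximal matching on every current connected component; both follow by a routine case analysis over the six branches of the algorithm. I would also note that the observations actually hold at every intermediate step, not just at the end of the input, since the algorithm's behavior is memoryless given the current graph and matching.

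For the short cases $\ell\in\{1,2,3,4\}$ the ratio follows directly from the observations: $\ell=1,2$ give $E[|M\cap P|]=1=\opt(P)$; $\ell=3$ gives $E[|M\cap P|]=\tfrac{1}{2}\cdot1+\tfrac{1}{2}\cdot2=\tfrac{3}{2}=\tfrac{3}{4}\,\opt(P)$, which is the tight case that forces the $4/3$ bound; and $\ell=4$ gives $E[|M\cap P|]=2=\opt(P)$ since by observation~3 the two possible outputs on a length-$4$ path are the ``alternating'' maximal matchings, both of size $2$.

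For $\ell\ge 5$ the plan is to run a primal-dual argument. Set $y_v=2/3$ if $v$ is covered by the random matching $M$ and $y_v=0$ otherwise, so $\sum_v y_v=\tfrac{4}{3}|M|$ pointwise. Since paths are bipartite the matching LP is integral, and it therefore suffices to show dual feasibility in expectation: $E[y_u+y_v]\ge 1$ for every edge $(u,v)$, equivalently $P(u\text{ covered})+P(v\text{ covered})\ge 3/2$. Maximality forbids both endpoints of an edge being uncovered, so $P(u\text{ covered})+P(v\text{ covered})=1+P(\text{both covered})$, and the requirement reduces to $P(u\text{ uncov})+P(v\text{ uncov})\le 1/2$. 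For the two end-edges of the path this is immediate: the path endpoint is uncovered with probability $1/2$ by observation~2, while its neighbor in the path must always be covered, as otherwise the end-edge would have no conflicting $M$-edge and violate maximality.

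The main obstacle is establishing $P(v_i\text{ uncov})+P(v_{i+1}\text{ uncov})\le 1/2$ for ``deep'' interior edges in which both $v_i$ and $v_{i+1}$ are non-endpoint vertices of the final path. The plan is to prove the stronger per-vertex bound $P(v\text{ uncov})\le 1/4$ for every interior vertex $v$. Inspecting the six branches of the algorithm shows that $v$'s $M$-status can only change at the single moment it transitions from being a path endpoint to being interior, and that among those branches the only one that can leave $v$ uncovered after the transition is the ``$L_1>1,L_2>1$'' merge in which the new edge is rejected; rejection while leaving $v$ uncovered requires the adjacent edge on $v$'s side of the merge to be out of $M$ and the adjacent edge on the opposite side to be in $M$. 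By observation~2 (applied at the moment of the merge to the two distinct sub-paths involved), these two events are independent and each has probability $1/2$, yielding $P(v\text{ uncov})\le 1/4$. Once $v$ is interior the algorithm never again modifies an edge incident to $v$, so this bound persists to the end of the input. Applying the bound to both $v_i$ and $v_{i+1}$ gives the required $1/2$ and closes the primal-dual argument; summing $\tfrac{4}{3}\,E[|M|]\ge \opt$ over all components yields the $4/3$-competitive ratio.
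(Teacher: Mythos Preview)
Your proof is correct and takes a genuinely different route from the paper's own argument.

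The paper proves the theorem via a direct induction on the number of edges, maintaining parity-dependent lower bounds $p_0(n)=\tfrac{3}{4}\cdot\tfrac{n}{2}+\tfrac14$ and $p_1(n)=\tfrac{3}{4}\cdot\tfrac{n}{2}+\tfrac38$ on $\mathbb{E}[|M\cap P|]$ for each current path $P$ of length $n$; the inductive step is a case split on the pair $(L_1,L_2)$, using observation~2 to compute the probability that the new edge is accepted. Your argument instead establishes a per-vertex structural fact---every interior vertex is uncovered with probability at most $1/4$, and this is fixed at the moment the vertex becomes interior---and then feeds this into an expected-dual-feasibility LP argument. The crucial observation underpinning your bound, that the $L_1>1,\,L_2>1$ branch is the \emph{only} one that can leave a newly interior vertex uncovered and that the relevant endpoints are independent by observation~2, is sound; your persistence claim (edges incident to an interior vertex never change $M$-status afterward) also checks out, since the algorithm only ever evicts edges that are currently isolated and only ever inserts the freshly arrived edge.

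What each approach buys: the paper's induction yields the slightly sharper additive-constant bounds $p_0,p_1$ (not needed for the ratio), while your argument is cleaner---no parity split, no running invariant---and dovetails nicely with the primal-dual framework the paper uses for the tree results. In fact your primal-dual step already covers $\ell\le 4$ as well (for those lengths the $L_1>1,\,L_2>1$ branch cannot occur when any vertex becomes interior, so every interior vertex is deterministically covered), so the separate short-case verification is redundant, though harmless. One cosmetic point: your reading of observation~3 for $\ell=4$ as ``only the two alternating maximal matchings occur'' is stronger than the paper's literal phrasing, but it is in fact what the algorithm does, and in any case your primal-dual argument makes the appeal to observation~3 unnecessary.
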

The proof of above theorem can be found in Appendix (section~\ref{opt_paths}).
\begin{comment}
\section{Conclusion}
In this paper, we present a barely random algorithm for MCM on growing trees. The next step, is to use the same algorithm to prove a competitive ratio better than $2$ for trees, where edges may be revealed in any arbitrary order. The deterministic algorithm due to McGregor~\cite{mcgregor} uses a single parameter $\gamma$. It will be interesting to see if there exists a barely random algorithm that performs provably better than the best known randomized algorithm~\cite{epstein} for MWM, which chooses between output of two algorithms with some probability, where both the algorithms use different values for the parameter $\gamma$.

\section*{Acknowledgment}
The first author thanks Sagar Kale for reviewing Section \ref{sec_lb_mwm}, correcting several mistakes and helping us improve the presentation substantially.
\end{comment}
\bibliographystyle{plain}% the recommended bibstyle
\bibliography{paper}

\begin{thebibliography}{1}

\bibitem{naor}
Niv Buchbinder and Joseph Naor.
\newblock {The Design of Competitive Online Algorithms via a Primal-Dual
  Approach}.
\newblock {\em Foundations and Trends in Theoretical Computer Science},
  3(2-3):93--263, 2009.

\bibitem{chan}
T.{-}H.~Hubert Chan, Fei Chen, Xiaowei Wu, and Zhichao Zhao.
\newblock {Ranking on Arbitrary Graphs: Rematch via Continuous LP with Monotone
  and Boundary Condition Constraints}.
\newblock In {\em Proceedings of the Twenty-Fifth Annual {ACM-SIAM} Symposium
  on Discrete Algorithms, {SODA} 2014, Portland, Oregon, USA, January 5-7,
  2014}, pages 1112--1122, 2014.

\bibitem{epstein}
Leah Epstein, Asaf Levin, Danny Segev, and Oren Weimann.
\newblock {Improved Bounds for Online Preemptive Matching}.
\newblock In {\em 30th International Symposium on Theoretical Aspects of
  Computer Science, {STACS} 2013, Kiel, Germany}, pages 389--399, 2013.

\bibitem{feigenbaum}
Joan Feigenbaum, Sampath Kannan, Andrew McGregor, Siddharth Suri, and Jian
  Zhang.
\newblock {On Graph Problems in a Semi-streaming Model}.
\newblock {\em Theor. Comput. Sci.}, 348(2):207--216, December 2005.

\bibitem{karp}
R.~M. Karp, U.~V. Vazirani, and V.~V. Vazirani.
\newblock {An Optimal Algorithm for On-line Bipartite Matching}.
\newblock In {\em Proceedings of the Twenty-second Annual ACM Symposium on
  Theory of Computing}, STOC '90, pages 352--358, New York, NY, USA, 1990. ACM.

\bibitem{mcgregor}
Andrew McGregor.
\newblock {Finding Graph Matchings in Data Streams}.
\newblock In {\em Proceedings of the 8th International Workshop on
  Approximation, Randomization and Combinatorial Optimization Problems, and
  Proceedings of the 9th International Conference on Randamization and
  Computation: Algorithms and Techniques}, APPROX'05/RANDOM'05, pages 170--181,
  Berlin, Heidelberg, 2005. Springer-Verlag.

\bibitem{poloczek}
Matthias Poloczek and Mario Szegedy.
\newblock {Randomized Greedy Algorithms for the Maximum Matching Problem with
  New Analysis}.
\newblock In {\em 53rd Annual {IEEE} Symposium on Foundations of Computer
  Science, {FOCS} 2012, New Brunswick, NJ, USA}, pages 708--717, 2012.

\bibitem{varadaraja}
Ashwinkumar~Badanidiyuru Varadaraja.
\newblock {Buyback Problem - Approximate Matroid Intersection with Cancellation
  Costs}.
\newblock In {\em Automata, Languages and Programming - 38th International
  Colloquium, {ICALP} 2011, Zurich, Switzerland, Proceedings, Part {I}}, pages
  379--390, 2011.

\bibitem{yao}
Andrew Chi-Chin Yao.
\newblock {Probabilistic computations: Toward a unified measure of complexity}.
\newblock In {\em 18th Annual Symposium on Foundations of Computer Science,
  {FOCS} 1977}, pages 222--227, Oct 1977.

\end{thebibliography}
~\nocite{*}
\newpage
\section*{Appendices}
\appendix
% \appendixpage

 \section{A Primal-Dual Analysis of a deterministic algorithm for MWM}\label{pd}
 In this section, we present a primal-dual analysis for the deterministic algorithm due to~\cite{mcgregor} for the maximum weight matching problem in the online preemptive model. The algorithm is as follows.
 \begin{algorithm}
  \caption{Deterministic Algorithm for MWM}
  \begin{enumerate}
   \item Fix a parameter $\gamma$.
   \item If the new edge $e$ has weight greater than $(1+\gamma)$ times the weight of the edges currently adjacent to $e$, then include $e$ and discard the adjacent edges.
  \end{enumerate}

 \end{algorithm}
\subsection{Analysis}
\begin{lemma}~\cite{mcgregor}
The competitive ratio of this algorithm is $(1+\gamma)(2+\frac{1}{\gamma})$. 
\end{lemma}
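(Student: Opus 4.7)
The plan is to exhibit a feasible solution to the dual of the weighted matching LP,
\[
\min\sum_v y_v\ \text{subject to}\ y_u+y_v\geq w_e\ \forall e=(u,v),\ y_v\geq 0,
\]
whose objective is at most $(1+\gamma)(2+1/\gamma)\cdot\alg$; weak LP duality then upper bounds $\opt$ by the same quantity (the odd-set constraints are not needed since any feasible dual is an upper bound on the primal optimum).

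I would maintain the $y_v$'s online, updating them at each acceptance event. When the algorithm accepts an edge $e=(u,v)$ of weight $w$, possibly evicting $e_1$ at $u$ and $e_2$ at $v$ of weights $w_1,w_2$ satisfying $w>(1+\gamma)(w_1+w_2)$, I raise $y_u$ and $y_v$ to $(1+\gamma)w$, carrying forward any previously accumulated mass. Two invariants are maintained: (i) each $y_v$ is non-decreasing over time, and (ii) while a vertex $v$ is covered by a current match of weight $w'$, $y_v\geq(1+\gamma)w'$. Invariant (ii) is preserved at acceptance precisely because the acceptance rule gives $(1+\gamma)w > (1+\gamma)^2 w_1 \geq (1+\gamma) w_1$ at $u$ (and similarly at $v$), so the new value dominates the old.

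Next I would check dual feasibility $y_u+y_v\geq w_e$ in three cases. (a) If $e\in M$ at the end, then at the moment of $e$'s own acceptance both $y_u$ and $y_v$ reached $(1+\gamma)w\geq w$. (b) If $e$ was accepted and then later evicted, say from $u$'s side by an edge $e'$ of weight $w_{e'}>(1+\gamma)w$, then at $e'$'s acceptance $y_u$ became at least $(1+\gamma)w_{e'}>(1+\gamma)^2 w\geq w$; feasibility follows by monotonicity of $y_u$. (c) If $e$ was rejected on arrival, the rule gives $w_e\leq(1+\gamma)(w_u^{\text{then}}+w_v^{\text{then}})$; by invariant (ii) applied at that moment, $y_u\geq(1+\gamma)w_u^{\text{then}}$ and $y_v\geq(1+\gamma)w_v^{\text{then}}$, giving $y_u+y_v\geq w_e$. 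Finally, I would bound $\sum_v y_v$ by noting that at each vertex $v$ the sequence of weights of accepts ever at $v$ grows by a factor at least $1+\gamma$, so when we break $y_v$ up over its increment events the geometric tail below the last-accept weight sums to at most $1/\gamma$ times that last-accept weight, yielding $y_v\le(1+\gamma)(1+1/\gamma)w_k^v$. Summing over vertices, and charging each ``orphaned'' last-accept (where the match at $v$ was killed from the other endpoint) against the strictly heavier surviving match at the evicting neighbor, the total collapses to $(1+\gamma)(2+1/\gamma)\cdot\alg$.

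The main obstacle is bookkeeping for orphaned vertices in the feasibility-(c)/summation combination. A naive scheme $y_v=c\cdot w(v)$, with $w(v)$ the weight of the \emph{final} matched edge, fails case (c) because a vertex $v$ whose previous match was evicted from the opposite side is orphaned ($w(v)=0$) even though some rejected edge through $v$ cited that now-vanished match. Making $y_v$ monotonically accumulate fixes feasibility but inflates $\sum_v y_v$; the delicate amortization — each orphan's residual charge paid by a geometrically heavier surviving match at the neighbor — is what converts the naive $2(1+\gamma)$-style bound into the tight $(1+\gamma)(2+1/\gamma)$ required.
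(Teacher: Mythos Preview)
Your dual assignment $y_v\leftarrow\max(y_v,(1+\gamma)w)$ and the feasibility verification coincide with the paper's. Where you diverge is the cost bound $\sum_v y_v\le(1+\gamma)(2+1/\gamma)\cdot\alg$. The paper does not charge at the end; it maintains as a third running invariant that at every acceptance the dual \emph{increment} is at most $(1+\gamma)(2+1/\gamma)$ times the primal increment. When $e$ of weight $w$ evicts conflicting edges of total weight $W$, the primal gains $w-W$ while the dual gains at most $2(1+\gamma)w-(1+\gamma)W$ (each endpoint already carried at least $(1+\gamma)$ times its conflicting weight, by your own invariant~(ii)); the ratio is $(1+\gamma)\bigl(2+W/(w-W)\bigr)<(1+\gamma)(2+1/\gamma)$ since $w>(1+\gamma)W$ forces $W/(w-W)<1/\gamma$. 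Summing the increments finishes the proof with no orphan bookkeeping whatsoever.

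Your per-vertex route has a real gap. The claim that ``the sequence of weights of accepts ever at $v$ grows by a factor at least $1+\gamma$'' is false: if the current match $(u,v)$ is evicted from $u$'s side by a heavier edge, $v$ becomes free, and the next edge accepted at $v$ (facing no conflict there) can be arbitrarily light. Then the last-accept weight $w_k^v$ is tiny while $y_v$ still records the earlier large value, so $y_v\le(1+\gamma)(1+1/\gamma)w_k^v$ fails outright. Even when the sequence \emph{is} increasing, note that $y_v=(1+\gamma)w_k^v$ exactly under your update rule, so the ``geometric tail'' factor $(1+1/\gamma)$ is misplaced; you appear to be conflating the dual bound with McGregor's original eviction-tree argument, where the geometric sum bounds the total weight of all edges ever accepted, not a single $y_v$. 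The orphan-to-neighbor charging you sketch would have to be iterated through chains of evictions (the evicting neighbor may itself be evicted later) and can be made rigorous, but the per-step $\Delta D\le c\,\Delta P$ invariant sidesteps all of this.
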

We use the primal-dual technique to prove the same competitive ratio. This analysis technique is different from the one in~\cite{naor}. In~\cite{naor}, the primal variables once set to a certain value are never changed whereas in our analysis the primal variables may change during the run of algorithm. The primal and dual LPs we use for the maximum weight matching problem are as follows.
 \begin{center}
 \begin{tabular}{c|c}
 Primal LP & Dual LP \\\hline
 $\max \sum_e w_e x_e$ & $\min \sum_v y_v$\\
 $\forall v: \sum_{v\in e}x_e \leq 1$ & $\forall e: y_u + y_v \geq w_e$\\
 $x_e\geq 0$ & $y_v \geq 0$
 \end{tabular}
\end{center}
We maintain both primal and dual variables along with the run of the algorithm. On processing an edge, we maintain the following invariants.
\begin{itemize}
 \item The dual LP is always feasible.
 \item For each edge $e\equiv(u,v)$ in the current matching, $y_u\geq(1+\gamma)w(e)$ and $y_v\geq(1+\gamma)w(e)$.
 \item The change in cost of the dual solution is at most $(1+\gamma)(2+\frac{1}{\gamma})$ times the change in cost of primal solution.
\end{itemize}
 These invariants imply that the competitive ratio of the algorithm is $(1+\gamma)(2+\frac{1}{\gamma})$.

We start with $\vec{0}$ as the initial primal and dual solutions. Consider a round in which an edge $e$ of weight $w$ is given. Assume that all the above invariants hold before this edge is given. Whenever an edge $e\equiv(u,v)$ is accepted by the algorithm, we assign values $x_e=1$ to the primal variable and $y_u=max(y_u,(1+\gamma)w(e))$, $y_v=max(y_v,(1+\gamma)w(e))$ to the dual variables of its end points. Whenever an edge is rejected, we do not change the corresponding primal or dual variables. Whenever an edge $e$ is evicted, we change its primal variable $x_e=0$. The dual variables never decrease. Hence, if a dual constraint is feasible once, it remains so. We will now show that the invariants are always satisfied. These are three cases.
\begin{enumerate}
 \item If the edge $e\equiv(u,v)$ has no conflicting edges in the current matching, then it is accepted by the algorithm in current matching $M$. We assign $x_e=1,y_u=\max(y_u,(1+\gamma)w(e))$ and $y_v=\max(y_v,(1+\gamma)w(e))$. Hence, $y_u\geq (1+\gamma)w(e)$ and $y_v\geq (1+\gamma)w(e)$. And hence, the dual constraint $y_u+y_v\geq w(e)$ is feasible. The change in the dual cost is at most $2(1+\gamma)w(e)$. The change in the primal cost is $w(e)$. So, the change in the dual cost is at most $(1+\gamma)(2+\frac{1}{\gamma})$ times the change in the cost of the primal solution.
 \item If the edge $e\equiv(u,v)$ has conflicting edges $X(M,e)$ and $w(e)\leq (1+\gamma)w(X(M,e))$, then it is rejected by the algorithm. That happens when $y_u+y_v\geq (1+\gamma)w(X(M,e))$, and the dual constraint for edge $e$ is satisfied.
 \item If the edge $e\equiv(u,v)$ had conflicting edges $X(M,e)$ and $w(e)> (1+\gamma)w(X(M,e))$, then it is accepted by the algorithm in the current matching $M$ and $X(M,e)$ is/are evicted from $M$. We only need to show that the change in dual cost is at most $(1+\gamma)(2+\frac{1}{\gamma})$ times the change in the primal cost. The change in primal cost is $w(e)-w(X(M,e))$. The change in dual cost is at most $2(1+\gamma)w(e)-(1+\gamma)w(X(M,e))$. Hence the ratio is at most
 \begin{align*}
	      & \frac{2(1+\gamma)w(e)-(1+\gamma)w(X(M,e))}{w(e)-w(X(M,e))}\\
              =&2(1+\gamma) + \frac{(1+\gamma)w(X(M,e))}{w(e)-w(X(M,e))}\\
              <&2(1+\gamma) + \frac{w(e)}{w(e)-w(X(M,e))}\\
              \leq &2(1+\gamma)+(1+\frac{1}{\gamma})\\
              =&(1+\gamma)(2+\frac{1}{\gamma})
 \end{align*}

\end{enumerate}
Here, the management of the dual variables was straight forward. The introduction of randomization complicates matters considerably, and we are only able to analyze the algorithm in the very restricted setting of paths and ``growing trees''.

\section{Barely Random Algorithms for MCM}\label{ra_paths}

\subsection{Randomized Algorithm for Paths}
 \begin{algorithm}
  \caption{Barely Random Algorithm for Paths}
  \begin{enumerate}
    \item The algorithm maintains two matchings: $M_1$ and $M_2$.
    \item On receipt of an edge $e$, the processing happens in two phases.
   \begin{enumerate}
    \item The augment phase. Here, the new edge $e$ is added to each $M_i$ such that there is no edge in $M_i$ sharing an end point with $e$.
    \item The switching phase. Edge $e$ is added to $M_2$ and the conflicting edge is discarded, provided it decreases the quantity $|M_1\cap M_2|$.
   \end{enumerate}
   \item Output a matching $M_i$ with probability $\frac{1}{2}$.
   \end{enumerate}
 \end{algorithm}
\begin{theorem}\label{br_paths}
The barely random algorithm for finding the MCM on paths is $\frac{3}{2}$-competitive, and no barely random algorithm can do better.
\end{theorem}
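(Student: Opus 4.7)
The statement has two parts: the algorithm achieves competitive ratio $3/2$, and no barely random algorithm can do better. For the upper bound, since the algorithm outputs $M_1$ or $M_2$ with equal probability, $\mathbb{E}[|M|]=\tfrac{1}{2}(|M_1|+|M_2|)$, so it suffices to prove $|M_1|+|M_2|\ge \tfrac{4}{3}\,\opt$. The crucial structural fact is that the switching phase only modifies $M_2$, so $M_1$ evolves exactly as the standard online greedy matching: an edge is added if and only if it has no matched neighbor at the moment of arrival. Consequently, $M_1$ is maximal at the end, so every edge of the final graph is either in $M_1$ or adjacent to an edge of $M_1$. Since the final graph is a disjoint union of paths, the desired inequality decomposes path by path, and I will analyze each maximal path independently.

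For each maximal path $P$ I would apply the primal--dual framework of Appendix~\ref{pd}. Every edge $e\in M_i$ contributes $\tfrac{1}{2}$ to the expected primal objective, so edges in $M_1\cap M_2$ carry primal charge $1$, singly-covered edges carry $\tfrac{1}{2}$, and orphan edges (in neither $M_1$ nor $M_2$) carry $0$. I would distribute this charge between the two endpoints of each matching edge and verify that for every edge $(u,v)$ of $P$---including those in no matching---the dual constraint $y_u+y_v\ge \tfrac{2}{3}$ holds. A uniform $\tfrac{1}{4}$ split per endpoint handles matching edges immediately. For orphan edges I would use asymmetric splits, where a matching edge adjacent to an orphan shifts more of its charge to the shared endpoint, enough to meet the $\tfrac{2}{3}$ requirement. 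Maximality of $M_1$ ensures that every orphan has a neighbor in $M_1$ available to supply this charge. The tight case is the length-$5$ path with arrival order $e_2,e_1,e_3,e_5,e_4$, where $M_1=\{e_2,e_5\}$ and $M_2=\{e_1,e_4\}$, and the orphan $e_3$ forces exactly $\tfrac{1}{3}$ on each of its endpoints. The main work here is a case analysis on local configurations of matching/orphan edges, and the key leverage is that the switching-phase invariants (in particular that no further swap can reduce $|M_1\cap M_2|$) sharply limit which patterns can appear around an orphan.

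For the lower bound the plan is to invoke Yao's minimax principle: since a barely random algorithm is a distribution over constantly many deterministic algorithms, it suffices to exhibit a distribution $\mathcal{D}$ over path inputs against which every deterministic algorithm has expected competitive ratio at least $3/2$. A natural candidate $\mathcal{D}$ samples uniformly among the symmetric reversals and relabellings of the tight length-$5$ arrival order identified above (possibly on many vertex-disjoint copies to amplify the bound), so that every deterministic algorithm is ``out of phase'' with some realization and thereby incurs the $3/2$ loss. The hard part will be to verify that the symmetry of $\mathcal{D}$ acts transitively enough on the space of deterministic trajectories so that no single deterministic strategy can outperform $3/2$ on every realization simultaneously; establishing this (rather than the charge distribution in the upper bound) is where I expect the main technical obstacle to lie.
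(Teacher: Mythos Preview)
Your upper-bound plan is essentially the paper's: both use the primal--dual framework with an asymmetric charge split that pushes extra charge toward the endpoint shared with an edge lying in neither matching. The paper's concrete rule is a $\tfrac{2}{3}$/$\tfrac{1}{3}$ split (in the summed-over-matchings scale, i.e.\ $\tfrac{1}{3}$/$\tfrac{1}{6}$ in your expected scale) rather than the uniform $\tfrac{1}{4}$ you start from, but the case analysis and the role of maximality are the same.

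Your lower-bound plan, however, has a genuine gap. Yao's minimax principle does not use the hypothesis ``constantly many deterministic algorithms'' at all: if there were a distribution $\mathcal{D}$ on path inputs forcing every deterministic algorithm to expected ratio at least $3/2$, then \emph{every} randomized algorithm---barely random or not---would have worst-case ratio at least $3/2$. But Theorem~\ref{th1} in this very paper gives a $4/3$-competitive randomized algorithm for paths, so no such $\mathcal{D}$ can exist, and your plan cannot yield anything above $4/3$. The ``symmetric reversals and relabellings of the tight length-$5$ order'' distribution you propose is in fact beaten by the algorithm of Section~\ref{ub}.

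The separation between $4/3$ (full randomness) and $3/2$ (barely random) must therefore be proved by an argument that actually exploits finiteness. The paper does this with an \emph{adaptive} adversary rather than an input distribution: knowing that the algorithm is a mixture over a fixed finite set of deterministic branches, the adversary repeatedly reveals a disjoint length-$2$ path, observes which of the two edges each branch retains (this partitions the branches into $S$ and $\bar S$), and then inserts a connecting edge whose two endpoints are exactly the ones covered by the held edge in every branch. Thus the connecting edge is blocked in all branches simultaneously, so over each round of three edges the optimum grows by $3/2$ while the algorithm grows by $1$. This construction is impossible against an algorithm with unbounded randomness precisely because the adversary can no longer track and fool all branches at once.
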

%\begin{theorem}
  %There exists a barely random algorithm for finding the MCM on growing trees with maximum degree $3$ which is $\frac{12}{7}$-competitive.	
 %\end{theorem}
We prove this theorem using the following lemma.
\begin{lemma}
The dual constraint for each edge is satisfied at least $\frac{2}{3}rd$ in expectation.
\end{lemma}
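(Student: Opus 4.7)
My plan is a primal-dual analysis. Each matching edge $f = (x,y) \in M_i$ will distribute its ``primal weight'' $\tfrac{1}{2}$ (reflecting that $M_i$ is output with probability $\tfrac{1}{2}$) between its two endpoints via a split $p_i(f,x) + p_i(f,y) = 1$. I then define the dual variable
\[ y_v = \frac{1}{2}\sum_{i\,:\,v\text{ matched in }M_i} p_i(f_i(v), v), \]
where $f_i(v)$ is the (unique) matching edge of $v$ in $M_i$. Summing over $v$ gives $\sum_v y_v = (|M_1|+|M_2|)/2 = \mathbb{E}[|M|]$, so the lemma reduces to exhibiting splits for which $y_u + y_v \geq 2/3$ for every input edge $(u,v)$.

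I would first establish four structural facts about $(M_1, M_2)$ on each path: (i) $M_1$ is never shrunk, and an edge leaves $M_2$ only when a later arrival triggers a swap whose kicked edge lies in $M_1 \cap M_2$; hence membership in $M_1 \setminus M_2$ or $M_2 \setminus M_1$ is monotone once attained. (ii) An edge lies in $M_1 \cap M_2$ at the end iff it is an isolated edge, since any arriving neighbour forces a swap that strictly decreases $|M_1 \cap M_2|$. (iii) An edge $e = (u,v)$ is ``bad'' (in neither matching) iff both its path-neighbours exist, with one in $M_1 \setminus M_2$ and the other in $M_2 \setminus M_1$; this is a direct case analysis of when the switching condition fails to fire. (iv) Bad edges on a path are at edge-distance at least three: if bad edges $f_j$ and $f_{j+2}$ coexisted, then the shared intermediate edge $f_{j+1}$ would have to lie in a single matching (say $M_2 \setminus M_1$), but its only neighbours $f_j$, $f_{j+2}$ are bad and thus never in $M_1$, so augment would have placed $f_{j+1}$ in $M_1$ at arrival, contradicting $f_{j+1} \notin M_1$. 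A consequence of (i)--(iii) is that every non-bad, non-isolated edge has at least one doubly covered endpoint.

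Given these facts, I would assign splits by taking $p_i(f,\cdot) = 1/2$ as the default, and for each bad edge $e = (u,v)$ with matching-edge neighbours $f_u \in M_i$ and $f_v \in M_j$, setting $p_i(f_u, u) = p_j(f_v, v) = 2/3$. By (iv) no matching edge is incident to two different bad edges, so the assignment is well-defined. The dual verification then proceeds edge by edge: isolated edges yield $y_u + y_v = 1$; non-bad non-isolated edges untouched by boosts yield $(C_u + C_v)/4 \geq 3/4$, where $C_v \in \{0,1,2\}$ counts the matchings covering $v$; bad edges yield $y_u + y_v = 2/3$ exactly by construction; and a non-bad edge incident to a bad edge incurs a $\tfrac{1}{12}$-loss at the boosted far endpoint, which is absorbed by the second matching's contribution of $\tfrac{1}{4}$ at that doubly covered vertex. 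The main obstacle is the tightest subcase, where an edge sits between two bad edges at distance exactly three so that both of its endpoints take boost-induced reductions; a short direct calculation then shows that the two reductions still leave $y_u + y_v = 2/3$ exactly.
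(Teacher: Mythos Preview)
Your approach is correct and essentially identical to the paper's: both use the primal-dual framework with the same $2/3$--$1/3$ split rule directing charge toward endpoints adjacent to unmatched (``bad'') edges, and both verify $y_u + y_v \geq 2/3$ via the same case analysis on how many matchings cover the edge. Your structural facts (i)--(iv) make explicit what the paper leaves implicit; one minor gap is that your case enumeration omits an edge $e$ not itself adjacent to a bad edge but whose neighbor $f$ is (so $f$'s boost reduces $f$'s contribution at the shared vertex $w$), though this case still gives $y_w + y_v \ge \tfrac{5}{12} + \tfrac{1}{4} = \tfrac{2}{3}$ by the same ``doubly covered endpoint'' reasoning you already invoke.
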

$M_1$ and $M_2$ are valid matchings and hence correspond to valid primal solutions. For each edge $e\equiv(u,v)$ in some matching $M_i$, we distribute a charge of $x_e=1$ amongst dual variables $y_u$ and $y_v$ of its vertices. We prove that for each edge $e$, $y_u+y_v\geq \frac{4}{3}$. Thus, $\mathbb{E}[y_u+y_v]\geq \frac{2}{3}$. Hence, this algorithm has a competitive ratio $\frac{3}{2}$. All the dual variables are initialized to $0$. Suppose $e\equiv(u,v) \in M_i$ for some $i\in[2]$. Then distribution of primal charge $x_e$ amongst dual variables $y_u$ and $y_v$ is done as follows. If there is an edge incident on $u$ which does not belong to any matching, and there is an edge incident on $v$ which does belong to some matching, then $e$ transfer a primal charge of $\frac{2}{3}$ to $y_u$ and rest is transferred to $y_v$. Else, the primal charge of $e$ is transferred equally amongst $y_u$ and $y_v$.

We look at three cases and prove that $y_u+y_v\geq \frac{4}{3}$ for each edge $e\equiv (u,v)$.
\begin{enumerate}
 \item The edge $e$ is not present in any matching.
 \begin{enumerate}
  \item If there are no edges on both its end points in the input graph, then this edge has to be covered by both $M_1$ and $M_2$. So this case is not possible.
  \item If there is no edge on one end point (say $u$) in the input graph, then $e$ has to belong to belong to some matching. So, this case is not possible.
  \item If there are two edges incident on end points of $e$ in the input graph, then each of them has to be covered by some matching. So, $y_u+y_v\geq 2\cdot\frac{2}{3}=\frac{4}{3}$.
 \end{enumerate}
\item The edge $e\equiv(u,v)$ is present in a single matching.
 \begin{enumerate}
  \item If there are no edges on both its end points in the input graph, then this edge has to be covered by both $M_1$ and $M_2$. So this case is not possible.
  \item If there is no edge on one end point (say $u$) in the input graph, then edge on the other end point must be covered by the other matching. Otherwise, edge $e$ would have been covered by both matchings. So, $y_u+y_v\geq  1 +\frac{1}{3}=\frac{4}{3}$.
  \item If there two edges incident on end points of $e$ in the input graph, then at least one of them has to be covered by other matching. Else, edge $e$ would have been covered by both matchings. So, $y_u+y_v\geq  1 +\frac{1}{3}=\frac{4}{3}$.
 \end{enumerate}
\item The edge $e\equiv(u,v)$ is present in both the matchings, then $y_u+y_v=2\geq \frac{4}{3}$.
\end{enumerate}
This proves the above claim. The corollary of the above claim is that we have a $\frac{3}{2}$-competitive randomized algorithm for the MCM on paths.

\begin{proof} (of the second part of Theorem~\ref{br_paths}) 
Suppose $U$ is the set of matchings used by a barely random algorithm $\mathcal{A}$. Following input is given to this algorithm. Reveal two edges $x_1$ and $y_1$ such that they share an end point. Let $S$ be the set of matchings to which $x_1$ is added, and $\bar{S}$ be the set of matchings to which $y_1$ is added. Here, $U=S\cup \bar{S}$. Now give two more edges $x_2$ and $y_2$ disjoint from the previous edges, such that $x_2$ and $y_2$ share an end point. Wlog, $x_2$ will be added to set of matchings $S$, and $y_2$ will be added to set of matchings $\bar{S}$. Give an edge between $y_2$ and $x_1$. Continue the input similarly for $i>2$. It can be seen that expected increase in the size of optimum matching is $\frac{3}{2}$, whereas increase in the size of matching held by the algorithm is $1$. Thus, we get a lower bound $\frac{3}{2}$ on the competitive ratio of any barely random algorithm.
 
\end{proof}

\subsection{Randomized Algorithm for Growing Trees with maximum degree $3$}\label{ub2}
 In this section, we give a barely random algorithm for growing trees, with maximum degree $3$. We beat the lower bound of $2$ for MCM on the performance of any deterministic algorithm, for this class of inputs. The edges are revealed in online fashion such the one new vertex is revealed per edge, (except for the first edge). Any vertex in the input graph has maximum degree $3$.

  \begin{algorithm}[H]
  \caption{Randomized Algorithm for Growing Trees with $\Delta=3$}
  \begin{enumerate}
    \item The algorithm maintains $3$ matchings $M_1,M_2,M_3$.
    \item On receipt of an edge $e$, the processing happens in two phases.
   \begin{enumerate}
    \item The augment phase. Here, the new edge $e$ is added to each $M_i$ such that there is no edge in $M_i$ sharing an end point with $e$.
    \item The switching phase. For $i=2,3$, in order, $e$ is added to $M_i$ and the conflicting edge is discarded, provided it decreases the quantity $\sum_{i,j\in[3],i\neq j}|M_i\cap M_j|$.
   \end{enumerate}
   \item Output a matching $M_i$ with probability $\frac{1}{3}$.
   \end{enumerate}
 \end{algorithm}
 
\begin{theorem}\label{br_trees}
 The barely random algorithm for finding the MCM on growing trees with maximum degree $3$ is $\frac{12}{7}$-competitive.	
\end{theorem}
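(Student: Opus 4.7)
The plan is to carry out the same primal-dual analysis used to prove Lemma~\ref{tree1}, adapted to three matchings and to the max-degree-$3$ setting. Since the algorithm outputs each of $M_1,M_2,M_3$ with probability $1/3$, it suffices to distribute, for each matching $M_i$, the primal charge $1$ of every edge in $M_i$ among its endpoints so that for every graph edge $e\equiv(u,v)$ we have $\sum_{i=1}^{3}(y_u^i+y_v^i)\geq 7/4$. Weak duality and averaging then give $\opt\leq (4/7)\sum_i |M_i|=(12/7)\,\alg$. Writing the bound as $2-\epsilon$ with $\epsilon=1/4$, I note that in contrast to the $28/15$ analysis, which needed a surplus $2+\epsilon$, here one is allowed a small deficit below $2$, and the role of $\epsilon$ is to quantify how much deficit can be absorbed.

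The first step is to prove max-degree-$3$ analogues of Lemma~\ref{internal}. Define an edge to be \emph{bad} if its endpoints are jointly covered by only $2$ of the $3$ matchings. The two claims I would prove by joint induction on arrival order are: (i) every internal edge is covered by at least $3$ matchings counted with multiplicity; and (ii) a suitable forbidden configuration rules out simultaneous bad edges on certain short paths. The max-degree-$3$ restriction is what makes (ii) provable: each vertex has at most two prior neighbors when a new edge arrives, so the switching phase has enough freedom to prevent bad edges from clustering. I would also reuse the reduction from Section~\ref{ub3} that lets us assume every edge not belonging to any matching is a leaf edge, after checking that the justification still goes through with three matchings and the degree bound.

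The second step is to define ranks exactly as in the proof of Lemma~\ref{tree1} and to set up a rank-based asymmetric distribution of primal charge. For an edge in a single matching, depending on the structure of its neighborhood, one of $0,\epsilon,2\epsilon$ of its primal charge is shipped to the lower-ranked endpoint and the rest to the higher-ranked one; an edge in two matchings sends at most $3\epsilon$ downward; and an edge in three matchings sends its entire charge upward. The verification then breaks into a case analysis in the spirit of the six cases in Lemma~\ref{tree1}, stratified by (a) the number of matchings containing $e$, (b) whether $e$ is bad, and (c) the ranks of its endpoints. In each case the structural lemmas supply enough neighboring edges in matchings to bring $\sum_i(y_u^i+y_v^i)$ up to $7/4$, and setting $\epsilon=1/4$ tightly balances the resulting constraints.

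The main obstacle is the tightness forced by the degree-$3$ restriction: in the $28/15$ analysis a vertex can receive charges from several distinct lower-ranked neighbors and hence has plenty of slack, whereas here every vertex has at most three incident edges, so the structural lemmas must be proved with essentially no margin and the charge distribution designed correspondingly carefully; a single mis-routed $2\epsilon$ can push some edge below $7/4$. The subtle parts of the argument will therefore be (i) pinning down the correct forbidden configuration for bad edges under the $3$-matching, degree-$3$ regime, and (ii) verifying inductively that the switching phase of the algorithm actually enforces it.
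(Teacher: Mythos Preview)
Your high-level framework (primal-dual, target $\sum_i (y_u^i + y_v^i) \geq 7/4$) matches the paper, but you are importing the heavy machinery of the general growing-tree analysis into a case where the paper gets by with something much simpler. The paper's degree-$3$ proof is \emph{local}: no ranks, no post-hoc global assignment. The charge rule is just: if $e\equiv(u,v)\in M_i$ and some bad edge is incident on $u$, send $3/4$ of $e$'s unit charge to $y_u$ and $1/4$ to $y_v$; otherwise split $1/2$--$1/2$. The accompanying structural lemmas are different from Lemma~\ref{internal} and tailored to degree $3$: (i) no edge can have bad edges at \emph{both} endpoints, and (ii) at a degree-$3$ vertex at most one incident edge can have a bad neighbor, because only one of the three could ever have been in two matchings during the run. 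The case analysis is then just three cases (bad; single matching and not bad; in $\geq 2$ matchings), not six.

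You also miss the key simplification that makes locality possible: in the three-matching, degree-$3$ regime, \emph{every} edge belongs to at least one matching. When a new edge arrives, its old endpoint has at most two prior incident edges; if together they block all three matchings, one of them is in two matchings and the switching phase necessarily kicks in. So the ``edge in no matching'' case does not exist, and the leaf-edge reduction you plan to reuse is vacuous here. The paper introduces ranks only for general growing trees precisely because there an edge \emph{can} land in no matching, which forces the non-local argument (Appendix~\ref{sa_gt} gives the example).

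Your rank-based plan could perhaps be pushed through, but it would mean re-proving an analogue of Lemma~\ref{internal} and re-running a multi-case rank analysis, when a one-line local rule plus a short case split already suffice. In short: the paper's route is shorter and explains why degree $3$ is special; yours would be more uniform with the $28/15$ argument but obscures the simplification.
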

 
We make following simple observations.
\begin{itemize}
 \item There cannot be an edge which is not in any matching.
 \item Call an edge ``bad'' if its end points are covered by only two matchings. Indeed, an edge whose none of the end points are leaves, cannot be ``bad''.
 \item An edge incident on a vertex of degree $3$ cannot be ``bad'', because there will be a distinct edge belonging to every matching.
 \end{itemize}
 We begin by proving a few simple lemmas regarding the algorithm.
 \begin{lemma}
 There cannot be ``bad'' edges incident on both vertices of an edge.
 \end{lemma}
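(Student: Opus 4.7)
The plan is a contradiction argument that tracks the matching memberships of $e_1$ and $e_2$ throughout the input. Suppose $e_1=(u,v)$ and $e_2=(v,w)$ are both bad. Since an edge incident on a degree-$3$ vertex cannot be bad, $u,v,w$ all have degree at most two; because both $e_1$ and $e_2$ meet $v$, the degree of $v$ equals two, and $e_1,e_2$ are its only incident edges. Without loss of generality $e_1$ is revealed before $e_2$, so by the growing-tree property the endpoint $w$ is new when $e_2$ arrives, and $e_1$ is the only edge adjacent to $e_2$ at that moment.

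Write $S_e$ for the set of matchings containing an edge $e$. I would first pin down $(|S_1|,|S_2|)$ immediately after $e_2$ is processed. By the earlier observation that no edge is outside every matching, $|S_1|\ge 1$ when $e_2$ arrives. The augment phase places $e_2$ in $\{1,2,3\}\setminus S_1$, so $S_1$ and $S_2$ become disjoint and cover $\{1,2,3\}$. A short computation, using that an edge in $k$ matchings contributes $k(k-1)$ to $\sum_{i\neq j}|M_i\cap M_j|$, shows that replacing $e_1$ by $e_2$ inside some $M_i$ changes this quantity by $2(|S_2|-|S_1|+1)$; hence under the constraint $|S_1|+|S_2|=3$ a beneficial swap can occur only in the configuration $(|S_1|,|S_2|)=(3,0)$, and a single such swap yields $(2,1)$, after which the next candidate swap has change $0$ and is blocked. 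Thus in all cases, after processing $e_2$ we have $(|S_1|,|S_2|)\in\{(1,2),(2,1)\}$, still disjoint and covering $\{1,2,3\}$.

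The crucial step is to argue that $S_1$ and $S_2$ are then frozen. The only later edges that can modify them are edges at $u$ (for $S_1$) or at $w$ (for $S_2$), because $v$'s two slots are full. If an edge $e_0$ arrives at $u$, its sole adjacent edge is $e_1$, the augment sets $|S_{e_0}|=3-|S_1|$, and the analogous switch-cost formula evaluates to $2((3-|S_1|)-|S_1|+1)=8-4|S_1|$, which is non-negative whenever $|S_1|\le 2$; hence no eviction of $e_1$ ever happens, and a symmetric argument for an edge at $w$ rules out eviction of $e_2$. Consequently $v$ remains covered by all three matchings, so the union of matchings incident on $\{u,v\}$ has size $3$, contradicting the badness of $e_1$. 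The main obstacle is the bookkeeping for the switch-cost formula and, in particular, verifying that the single troublesome configuration $|S_1|=3$ at the moment $e_2$ arrives is defused by $e_2$'s own switch phase before any downstream edge at $u$ can exploit it.
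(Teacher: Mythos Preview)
You have misread the statement. The lemma concerns an arbitrary edge $e=(p,q)$ and asserts that one cannot have a bad edge $b_1$ incident on $p$ \emph{and} a bad edge $b_2$ incident on $q$; the two bad edges sit on \emph{opposite sides} of $e$ and do not share a vertex. Your setup instead takes two bad edges $e_1=(u,v)$ and $e_2=(v,w)$ sharing the vertex $v$, which is a different configuration. Concretely, the situation the paper rules out is a path $p'{-}p{-}q{-}q'$ with leaf edges $b_1=(p',p)$ and $b_2=(q,q')$ both bad and the central edge $e=(p,q)$ in between; here $b_1$ and $b_2$ are not adjacent, and your freezing argument for $S_1\cup S_2=\{1,2,3\}$ at the shared vertex $v$ never gets off the ground because there is no shared vertex. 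For instance, the final pattern $S_e=\{1\}$, $S_{b_1}=\{2\}$, $S_{b_2}=\{3\}$ would make $b_1$ and $b_2$ bad while $e$ is not, and nothing in your proof excludes this---it must be ruled out by tracking how $S_e$ evolves.

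The paper's argument is organised around the central edge $e$: it observes that a bad edge can only be created on a leaf of an edge currently in two or three matchings that later drops to one, and then shows that forcing $e$ down to one matching requires enough extra edges at the \emph{other} endpoint of $e$ to raise its degree to three, whereupon no edge there can be bad. Your switching-cost bookkeeping is sound as far as it goes, and it does correctly establish that two bad edges cannot be adjacent; but that is a strictly weaker (indeed, different) claim, and the non-adjacent case is precisely the one the lemma is about.
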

 \begin{proof}
  Note that a ``bad'' edge is created when an edge is revealed on a leaf node of another edge which belongs to two or three matchings currently and finally belongs to only one matching.
  
  Let $e$ be an edge which currently belongs to three matchings, which means it is the first edge revealed. Now if an edge $e_1$ is revealed on a vertex of $e$, then $e_1$ would be added to one matching, and $e$ would be removed from that matching, (in the switching phase of the algorithm). For $e_1$ to be a bad edge, $e$ should be switched out of one more matching. This can happen only if there are two more edges revealed on the other vertex of $e$. This means there cannot be ``bad'' edges on both sides of $e$.
  
  Let $e$ belongs to two matchings. Then $e$ already has a neighboring edge $e_2$ which belongs to some matching. When $e_1$ is revealed on the leaf vertex of $e$, it will be added to one matching, in the augment phase. Now for $e_1$ to be ``bad'', $e$ should switch out of some matching. This can only happen if there is one more edge $e_3$ revealed on the common vertex of $e$ and $e_2$. Again, the lemma holds.
 \end{proof}

 \begin{lemma}\label{3star}
 If a vertex has three edges incident on it, then at most one of these edges can have a ``bad'' neighboring edge.
 \end{lemma}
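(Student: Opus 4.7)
The plan is to exploit the asymmetric way the switching phase prefers $M_2$ over $M_3$: I would show that among the three edges $e_1,e_2,e_3$ incident on $v$, only the one that is revealed first at $v$ can end up with a bad neighboring edge, and since the order in which edges arrive is total, there is a unique such edge. First I would use the observations already at hand to narrow down where a bad neighbor of $e_i$ can live. Since $v$ has degree $3$ and every edge is in some matching, the three edges at $v$ must partition the three matchings, so $v$ is covered by all three matchings and no edge incident on $v$ is bad. A bad neighbor of $e_i$ must therefore lie at the other endpoint $u_i$, and since bad edges are leaf edges, it must be a leaf edge at $u_i$. A similar partitioning argument rules out $u_i$ of degree $3$ (which would force all three matchings to cover $u_i$, making every edge at $u_i$ non-bad), so $u_i$ has degree exactly $2$ and the unique candidate is the second edge $f_i=(u_i,\ell_i)$ at $u_i$, with $\ell_i$ a leaf. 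Note also that $e_i$ must be revealed before $f_i$, since $f_i$'s leaf endpoint $\ell_i$ is the new vertex introduced by $f_i$'s arrival, forcing $u_i$ to pre-exist and hence to have $e_i$ already.

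The main technical step is a case analysis on the order in which $e_1, e_2, e_3$ are revealed at $v$. I would verify by direct computation of the potential $\sum_{i\neq j}|M_i\cap M_j|$ the following invariant: immediately after its reveal, the first-revealed edge at $v$ lies in all three matchings, drops to two matchings after the second reveal at $v$ (since the $i=2$ swap strictly decreases the potential while the $i=3$ swap produces zero change and is refused), and drops to exactly one matching after the third reveal; the second-revealed edge settles in a single matching ($M_2$) on reveal; and the third-revealed edge settles in a single matching ($M_3$) on reveal. Crucially, reveals of the $f_j$'s are leaf edges adjacent only to their own $e_j$, and so do not perturb the matching state of any $e_i$ with $i\neq j$; hence the invariant is preserved under arbitrary interleavings with the $f_j$'s. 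Consequently, for any $e_i$ that is \emph{not} first-revealed at $v$, the moment $f_i$ arrives $e_i$ is already in exactly one matching $M^*$; the augment phase then places $f_i$ in both matchings other than $M^*$, and the switching phase does not move $f_i$ further (the would-be swap into $M^*$ ejects $e_i$ entirely and strictly increases the potential). Thus $u_i$ is covered by all three matchings, and $f_i$ is not bad.

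Combining the two steps, only the first-revealed edge at $v$ can possess a bad neighbor, and since there is exactly one such edge, at most one of $e_1, e_2, e_3$ has a bad neighboring edge. The main obstacle is the bookkeeping of the potential changes in the switching phase across the several subcases (three positions in the reveal order at $v$, crossed with interleavings of the $f_j$'s), but because disjoint branches of the tree remain non-adjacent the analysis decomposes cleanly into the three possible positions of $e_i$ in the reveal order at $v$, each verified by a short direct calculation.
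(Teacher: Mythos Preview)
Your approach has a genuine gap: the invariant you propose is not true, and in fact the conclusion you aim for — that only the \emph{first}-revealed edge at $v$ can carry a bad neighbor — is false. Two things go wrong. First, the assertion that the first-revealed edge $e_1$ at $v$ ``lies in all three matchings'' on reveal presumes its other endpoint $u_1$ is uncovered; this holds only when $e_1$ is the very first edge of the whole input. In general $u_1$ is an old vertex already covered by one or more matchings, so $e_1$ may start in only one matching. Second, and more seriously, you only account for interleavings with the leaf edges $f_j$, whereas arbitrary edges elsewhere in the tree can be revealed between $e_1$ and $e_2$. In particular, if $u_1$ ends up with degree $3$, edges at $u_1$ can drive $e_1$ down to a single matching before $e_2$ arrives; then $e_2$ is augmented into \emph{two} matchings, and after $f_2$ is revealed and $e_3$ later swaps $e_2$ down to one matching, $f_2$ becomes bad. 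So the second-revealed edge, not the first, carries the bad neighbor. A concrete instance: reveal $e_1=(u_1,v)$, then two more edges at $u_1$, then $e_2=(v,u_2)$, then $f_2=(u_2,\ell_2)$, then $e_3=(v,u_3)$; one checks $e_2$ enters $\{M_2,M_3\}$, and after $e_3$ we have $e_2\in M_3$, $f_2\in M_1$, so $f_2$ is bad.

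The paper's proof takes a different and much shorter route. It uses the characterization (established in the proof of the preceding lemma) that a bad edge can only be created on the leaf endpoint of an edge which, at that moment, is in two or more matchings. Since an edge's matching multiplicity never increases after its own reveal, an $e_i$ can have a bad neighbor only if $e_i$ itself was in $2{+}$ matchings right after its reveal. The third-revealed edge at $v$ is never in $2{+}$ matchings (the first two already cover $v$ in at least two matchings); and if the first-revealed edge was ever in $2{+}$ matchings while its far endpoint still had degree $\le 2$, then it is still in $2{+}$ matchings when the second edge at $v$ arrives, forcing the second into a single matching. Hence at most one of the three can qualify. Your local, order-based bookkeeping cannot replace this, because the state of the first-revealed edge at $v$ depends on the rest of the tree at its far endpoint.
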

\begin{proof}
 Out of the three edges incident on a vertex, only one could have belonged to two matchings at any step during the run of algorithm. Hence, only that edge which belonged to two matchings at some stage during the run of algorithm can have a ``bad'' neighboring edge.
\end{proof}

 \begin{proof} (of Theorem~\ref{br_trees})
$M_1, M_2, M_3$ are valid matchings and hence correspond to valid primal solutions. For each edge $e\equiv(u,v)$ in some matching $M_i$, we distribute a charge of $x_e=1$ amongst dual variables $y_u$ and $y_v$ of its end points. We prove that for each edge $e$, $y_u+y_v\geq \frac{7}{4}$. Thus, $\mathbb{E}[y_u+y_v]\geq \frac{7}{12}$. Hence, this algorithm has a competitive ratio $\frac{12}{7}$. All the dual variables are initialized to $0$. Suppose $e\equiv(u,v) \in M_i$ for some $i\in[3]$. Then distribution of primal charge $x_e$ amongst dual variables $y_u$ and $y_v$ is done as follows. If there is a ``bad'' edge incident on $u$, then edge $e$ transfer $\frac{3}{4}$ of of its primal charge to $y_u$ and rest of it to $y_v$. Else, edge $e$ transfer its primal charge equally between $y_u$ and $y_v$.

We look at three cases and then prove that $y_u+y_v\geq \frac{7}{4}$ for each edge $e\equiv (u,v)$.
\begin{enumerate}
 \item Edge $e\equiv(u,v)$ is ``bad''. $e\in M_i$ for some $i\in [3]$. $e$ will have some neighboring edge $e_1$ such that $e_1\in M_j$ for $j\in [3]$ and $i\neq j$. Let the common vertex between $e$ and $e_1$ be $v$. Then $e_1$ will transfer $\frac{3}{4}$ of its primal charge to $y_v$. Thus, $y_u+y_v=\frac{7}{4}$.
 \item Edge $e\equiv(u,v)$ is present in a single matching and not ``bad''. This case has four sub cases.
 \begin{enumerate}
  \item $e$ has one neighboring edge $e_1$. Then $e_1$ should belong to two matchings.
  \item $e$ has two neighboring edges $e_1$ and $e_2$ both belonging to only one matching. If these are both on the same side of $e$, then at most one of them could have a ``bad'' neighboring edge (by lemma~\ref{3star}). If these are on opposite sides of $e$, then none of them can have a ``bad'' neighboring edge.
  \item $e$ has three neighboring edges $e_1$, $e_2$, and $e_3$, such that $e_1$ and $e_2$ are on one side of $e$, and $e_3$ is on another side of $e$. At most one of $e_1$ and $e_2$ can have a ``bad'' neighboring edge (by lemma~\ref{3star}).
  \item $e$ has four neighboring edges $e_1$, $e_2$, $e_3$, and $e_4$, such that $e_1$ and $e_2$ are on one side of $e$, and $e_3$ and $e_4$ are on another side of $e$.
 \end{enumerate}
 We can see that in all the above sub cases, $y_u+y_v\geq \frac{7}{4}$.
 \item Edge $e\equiv (u,v)$ belongs to two or three matchings. Then, $y_u+y_v\geq \frac{7}{4}$ trivially.
\end{enumerate}
This proves that we have a $\frac{12}{7}$-competitive randomized algorithm for finding MCM on growing trees with maximum degree $3$.
\end{proof}

\subsection{Example showing need of non-local analysis}\label{sa_gt}
Consider input graph as a $4$-regular tree with large number of vertices, and an extra edge on every vertex other than the leaf vertices. Every edge other than the extra edges will belong some matching. For every edge that belongs to some matching, there will one edge on each of its end points which does not belong to any matching. If the rule for distributing primal charge among dual variables is similar to one described in section~\ref{ub2}, then for each edge belonging to some matching will transfer its primal charge equally amongst both its end points. For each edge which does not belong to any matching, $y_u+y_v=2$, which will imply only a competitive ratio of $2$. We wish to get a competitive ratio better than $2$. So we need some other idea.

\subsection{Proof of Lemma~\ref{internal}}\label{pf_inbad}
\begin{proof} Consider an edge $(u,v)$ revealed at $u$. 
\begin{enumerate}
 \item When revealed it is not put in any matching. This means that there are four covered edges incident on $u$. (Call an edge covered if it belongs to some matching.) This situation cannot change as more edges are revealed. Thus the edge will remain covered by four matchings, and can never become a bad edge.
 \item When revealed it is put in one matching. This means that there are three matching edges on at least two covered edges incident on $u$. If there were three covered edges incident on $u$ then they remain covered edges. So suppose otherwise. Then there are two covered edges of which one is in two matchings. Hence there will always be three matching edges covering $u$. If an edge is revealed at $v$ then there will be four matching edges covering the given edge. The edge may become bad if $v$ stays a leaf and  if one of the matchings on the edge with two of them, switches.
 \item When revealed it is put in two matchings. Then there are two matching edges at $u$ and at least one covered edge. If there are two covered edges, they remain so. Of the two copies of the edge in matchings, one may switch to a new edge but will always remain adjacent to this edge. Hence there will always be three matching edges  covering $u$. If an edge is revealed at $v$ then there will be four matching edges covering the given edge. The edge may become bad if $v$ stays a leaf and  if one of the matchings on the edge with two of them, switches.
 \item When revealed it is put in three matchings. Then there is one covered edge at $u$. If one more edge is now revealed on $u$, then we are back to case $3$. If a new edge is revealed on $v$, it replaces $(u,v)$ in one of the matchings. Now, even if more edges are revealed on either side of $(u,v)$, it continues to be covered by four matchings.
 \item When revealed it is put in four matchings. If a new edge is revealed either on $u$ or $v$, then this case reduces to case $2$.
\end{enumerate}
This completes the proof of the first part of lemma.

 For the second part of lemma, consider a leaf edge present on each of the vertices $p,q,$ and $r$. Suppose the leaf edge incident on $q$ is bad. When this edge was revealed, there must have been some edge incident on $q$, either $(p,q)$ or $(q,r)$, which belonged to two matchings. Wlog, assume $(p,q)$ belonged to two matchings. Then for a matching to switch out this edge, there need to be three edges incident on $p$, and hence the leaf edge incident on $p$ cannot be a bad edge.

\end{proof}

\section{Proof of lemmas from section~\ref{sec_lb_mwm}}\label{sec_lemmas}

\begin{lemma}\label{lem_f0}
For every $w>0$, $f_0(w)>1/\alpha$.
\end{lemma}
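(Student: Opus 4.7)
The plan is to prove the statement by contradiction, exploiting the fact that a randomized local algorithm must accept ``free'' edges with reasonable probability, since otherwise it loses too much on trivial inputs.

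Suppose, for contradiction, that $f_0(w) \leq 1/\alpha$ for some $w > 0$. The adversarial input I would use is extremely simple: for a large integer $N$, reveal $N$ pairwise vertex-disjoint edges, each of weight $w$. When each such edge arrives, the algorithm's current matching has no edge adjacent to it (the graph is a disjoint union of edges), so by the definition of a randomized local algorithm the edge is accepted with probability exactly $f_0(w)$, and (by the independence assumption built into the definition of locality) these $N$ acceptance events are mutually independent.

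Consequently, the expected weight of the algorithm's final matching is $\mathbb{E}[\alg] = N \cdot f_0(w) \cdot w$, while the offline optimum matching has weight $\opt = N w$. Hence
\[
\frac{\opt}{\mathbb{E}[\alg]} \;=\; \frac{1}{f_0(w)} \;\geq\; \alpha \;>\; \beta,
\]
contradicting the assumption that $\mathcal{A}$ has competitive ratio $\beta$. Therefore $f_0(w) > 1/\alpha$ for every $w > 0$.

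There is essentially no obstacle here: the argument only uses the definition of $f_0$, the locality/independence property spelled out in the characterization of local randomized algorithms, and the definition of competitive ratio. The one subtlety worth flagging explicitly in the writeup is that we really do need the independence clause of the definition of a randomized local algorithm (not just that the probability depends only on the weights), since otherwise correlations between the $N$ acceptance events could in principle inflate $\mathbb{E}[\alg]$; but locality as stated rules this out, so a single disjoint-edges construction suffices.
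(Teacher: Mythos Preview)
Your argument is correct and follows the same contradiction idea as the paper, but it is over-engineered in two ways. First, the paper simply takes $N=1$: a single edge of weight $w$ already gives $\mathbb{E}[\alg]=wf_0(w)\leq w/\alpha<w/\beta$ while $\opt=w$, which immediately contradicts $\beta$-competitiveness. Taking $N$ disjoint copies adds nothing, since the ratio $\opt/\mathbb{E}[\alg]=1/f_0(w)$ is independent of $N$. Second, your remark about needing the independence clause is misplaced: the expected weight $\mathbb{E}[\alg]=\sum_i w\Pr[\text{edge }i\text{ accepted}]$ follows from linearity of expectation alone, and each marginal probability equals $f_0(w)$ just by the definition of $f_0$; no independence between acceptance events is required.
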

\begin{proof}
If not, then a single edge of weight $w$ results in algorithm's expected cost $wf_0(w)\leq w/\alpha<w/\beta$, whereas the optimum is $w$. This contradicts $\beta$-competitiveness.
\end{proof}

\begin{lemma}\label{lem_f1_lb}
For every $w_1$ and $w\leq w_1/\alpha$, $f_1(w_1,w)=0$.
\end{lemma}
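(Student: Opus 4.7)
The plan is to proceed by contradiction. Suppose $f_1(w_1,w)=p>0$ for some pair with $w\le w_1/\alpha$, and construct an adversarial input on which the expected algorithm weight falls below $\opt/\beta$.

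My first attempt is the minimal two-edge input: reveal $e_1=(u,v)$ of weight $w_1$, then $e_2=(v,x)$ of weight $w$, which conflicts with $e_1$ at $v$. Writing $a=f_0(w_1)$ and $b=f_0(w)$ (both strictly greater than $1/\alpha$ by Lemma~\ref{lem_f0}), the expected algorithm weight is
\[E[\alg]=a(1-p)w_1+\bigl(ap+(1-a)b\bigr)w,\]
while $\opt=w_1$. Imposing $E[\alg]\geq w_1/\beta$ and substituting $w=w_1/\alpha$ yields (in the cleanest case $a=b=1$) the bound $p\le\frac{1-1/\beta}{1-1/\alpha}$, which is strictly less than $1$ since $\beta<\alpha$, but is not $0$. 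So the two-edge input alone cannot settle the lemma; genuine amplification is required.

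To amplify, I extend the input to an alternating path $e_1,e_2,\ldots,e_{2N+1}$ on vertices $v_0,\ldots,v_{2N+1}$ with $e_i=(v_{i-1},v_i)$, where odd-indexed edges carry weight $w_1$ and even-indexed edges carry weight $w$. The optimum matching picks all odd edges, giving $\opt=(N+1)w_1$. The key observable is $\pi_k$, the probability that $e_k$ belongs to the matching immediately after $e_k$ is processed, which satisfies the recursion $\pi_{k+1}=\pi_k\cdot f_1(w(e_k),w(e_{k+1}))+(1-\pi_k)$, depending only on $p=f_1(w_1,w)$ and $q=f_1(w,w_1)$. From this one writes $E[\alg]$ as a telescoping sum in closed form and shows that for any $p>0$ the per-edge expected gain of $\alg$ is strictly smaller than $w_1/\beta$ once $N$ is large enough, producing the desired contradiction.

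The main obstacle is the coupled analysis in $p$ and $q$: I cannot control $p$ alone without a simultaneous lower bound on $q$. This I would obtain from a short auxiliary input, namely reveal an edge of weight $w$ followed by one of weight $w_1$, which by an analysis analogous to Lemma~\ref{lem_f0} forces $q\ge1/\gamma$ (using the constant $\gamma$ already defined in the excerpt). Plugging this lower bound on $q$ into the path recursion, the resulting inequalities on the weights $w_1,w,w_1,w,\ldots$ should exactly fit the pattern of Lemma~\ref{lem_varadaraja}: one extracts a sequence $(\overline{x_n})$ satisfying $\beta\overline{x_n}\geq\sum_{i=1}^{n+1}\overline{x_i}+\overline{x_{n+1}}$ for every $n$ whenever $p>0$, which forces $\beta\geq3+2\sqrt2=\alpha$. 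This contradicts $\beta<\alpha$, so $p=0$ as claimed. The delicate step will be the algebraic simplification in the path recursion; getting the telescoping identity clean enough that Lemma~\ref{lem_varadaraja} applies verbatim is where I expect most of the work to lie.
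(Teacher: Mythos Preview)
Your plan has a genuine gap, and misses a much simpler argument. The paper's proof is a single star: give one edge of weight $w_1$ at a center vertex, then a long sequence of edges of weight $w$ all incident to the same center. If $p=f_1(w_1,w)>0$, then whenever the algorithm currently holds the weight-$w_1$ edge, each new weight-$w$ edge evicts it with probability $p$; since also $f_0(w)>0$ by Lemma~\ref{lem_f0}, after many weight-$w$ edges the algorithm almost surely holds an edge of weight $w$. Thus $E[\alg]\to w$ while $\opt=w_1\ge\alpha w>\beta w$, contradicting $\beta$-competitiveness. No recursion, no Lemma~\ref{lem_varadaraja}.

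Your path construction, by contrast, does not close. First, the recursion $\pi_{k+1}=\pi_k\,f_1(\cdot,\cdot)+(1-\pi_k)$ tacitly assumes $f_0\equiv1$, which you have not established. Second, the auxiliary bound ``$q\ge1/\gamma$'' is neither what Lemma~\ref{lem_f0} gives nor what Lemma~\ref{lem_f1_ub} gives: Lemma~\ref{lem_f1_ub} would need $w_1\ge\gamma w$, but from $w\le w_1/\alpha$ you only get $w_1\ge\alpha w$, and $\gamma>\alpha$ whenever $\beta$ is close to $\alpha$. Third, and most seriously, the appeal to Lemma~\ref{lem_varadaraja} does not fit. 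That lemma needs an infinite sequence $(x_n)$ of positive reals with $\beta x_n\ge\sum_{i\le n+1}x_i+x_{n+1}$ for \emph{every} $n$; since $\sum_{i\le n+1}x_i$ grows without bound while $\beta x_n$ is a single term, the $x_n$ must themselves grow geometrically. Your path uses only two weight values $w_1,w$, so no natural choice of $x_n$ (the weights, the $\pi_k$, or any telescoped quantity bounded termwise) can satisfy this for all $n$. The ``telescoping identity'' you hope for is not there to be found. Use the star instead.
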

\begin{proof}
If $f_1(w_1,w)>0$ for some $w_1$ and $w$ such that $w\leq w_1/\alpha$, then the adversary's input is a star, with a single edge of weight $w_1$ followed by a large number $n$ of edges of weight $w$. Regardless of whether the first edge of weight $w_1$ is accepted or not, the algorithm holds an edge of weight $w$, with probability approaching $1$ as $n\rightarrow\infty$, in the end. The optimum is $w_1\geq\alpha w>\beta w$, thus, contradicting $\beta$-competitiveness.
\end{proof}

\begin{lemma}\label{lem_f1_ub}
For every $w_1$, and $w\geq\gamma w_1$, $f_1(w_1,w)\geq1/\alpha$.
\end{lemma}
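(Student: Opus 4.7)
The plan is to argue by contradiction. Suppose $q := f_1(w_1, w) < 1/\alpha$ for some $w_1$ and some $w \geq \gamma w_1$; I will construct an adversarial input on which the algorithm fails $\beta$-competitiveness. The input is a \textit{star} centered at a vertex $v$: reveal $n$ edges $e_1, \ldots, e_n$, each of weight $w_1$ and each incident on $v$, and then reveal one final edge $e_{n+1}$ of weight $w$ also incident on $v$. Because every pair of revealed edges shares $v$, the algorithm's matching contains at most one edge throughout, and the optimum matching has weight exactly $w$ regardless of $n$.

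Let $p := f_0(w_1)$ and $r := f_0(w)$; both are strictly positive by Lemma \ref{lem_f0}. The algorithm's matching stays empty throughout the first $n$ rounds if and only if it rejects each of the $n$ weight-$w_1$ edges, an event of probability exactly $(1-p)^n$; once any such edge is accepted, subsequent updates can only exchange the held edge for another edge of weight $w_1$. Hence with probability $1 - (1-p)^n$ the algorithm holds some weight-$w_1$ edge on $v$ when $e_{n+1}$ arrives; that edge conflicts with $e_{n+1}$ and is replaced by it with probability exactly $q$, while from the empty state $e_{n+1}$ is accepted with probability $r$. Conditioning on the state immediately before the last round gives
\[
E[\alg] \;=\; \bigl(1 - (1-p)^n\bigr)\bigl[(1-q)\, w_1 + q\, w\bigr] \;+\; (1-p)^n\, r\, w.
\]

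Dividing by $\opt = w$ and sending $n \to \infty$ eliminates the second term, yielding $\lim_n E[\alg]/w = w_1/w + q(1 - w_1/w)$. This expression is linear and strictly increasing in both $q$ (coefficient $1 - w_1/w > 0$) and $w_1/w$ (coefficient $1 - q > 0$), so the hypotheses $q < 1/\alpha$ and $w_1/w \leq 1/\gamma$ give
\[
\lim_{n \to \infty} E[\alg]/w \;<\; \frac{1 - 1/\alpha}{\gamma} + \frac{1}{\alpha} \;=\; \frac{1}{\beta},
\]
where the last equality is a one-line consequence of $\gamma = \beta(\alpha-1)/(\alpha-\beta)$. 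Since the limit is strictly below $1/\beta$, the same strict inequality holds for all sufficiently large $n$, so $E[\alg] < \opt/\beta$ for those $n$, contradicting $\beta$-competitiveness.

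The conceptual point that makes this work—and the thing a naive two-edge adversary misses—is that revealing $e_1$ alone followed by a heavy $e_2$ does not force a contradiction, because the branch in which the algorithm rejects $e_1$ and then accepts $e_2$ via $f_0(w)$ can keep the expected ratio at or above $1/\beta$ whenever $f_0(w)$ happens to exceed $1/\beta$. Stacking $n$ copies of the light edge as a star drives the probability of the ``holding-a-weight-$w_1$-edge'' branch to $1 - (1-p)^n \to 1$; in this dominant branch the algorithm is forced to invoke $f_1(w_1, w) < 1/\alpha$, and the identity $(1-1/\alpha)/\gamma + 1/\alpha = 1/\beta$ makes the upper bound tight, so any strict slack in $q < 1/\alpha$ translates directly into a strict violation of $\beta$-competitiveness.
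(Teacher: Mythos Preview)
Your proof is correct and follows essentially the same approach as the paper: a star with $n$ edges of weight $w_1$ followed by one edge of weight $w$, then letting $n\to\infty$ so that the algorithm almost surely holds a light edge when the heavy one arrives. The only cosmetic difference is that you compute $E[\alg]$ exactly and finish via the identity $(1-1/\alpha)/\gamma + 1/\alpha = 1/\beta$, whereas the paper bounds $E[\alg]$ and rearranges $\beta$-competitiveness into $w<\gamma w_1$; these are the same inequality in two equivalent forms.
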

\begin{proof}
Suppose $f_1(w_1,w)<1/\alpha$ for some $w_1$ and $w$ such that $w\geq\gamma w_1$. The adversary's input is a star, with a large number $n$ of edges of weight $w_1$, followed by a single edge of weight $w$. The algorithm must hold an edge of weight $w_1$, before the edge of weight $w$ is given, with probability approaching $1$ as $n\rightarrow\infty$. Therefore, in the end, the algorithm's cost is $w$ with probability less than $1/\alpha$, and at most $w_1$ otherwise. Thus, the expected weight of the edge held by the algorithm is less than $w/\alpha+w_1(1-1/\alpha)$, whereas the adversary holds the edge of weight $w$. Since the algorithm is $\beta$-competitive and $\beta<\alpha$, we have
\[w<\beta\left[\frac{1}{\alpha}\cdot w+\left(1-\frac{1}{\alpha}\right)w_1\right]\text{ }\Rightarrow\text{ }\left(1-\frac{\beta}{\alpha}\right)w<w_1\beta\left(1-\frac{1}{\alpha}\right)\text{ }\Rightarrow\text{ }w<\gamma w_1\]
This is a contradiction.
\end{proof}

\begin{proof}[Proof of Lemma~\ref{lem_xy}]
 By Lemma \ref{lem_f1_lb}, $f_1(w_1,w_1/\alpha)=0$, and by Lemma \ref{lem_f1_ub}, $f_1(w_1,$ $\gamma w_1)\geq1/\alpha$. Take a finite sequence of points, increasing from $w_1/\alpha$ to $\gamma w_1$, such that the difference between any two consecutive points is at most $\epsilon$, and observe the value of $f_1(w_1,z)$ at each such point $z$. Since $f_1(w_1,w_1/\alpha)<\delta$ and $f_1(w_1,\gamma w_1)>\delta$, there must exist two consecutive points in the sequence, say $y$ and $x$, such that $f_1(w_1,y)\leq\delta$ and $f_1(w_1,x)\geq\delta$. Furthermore, $x-y\leq\epsilon$ and $w_1/\alpha\leq y\leq x\leq\gamma w_1$, by construction.
\end{proof}

\begin{lemma}\label{lem_preempt}
For every $i$, $j$, the probability that $a^i_j$ is not matched to any vertex in $A_{i+1}$, in the $j^{\text{\tiny{th}}}$ sub phase of the $i^{\text{\tiny{th}}}$ phase, just before the edge $(a^i_j,b^i_j)$ is revealed, is at most $(1-\delta)^{m-j+1}$.
\end{lemma}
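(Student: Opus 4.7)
The strategy is to identify at least $m - j + 1$ edges within sub phase $j$ that act as sequential trials, each of which matches $a^i_j$ to some vertex of $A_{i+1}$ with probability at least $\delta$ conditional on all prior trials failing.

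First I would describe the state of the relevant vertices at the start of sub phase $j$. Since the only edges incident on $a^i_j$ revealed so far are those in $J_{i-1}$ (of weight $x_{i-1}$), the vertex $a^i_j$ is either unmatched or matched to some $A_{i-1}$-vertex via a weight-$x_{i-1}$ edge. Next, let $U_j$ denote the set of $A_{i+1}$-vertices that are unmatched at the start of sub phase $j$. Each earlier sub phase $k<j$ can leave at most one $A_{i+1}$-vertex matched (namely the one, if any, to which $a^i_k$ is matched at the end of sub phase $k$), so $|U_j|\geq m-(j-1)=m-j+1$. Moreover, for each $a^{i+1}_{j'}\in U_j$, the vertex $a^{i+1}_{j'}$ remains unmatched throughout sub phase $j$ up to the moment $(a^i_j,a^{i+1}_{j'})$ is revealed, since no other edge incident on $a^{i+1}_{j'}$ appears earlier in the sub phase.

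The key estimate is as follows. Fix $j'$ with $a^{i+1}_{j'}\in U_j$, and condition on the event that $a^i_j$ is not matched to any $A_{i+1}$-vertex just before $(a^i_j,a^{i+1}_{j'})$ is revealed. Under this conditioning, $a^i_j$ is either unmatched or matched to an $A_{i-1}$-vertex via a weight-$x_{i-1}$ edge, while $a^{i+1}_{j'}$ is unmatched, so the new edge has at most one conflicting edge. Its acceptance probability is therefore either $f_0(x_i)>1/\alpha>\delta$ by Lemma~\ref{lem_f0}, or $f_1(x_{i-1},x_i)\geq\delta$ by the construction of the sequence $(x_i)$ via Lemma~\ref{lem_xy}. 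In either case, acceptance matches $a^i_j$ to $a^{i+1}_{j'}$, so the conditional success probability is at least $\delta$.

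Finally, processing the trial edges $(a^i_j,a^{i+1}_{j'})$ with $a^{i+1}_{j'}\in U_j$ in the order they are revealed and multiplying conditional failure probabilities yields an upper bound of $(1-\delta)^{|U_j|}\leq(1-\delta)^{m-j+1}$ on the probability that all trials fail, which is exactly the probability that $a^i_j$ is not matched to any $A_{i+1}$-vertex just before $(a^i_j,b^i_j)$ arrives. The main subtlety is verifying the per-trial $\delta$-bound conditional on the entire prior history rather than merely marginally; this works because the failure history determines deterministically whether $a^i_j$ is unmatched or $A_{i-1}$-matched, and in both of these sub-cases the bound applies uniformly.
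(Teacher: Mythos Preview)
Your proposal is correct and follows essentially the same approach as the paper's proof: identify the at least $m-j+1$ unmatched vertices in $A_{i+1}$, argue that each corresponding edge is accepted with conditional probability at least $\delta$ (using $f_1(x_{i-1},x_i)\geq\delta$ when $a^i_j$ is already matched, and $f_0(x_i)>1/\alpha>\delta$ via Lemma~\ref{lem_f0} when it is not), and multiply the failure probabilities. Your write-up is in fact somewhat more careful than the paper's about justifying $|U_j|\geq m-j+1$ and about the conditioning on the full prior history.
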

\begin{proof}
Consider the $j^{\text{\tiny{th}}}$ sub phase of the $i^{\text{\tiny{th}}}$ phase, in which, the edges $(a^i_j,a^{i+1}_1),$ $(a^i_j,a^{i+1}_2),$ $\ldots,(a^i_j,a^{i+1}_m),$ $(a^i_j,b^i_j)$ are revealed. Before this sub phase, the number of unmatched vertices in $A_{i+1}$ must be at least $m-j+1$. Call this set $A'$. If $a^i_j$ was matched at the end of phase $i-1$, then the weight of edge incident on $a^i_j$, at the beginning of the current phase, is $x_{i-1}$. For each vertex $a^{i+1}_{j'}\in A'$, given that $a^i_j$ did not get matched to any of $a^{i+1}_1,\ldots,a^{i+1}_{j'-1}$, the probability that $a^i_j$ gets matched to $a^{i+1}_{j'}$ is $f_1(x_{i-1},x_i)\geq\delta$. Thus, the probability of $a^i_j$ not getting matched to any vertex in $A'\subseteq A_{i+1}$, in the current sub phase, is at most $(1-\delta)^{m-j+1}$. Note that this argument applies even if $a^i_j$ was not matched at the beginning of the current phase, due to Lemma \ref{lem_f0} and since $\delta<1/\alpha<f_0(x_i)$.
\end{proof}

\begin{proof} [Proof of Lemma~\ref{lem_Y}]
First, observe that the sequence in which the edges are revealed ensures that no edge adjacent to any edge $e\in M_i$ appears after $e$. Thus, if $e$ is picked when it is revealed, it is never preempted, and is maintained till the end of input. Hence, $Y_i$ is also the set of edges of $M_i$ that were picked as soon as they were revealed.

When the edge $(a^i_j,b^i_j)$ is given, the algorithm picks it with probability at most $\delta$ (since $f_1(x_i,y_i)\leq\delta$) if $a^i_j$ was matched to some vertex in $A_{i+1}$. By Lemma \ref{lem_preempt}, the probability of $a^i_j$ not being matched to any vertex in $A_{i+1}$ is at most $(1-\delta)^{m-j+1}$. Thus, the probability that the edge $(a^i_j,b^i_j)$ appears in $Y_i$ is at most $\delta+(1-\delta)^{m-j+1}$. Hence, $E[|Y_i|]\leq\delta m+\sum_{j=1}^m(1-\delta)^{m-j+1}\leq\delta m+(1-\delta)/\delta$.
\end{proof}

\begin{proof} [Proof of Lemma~\ref{lem_X}]
Consider the set $A'$ of all vertices $a^{i+1}_j$, which remain matched to some vertex in $A_i$ at the end of input. Then clearly, $|A'|=|X_i|$. Let us find the probability that a vertex $a^{i+1}_j$ appears in $A'$. For this to happen, it is necessary that $a^{i+1}_j$ not be matched to any vertex in $A_{i+2}$, in the $j^{\text{\tiny{th}}}$ sub phase of the $i+1^{\text{\tiny{st}}}$ phase. By lemma \ref{lem_preempt}, this happens with probability at most $(1-\delta)^{m-j+1}$. Thus, $E[|X_i|]=\sum_{j=1}^m(1-\delta)^{m-j+1}\leq(1-\delta)/\delta$.
\end{proof}

\begin{lemma}\label{lem_preempt1}
For every $j$, the probability that $a^{n+1}_j$ is not matched to any vertex in $A_n\cup A_{n+2}$, in the $j^{\text{\tiny{th}}}$ sub phase of the $n+1^{\text{\tiny{st}}}$ phase, just before the edge $(a^{n+1}_j,b^{n+1}_j)$ is revealed, is at most $(1-\delta)^{m-j+1}$.
\end{lemma}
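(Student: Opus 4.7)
Plan: My approach is to mirror the proof of Lemma~\ref{lem_preempt}, splitting on the status of $a^{n+1}_j$ at the end of phase $n$. Since no edge incident on $a^{n+1}_j$ is revealed during sub phases $1,\dots,j-1$ of phase $n+1$ (those sub phases only touch $a^{n+1}_{j'}$ for $j'<j$ and their neighbors), this status equals the status of $a^{n+1}_j$ at the start of sub phase $j$. There are two possibilities: $a^{n+1}_j$ is matched to some $a^n_{j''}$, or $a^{n+1}_j$ is unmatched. By the law of total probability, it suffices to bound each conditional probability by $(1-\delta)^{m-j+1}$.

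In the first case, the only edges revealed before $(a^{n+1}_j,b^{n+1}_j)$ during sub phase $j$ are of the form $(a^{n+1}_j,a^{n+2}_{j'})$, $j'=1,\dots,m$. If such an edge is picked by the algorithm then (whatever it preempts) $a^{n+1}_j$ becomes or stays matched to a vertex of $A_{n+2}$. Consequently $a^{n+1}_j$ is continuously matched to a vertex of $A_n\cup A_{n+2}$ throughout sub phase $j$, so the conditional probability of the target event is zero in this case.

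In the second case, I adapt Lemma~\ref{lem_preempt} to our situation. Sub phases $1,\dots,j-1$ of phase $n+1$ can add at most $j-1$ vertices of $A_{n+2}$ to any matching the algorithm may maintain, so the set $A'$ of vertices of $A_{n+2}$ still unmatched at the start of sub phase $j$ satisfies $|A'|\geq m-j+1$. I then apply the chain rule over $j'=1,\dots,m$: conditional on no edge $(a^{n+1}_j,a^{n+2}_{j''})$ with $j''<j'$ having been picked when revealed, $a^{n+1}_j$ is still unmatched (this is where I use that we are in the second case), and for every $a^{n+2}_{j'}\in A'$ the vertex $a^{n+2}_{j'}$ is also unmatched at the time its edge to $a^{n+1}_j$ is revealed, so that edge has no conflicts and is accepted with probability $f_0(x_n)$. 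By Lemma~\ref{lem_f0}, $f_0(x_n)>1/\alpha>\delta$, so each conditional factor of non-acceptance corresponding to an element of $A'$ is at most $1-\delta$, while factors corresponding to $a^{n+2}_{j'}\notin A'$ are trivially at most $1$. The chain-rule product is thus at most $(1-\delta)^{|A'|}\leq(1-\delta)^{m-j+1}$. Since in this sub phase $a^{n+1}_j$ can only move from unmatched to matched-in-$A_{n+2}$ and never back to unmatched, the event ``$a^{n+1}_j$ is unmatched just before $(a^{n+1}_j,b^{n+1}_j)$'' coincides with the event that no edge was picked at reveal, completing the bound.

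The main point at which this proof diverges from Lemma~\ref{lem_preempt} is that $a^{n+1}_j$ may enter phase $n+1$ already matched, via an edge of weight $x_n$ equal to the weights revealed in the current phase. A naive emulation of the earlier proof would need to lower-bound the switching probability $f_1(x_n,x_n)$, but nothing in the construction of $(x_i)$ and $(y_i)$ forces this quantity to exceed $\delta$. The structural observation in the first case is what avoids this difficulty: any accepted edge in sub phase $j$ still places $a^{n+1}_j$ into $A_{n+2}$, making the bad event impossible regardless of $f_1(x_n,x_n)$.
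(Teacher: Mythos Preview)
Your proof is correct and follows essentially the same approach as the paper: both split on whether $a^{n+1}_j$ is matched to some vertex of $A_n$ at the end of phase $n$, observe that in the matched case $a^{n+1}_j$ necessarily stays matched to $A_n\cup A_{n+2}$ throughout sub phase $j$, and in the unmatched case invoke the chain-rule argument of Lemma~\ref{lem_preempt} using $f_0(x_n)>1/\alpha>\delta$. Your write-up simply spells out the details that the paper compresses into ``analogous to the proof of Lemma~\ref{lem_preempt}.''
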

\begin{proof}
This proof is analogous to the proof of Lemma \ref{lem_preempt}. If $a^{n+1}_j$ was matched to some vertex in $A_n$ at the end of the $n^{\text{\tiny{th}}}$ phase, then it will continue to remain matched to some vertex in $A_n\cup A_{n+2}$, until the edge $(a^{n+1}_j,b^{n+1}_j)$ is revealed. Otherwise $a^{n+1}_j$ will get matched to some vertex in $A_{n+2}$ with probability at least $1-(1-\delta)^{m-j+1}$, and remain unmatched with probability at most $(1-\delta)^{m-j+1}$.
\end{proof}

\begin{proof} [Proof of Lemma~\ref{lem_Yn}]
This proof is analogous to the proof of Lemma \ref{lem_Y}. Again, the sequence in which the edges are revealed ensures that no edge adjacent to any edge $e$ in any $M_{n+1}$ appears after $e$. Thus, if $e$ is picked when it is revealed, it is never preempted. Hence, $Y_i$ is also the set of edges of $M_i$ that were picked as soon as they were revealed.

When the edge $(a^{n+1}_j,b^{n+1}_j)$ is given, the algorithm picks it with probability at most $\delta$ (since $f_1(x_n,y_n)\leq\delta$) if $a^{n+1}_j$ was matched to some vertex in $A_n\cup A_{n+1}$. Thus, the probability that the edge $(a^i_j,b^i_j)$ appears in $Y_{n+1}$ is at most $\delta+(1-\delta)^{m-j+1}$. Hence, $E[|Y_{n+1}|]\leq\delta m+\sum_{j=1}^m(1-\delta)^{m-j+1}\leq\delta m+(1-\delta)/\delta$.
\end{proof}

\section{Lower Bound for $\theta$ structured graphs}\label{theta_bnd}
The overall idea of the adversarial strategy is as follows. The input graph is a tree whose edges are partitioned into $n+1$ layers which are numbered $0$ through $n$ from bottom to top. Every edge in layer $i$ has weight $\theta^i$. The edges are revealed bottom-up. The edges in layer $i$ are given in such a manner that all the edges in layer $i-1$ held by the algorithm will be preempted. This ensures that in the end, the algorithm's matching contains only one edge, whereas the adversary's matching contains $2^{n-i}$ edges from layer $i$, for each $i$. 

Let $\mathcal{A}$ be any deterministic algorithm for maximum matching in the online preemptive model. The adversarial strategy uses a recursive function, which takes $n\in\mathbb{N}$ as a parameter. For a given $n$, this recursive function, given by Algorithm \ref{alg_theta}, constructs a tree with $n+1$ layers by giving weighted edges to the algorithm in an online manner, and returns the tree, the adversary's matching in the tree, and a vertex from the tree.

\begin{algorithm}
\caption{{\sc{MakeTree}}$(n)$}
\begin{algorithmic}[1]
\IF{$n=0$}
\WHILE{true}
\STATE Take fresh vertices $v$, $v_1$, $v_2$, and give the edges $(v_1,v_2)$, $(v,v_1)$ with weight $1$.
\STATE $T:=\{(v_1,v_2),(v,v_1)\}$.
\IF{algorithm picks $(v_1,v_2)$} \RETURN $(T,\{(v,v_1)\},v_2)$
\ELSIF{algorithm picks $(v,v_1)$} \RETURN $(T,\{(v_1,v_2)\},v)$
\ELSE \STATE Discard $T$ and retry.
\ENDIF
\ENDWHILE
\ELSE
\WHILE{true}
\STATE $(T_1,M_1,v_1)$ $:=$ {\sc{MakeTree}}$(n-1)$
\STATE $(T_2,M_2,v_2)$ $:=$ {\sc{MakeTree}}$(n-1)$
\STATE Give the edge $(v_1,v_2)$ with weight $\theta^n$.
\IF{algorithm picks $(v_1,v_2)$}
\STATE Take a fresh vertex $v$, and give the edge $(v,v_1)$ with weight $\theta^n$.
\STATE $T:=T_1\cup T_2\cup\{(v_1,v_2),(v,v_1)\}$.
\IF{algorithm replaces $(v_1,v_2)$ by $(v,v_1)$} \RETURN $(T,M_1\cup M_2\cup\{(v_1,v_2)\},v)$.
\ELSE \RETURN $(T,M_1\cup M_2\cup\{(v,v_1)\},v_2)$
\ENDIF
\ELSE 
\STATE \COMMENT{algorithm does not pick $(v_1,v_2)$}
\STATE Discard the constructed tree and retry. 
\ENDIF
\ENDWHILE
\ENDIF
\end{algorithmic}
\label{alg_theta}
\end{algorithm}

Let us prove a couple of properties about the behavior of the algorithm and the adversary, when the online input is generated by the call {\sc{MakeTree}}$(n)$.

\begin{lemma}\label{lem_theta_adv}
Suppose that the call {\sc{MakeTree}}$(n)$ returns $(T',M',v')$. Then
\begin{enumerate}
\item $M'$ is a matching in $T'$.
\item $M'$ does not cover the vertex $v'$.
\item The weight of $M'$ is $\sum_{i=0}^n\theta^i2^{n-i}=(\theta^{n+1}-2^{n+1})/(\theta-2)$.
\end{enumerate}
\end{lemma}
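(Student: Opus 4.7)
The plan is to induct on $n$, following the recursive structure of {\sc{MakeTree}} exactly. The three conclusions will be established in parallel, since each of them is used to verify the others in the inductive step.

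For the base case $n=0$, I would simply read off the two possible return values of the algorithm. In either branch, $T'$ is a path on three vertices, $M'$ consists of exactly one of its two edges, and $v'$ is the endpoint of the other edge. So $M'$ is trivially a matching, the single edge of $M'$ is disjoint from $v'$, and the weight equals $1 = \sum_{i=0}^{0}\theta^i 2^{-i}$.

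For the inductive step, suppose the claim holds for $n-1$ and consider a call {\sc{MakeTree}}$(n)$ that returns $(T',M',v')$. By the induction hypothesis applied to the two recursive calls, $M_1$ is a matching in $T_1$ that misses $v_1$, $M_2$ is a matching in $T_2$ that misses $v_2$, and each has weight $(\theta^n-2^n)/(\theta-2)$. Since $T_1$ and $T_2$ are constructed from disjoint fresh vertices, $M_1\cup M_2$ is a matching. The return only occurs along the branch in which the algorithm picks $(v_1,v_2)$, so I would split according to the two sub-branches. If the algorithm keeps $(v,v_1)$ in place of $(v_1,v_2)$, then $M' = M_1\cup M_2 \cup \{(v_1,v_2)\}$ and $v'=v$: the added edge $(v_1,v_2)$ is vertex-disjoint from $M_1\cup M_2$ precisely because $v_1$ is missed by $M_1$ and $v_2$ is missed by $M_2$, and $v$ is a fresh vertex not in $T_1\cup T_2\cup\{(v_1,v_2)\}$, hence is missed by $M'$. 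If instead the algorithm keeps $(v_1,v_2)$, then $M' = M_1\cup M_2 \cup \{(v,v_1)\}$ and $v'=v_2$; the edge $(v,v_1)$ is disjoint from $M_1\cup M_2$ because $v$ is fresh and $v_1$ is missed by $M_1$, and $v'=v_2$ is missed by $M'$ because $M_2$ misses $v_2$ and $v_2$ is neither $v$ nor $v_1$.

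It remains to verify the weight. In both sub-branches $M'$ is obtained from $M_1\cup M_2$ by adding a single edge of weight $\theta^n$, so
\[
w(M') \;=\; 2\cdot\frac{\theta^{n}-2^{n}}{\theta-2} + \theta^n \;=\; \frac{2\theta^{n}-2^{n+1}+\theta^{n+1}-2\theta^{n}}{\theta-2} \;=\; \frac{\theta^{n+1}-2^{n+1}}{\theta-2},
\]
which agrees with $\sum_{i=0}^{n}\theta^i 2^{n-i}$ by the geometric sum formula. The main (and essentially the only) subtlety is keeping track of which vertex is guaranteed to be uncovered in each of the four possible return configurations; this is why property 2 must be carried through the induction together with property 1, rather than derived afterwards.
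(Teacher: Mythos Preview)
Your proof is correct and follows essentially the same approach as the paper: induction on $n$ mirroring the recursive structure of {\sc{MakeTree}}, with the same case split on whether the algorithm replaces $(v_1,v_2)$ by $(v,v_1)$. Your treatment is slightly more explicit (noting the vertex-disjointness of $T_1$ and $T_2$ and carrying out the weight computation in closed form), but the argument is otherwise identical.
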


\begin{proof}
By induction on $n$. For $n=0$, the claim is obvious from the description of {\sc{MakeTree}}. Assume that the claim holds for $n-1$, and consider the call {\sc{MakeTree}}$(n)$, which returns $(T',M',v')$. Then, by induction hypothesis, the two recursive calls must have returned $(T_1,M_1,v_1)$ and $(T_2,M_2,v_2)$ satisfying the conditions of the lemma. Suppose the algorithm replaced $(v_1,v_2)$ by $(v,v_1)$ in its matching. Since $v_1$ and $v_2$ were respectively left uncovered by $M_1$ and $M_2$, $M=M_1\cup M_2\cup\{(v_1,v_2)\}$ is a matching in $T$, and $M'$ does not cover $v'=v$. The case when the algorithm did not replace $(v_1,v_2)$ by $(v,v_1)$ is analogous. In either case, the additional edge in $M'$, apart from edges in $M_1$ and $M_2$ has weight $\theta^n$, and $M_1$, $M_2$ themselves have weight $\sum_{i=0}^{n-1}\theta^i2^{n-1-i}$, by induction hypothesis. Thus, the weight of $M'$ is $\theta^n+2\sum_{i=0}^{n-1}\theta^i2^{n-1-i}=\sum_{i=0}^n\theta^i2^{n-i}$.
\end{proof}

\begin{lemma}\label{lem_theta_alg}
When the call {\sc{MakeTree}}$(n)$ returns $(T,M,v)$, the algorithm's matching contains exactly one edge from $T$. This edge is incident on $v$ and has weight $\theta^n$.
\end{lemma}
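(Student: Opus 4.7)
The plan is to prove the claim by induction on $n$, mirroring the recursive structure of {\sc{MakeTree}}. For the base case $n=0$, the function iterates until the algorithm picks one of the two offered edges $(v_1,v_2)$ or $(v,v_1)$; since these share the vertex $v_1$, at most one of them can simultaneously lie in the algorithm's matching. A quick case split on which branch of the \texttt{if} is taken verifies that, in both returned configurations, the unique held edge has weight $\theta^0=1$ and is incident on the returned vertex $v'$ ($v_2$ in the first case, $v$ in the second).

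For the inductive step, I would apply the induction hypothesis to both recursive calls {\sc{MakeTree}}$(n-1)$ to conclude that, just before the edge $(v_1,v_2)$ is offered, the algorithm's matching contains exactly one edge from $T_1$, incident on $v_1$, and exactly one edge from $T_2$, incident on $v_2$, each of weight $\theta^{n-1}$. The key point is that these two subtrees are built on disjoint sets of fresh vertices, so the only edges in $T_1\cup T_2$ adjacent to the new edge $(v_1,v_2)$ are these two held edges. The outer \texttt{while} loop retries until the algorithm accepts $(v_1,v_2)$, and by validity of the matching in the preemptive model, accepting it forces both adjacent held edges to be evicted. Consequently, after $(v_1,v_2)$ is accepted, the algorithm's matching contains exactly one edge from $T_1\cup T_2\cup\{(v_1,v_2)\}$, namely $(v_1,v_2)$ itself, of weight $\theta^n$.

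Finally, I would analyze the response to the next offered edge $(v,v_1)$, whose only neighbour among the edges of $T'$ is $(v_1,v_2)$. In the branch where the algorithm replaces $(v_1,v_2)$ by $(v,v_1)$, the unique held edge from $T'$ is $(v,v_1)$, incident on $v=v'$; in the other branch it remains $(v_1,v_2)$, incident on $v_2=v'$. Either way, exactly one edge from $T'$ is held, it has weight $\theta^n$, and it is incident on the returned vertex $v'$, closing the induction.

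There is no deep obstacle here, only careful bookkeeping. The one conceptual point to insist on is that in the online preemptive model accepting a new edge evicts \emph{every} adjacent matching edge, which is what forces the two edges guaranteed by the induction hypothesis to disappear the moment $(v_1,v_2)$ is picked. The only thing to check along the way is that $T_1$ and $T_2$ (and hence the algorithm's current matching edges within them) are genuinely vertex-disjoint from the rest of the tree, which is immediate from the use of fresh vertices in each recursive invocation.
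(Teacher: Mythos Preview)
Your proof is correct and follows essentially the same inductive argument as the paper: the paper's proof also inducts on $n$, uses the induction hypothesis to conclude that the algorithm holds exactly one edge $e_1$ from $T_1$ incident on $v_1$ and one edge $e_2$ from $T_2$ incident on $v_2$, observes that accepting $(v_1,v_2)$ forces both to be preempted, and then splits on whether the algorithm replaces $(v_1,v_2)$ by $(v,v_1)$. Your write-up is slightly more explicit about vertex-disjointness and the preemption semantics, but the approach is the same.
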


\begin{proof}
By induction on $n$. Again, the claim is obvious for $n=0$. Assume that the claim holds for $n-1$, and consider the call {\sc{MakeTree}}$(n)$, which returns $(T',M',v')$. At the end of the two recursive calls which return $(T_1,M_1,v_1)$ and $(T_2,M_2,v_2)$. The algorithm will have exactly one edge $e_1$ from $T_1$ incident on $v_1$, and one edge $e_2$ from $T_2$ incident on $v_2$, by induction hypothesis. If the algorithm does not pick the next edge $(v_1,v_2)$, then the tree is discarded. If the algorithm picks that edge, then it must preempt $e_1$ and $e_2$. Thereafter, if the algorithm replaces $(v_1,v_2)$ by $(v,v_1)$ in its matching, then $v'=v$. Otherwise, if the algorithm keeps $(v_1,v_2)$, then $v'=v_2$. In either case, the algorithm is left with exactly one edge, which is incident on $v'$, and which has weight $\theta^n$.
\end{proof}

The adversary's strategy is given by Algorithm \ref{alg_adv}, where $n\geq1$ is a parameter.

\begin{algorithm}
\caption{{\sc{Adv}}$(n)$}
\begin{algorithmic}[1]
\WHILE{true}
\STATE $(T_1,M_1,v_1)$ $:=$ {\sc{MakeTree}}$(n-1)$
\STATE $(T_2,M_2,v_2)$ $:=$ {\sc{MakeTree}}$(n-1)$
\STATE Give the edge $(v_1,v_2)$ with weight $\theta^n$.
\IF{algorithm picks $(v_1,v_2)$}
\STATE Take a fresh vertex $v$, and give the edge $(v,v_1)$ with weight $\theta^n$.
\IF{algorithm replaces $(v_1,v_2)$ by $(v,v_1)$}
\STATE Take a fresh vertex $v'$ and give the edge $(v,v')$ with weight $\theta^n$.
\STATE $T:=T_1\cup T_2\cup\{(v_1,v_2),(v,v_1),(v,v')\}$.
\RETURN $M_1\cup M_2\cup\{(v_1,v_2),(v,v')\}$
\ELSE 
\STATE \COMMENT{algorithm still has $(v_1,v_2)$}
\STATE Take a fresh vertex $v'$ and give the edge $(v_2,v')$ with weight $\theta^n$.
\STATE $T:=T_1\cup T_2\cup\{(v_1,v_2),(v,v_1),(v_2,v')\}$.
\RETURN $M_1\cup M_2\cup\{(v,v_1),(v_2,v')\}$
\ENDIF
\ELSE 
\STATE \COMMENT{algorithm does not pick $(v_1,v_2)$}
\STATE Discard the constructed tree and retry. 
\ENDIF
\ENDWHILE
\end{algorithmic}
\label{alg_adv}
\end{algorithm}

\begin{lemma}\label{lem_discard}
When a tree $T$ is discarded in a call to {\sc{MakeTree}}$(n)$ or\\ {\sc{Adv}}$(n)$, $\alg(T)\leq\left(2+\frac{2}{\theta-2}\right)\cdot$ $\adv(T)$, where $\alg(T)$ and $\adv(T)$ are respectively the total weights of the edges of the algorithm's and the adversary's matchings, in $T$.
\end{lemma}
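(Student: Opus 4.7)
The plan is to handle the three places where a tree can be discarded—the base case \textsc{MakeTree}$(0)$, the recursive case of \textsc{MakeTree}$(n)$ for $n\geq 1$, and the parallel case in \textsc{Adv}$(n)$—and in each one pin down $\alg(T)$ exactly while writing down a simple lower bound on $\adv(T)$, leading to a ratio comfortably below $2 + 2/(\theta-2)$.

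First I would dispose of the base case. Here $T = \{(v_1, v_2), (v, v_1)\}$ consists of two unit-weight edges, and the discard branch is entered precisely when the algorithm keeps neither of them. Hence $\alg(T)=0$, while $\adv(T)\geq 1$ (either edge works as the adversary's matching), giving ratio $0$ and the bound is trivial.

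For the remaining two cases the discarded tree has a common form, namely $T = T_1 \cup T_2 \cup \{(v_1, v_2)\}$, where $(T_1, M_1, v_1)$ and $(T_2, M_2, v_2)$ are the outputs of the two recursive calls to \textsc{MakeTree}$(n-1)$, and the edge $(v_1, v_2)$ of weight $\theta^n$ was rejected by the algorithm. Applying Lemma~\ref{lem_theta_alg} to each recursive call, I note that immediately after the two calls returned, the algorithm's matching contained exactly one edge from $T_1$ (of weight $\theta^{n-1}$, incident on $v_1$) and one from $T_2$ (of weight $\theta^{n-1}$, incident on $v_2$); since $(v_1, v_2)$ is the only further edge of $T$ revealed before the discard, rejecting it leaves exactly those two edges as the algorithm's matching in $T$, so $\alg(T) = 2\theta^{n-1}$. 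On the other side, Lemma~\ref{lem_theta_adv} tells me $M_1$ and $M_2$ are matchings that leave $v_1$ and $v_2$ uncovered, so $M_1 \cup M_2 \cup \{(v_1, v_2)\}$ is a valid matching in $T$, and in particular $\adv(T) \geq \theta^n$.

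Combining these two facts gives $\alg(T)/\adv(T) \leq 2\theta^{n-1}/\theta^n = 2/\theta \leq 1/2$, which for $\theta \geq 4$ is far below $2 + 2/(\theta-2)$, closing the argument. The proof is essentially bookkeeping on top of Lemmas~\ref{lem_theta_alg} and~\ref{lem_theta_adv}; the only spot needing a moment's care is the claim that at the instant of discard the algorithm's matching in $T$ consists of nothing beyond the two edges guaranteed by Lemma~\ref{lem_theta_alg}, which relies on the observation that in the discard branch the algorithm has just rejected $(v_1, v_2)$ and no subsequent edge of $T$ will ever be shown.
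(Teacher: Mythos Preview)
Your argument correctly establishes the inequality exactly as printed, but the printed direction is almost certainly a typo: the paper's own proof computes $\adv(T)/\alg(T)$ and shows it is at least $2+\frac{2}{\theta-2}$, and this stronger direction is precisely what the proof of Theorem~\ref{thm_theta} invokes when it says ``a lower bound of $2+\frac{2}{\theta-2}$ is forced on each discarded tree.'' Against that intended statement your proof has a gap.

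The gap is in your lower bound on $\adv(T)$ for $n\geq 1$. You correctly observe that $M_1\cup M_2\cup\{(v_1,v_2)\}$ is a valid adversary matching, but then discard $M_1$ and $M_2$ and keep only $\adv(T)\geq\theta^n$. That yields $\adv(T)/\alg(T)\geq\theta/2$, which at $\theta=4$ is only $2$, whereas $2+\frac{2}{\theta-2}=3$ there. The paper keeps the full weight: by part~3 of Lemma~\ref{lem_theta_adv} each of $M_1,M_2$ weighs $(\theta^n-2^n)/(\theta-2)$, so $\adv(T)=\theta^n+2\cdot\frac{\theta^n-2^n}{\theta-2}=\frac{\theta^{n+1}-2^{n+1}}{\theta-2}$, and then $\adv(T)/\alg(T)$ simplifies to at least $\theta/2+1$, which meets $2+\frac{2}{\theta-2}$ with equality at $\theta=4$. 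The fix is therefore minor---just retain the $M_1,M_2$ contributions you already identified---but without them the ratio falls short of what the surrounding argument needs.
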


\begin{proof}
For $n\geq1$, consider the two calls to {\sc{MakeTree}}, which returned $(T_1,M_1,$ $v_1)$ and $(T_2,M_2,v_2)$ before the edge $(v_1,v_2)$ is revealed. By Lemma \ref{lem_theta_alg}, the algorithm had exactly one edge in each of $T_1$ and $T_2$, and this edge had weight $\theta^{n-1}$. The tree was discarded because the algorithm did not pick the edge $(v_1,v_2)$. Thus, $\alg(T)=2\theta^{n-1}$. On the other hand, the adversary picks the matching $M_1\cup M_2\cup\{(v_1,v_2)\}$ which, by Lemma \ref{lem_theta_adv}, has weight $\adv(T)=\theta^n+2\sum_{i=0}^{n-1}\theta^i2^{n-1-i}=\sum_{i=0}^n\theta^i2^{n-i}=(\theta^{n+1}-2^{n+1})/(\theta-2)$. Thus,
\begin{align*}
\frac{\adv(T)}{\alg(T)}&=\frac{\theta^{n+1}-2^{n+1}}{2\theta^{n-1}(\theta-2)}=\frac{\theta}{2}\times\frac{1-\left(\frac{2}{\theta}\right)^{n+1}}{1-\frac{2}{\theta}} \geq\frac{\theta}{2}\times\frac{1-\left(\frac{2}{\theta}\right)^2}{1-\frac{2}{\theta}}\\
  &=\frac{\theta}{2}\times\left(1+\frac{2}{\theta}\right)=\frac{\theta}{2}+1 \geq\left(2+\frac{2}{\theta-2}\right)
\end{align*}
The last inequality follows from the fact that $\theta\geq4$. Finally, note that when the discard happens in a call to {\sc{MakeTree}}$(0)$, $\alg(T)=0$ and $\adv(T)=1$.
\end{proof}

Now we are ready to prove Theorem \ref{thm_theta}.

\begin{proof}[Proof of Theorem \ref{thm_theta}]
For $n\geq1$, give the adversarial input by calling {\sc{Adv}}$(n)$. If the call does not terminate, then an unbounded number of trees are discarded, and by Lemma \ref{lem_discard}, a lower bound of $\left(2+\frac{2}{\theta-2}\right)$ is forced on each discarded tree. If the call terminates, then suppose $T$ is the final tree constructed. Let $(T_1,M_1,v_1)$ and $(T_2,M_2,v_2)$ be returned by the two calls to {\sc{MakeTree}}$(n-1)$. By the description of {\sc{Adv}} and Lemma \ref{lem_theta_alg}, it is clear that the algorithm holds only one edge of $T$ in the end, and this edge has weight $\theta^n=\alg(T)$. On the other hand, the adversary's matching contains $M_1$ and $M_2$, and two edges of weight $\theta^n$, where by Lemma \ref{lem_theta_adv}, the weight of $M_1$ and $M_2$ is $(\theta^n-2^n)/(\theta-2)$ each. Thus, $\adv(T)=2\theta^n+2(\theta^n-2^n)/(\theta-2)$. Therefore,
\[\frac{\adv(T)}{\opt(T)}=2+\frac{2(\theta^n-2^n)}{\theta^n(\theta-2)}=2+\frac{2\left(1-\left(\frac{2}{\theta}\right)^n\right)}{\theta-2}\]
This approaches $\left(2+\frac{2}{\theta-2}\right)$ as $n\rightarrow\infty$. Furthermore, this lower bound is also forced on the trees discarded during the execution of {\sc{Adv}}$(n)$. Thus, the algorithm can not have a competitive ratio less than $\left(2+\frac{2}{\theta-2}\right)$.
\end{proof}

\section{Proof of Theorem~\ref{th1}}\label{opt_paths}

Theorem~\ref{th1} can be proved using the following lemma.
\begin{lemma}
For any (maximal) path $P$ of length $n>0$,
\begin{itemize}
 \item if $n$ is even then $\mathbb{E}[|M\cap P|] \geq (3/4)(n/2)+1/4 = p_0(n)$ (say).
 \item if $n$ is odd then $\mathbb{E}[|M\cap P|]\geq (3/4)(n/2)+ 3/8 = p_1(n)$ (say).
 \end{itemize}
\end{lemma}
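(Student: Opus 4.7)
The plan is to prove the bound by strong induction on $n$. The base cases $n\in\{1,2,3,4\}$ can be handled by direct enumeration of how such a path can be assembled by the online algorithm. For $n=1$ the path is an isolated edge and the first observation gives $|M\cap P|=1$. For $n=2$ the only possible construction sequence is an isolated edge followed by a pendant, after which the algorithm's case ``$L_1=1,L_2=0$'' leaves $|M\cap P|=1$ deterministically. For $n=3$ either of the two construction sequences (extending a $2$-path, or joining two isolated edges) yields $|M\cap P|\in\{1,2\}$ each with probability $1/2$, giving $\mathbb{E}[|M\cap P|]=3/2=p_1(3)$. For $n=4$, both possible build sequences (extending a $3$-path, or joining a $1$-edge path to a $2$-edge path) leave $|M\cap P|=2$ deterministically, which beats $p_0(4)=7/4$.

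For the inductive step $n\ge 5$, the crucial structural remark is that since $P$ is maximal, no edge of $P$ is revealed after the last edge $e^{\star}\in P$ to arrive; consequently $M\cap P$ is frozen from the moment $e^{\star}$ is processed. Just before $e^{\star}$ arrives, $P\setminus\{e^{\star}\}$ splits into two maximal sub-paths $P_1,P_2$ of lengths $L_1,L_2$ with $L_1+L_2+1=n$, using the convention $L_2=0$ when $e^{\star}$ is a pendant on a new vertex. For $n\ge 5$ only three of the six algorithmic branches can apply: (A) $L_2=0$ and $L_1=n-1$, a pendant extension of a long path; (C) $L_2=1$ and $L_1=n-2$, attachment of a fresh isolated edge to a long path; and (D) $L_1,L_2\ge 2$, merging of two long paths.

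In branch (A) the algorithm accepts $e^{\star}$ iff the edge of $P_1$ adjacent to $e^{\star}$ is absent from $M$, an event of probability $1/2$ by the second observation; linearity of expectation then yields $\mathbb{E}[|M\cap P|]=\mathbb{E}[|M\cap P_1|]+1/2$, with no independence between $|M\cap P_1|$ and the indicator required. In branch (C) the lone edge of $P_2$ lies in $M$ with probability one, and a brief case check on whether the algorithm switches shows that in either outcome $|M\cap P|=|M\cap P_1|+1$, so $\mathbb{E}[|M\cap P|]=\mathbb{E}[|M\cap P_1|]+1$. In branch (D), $e^{\star}$ is accepted iff the end edges of both $P_1$ and $P_2$ are simultaneously absent from $M$; by the cross-path independence clause of the second observation this has probability $1/4$, giving $\mathbb{E}[|M\cap P|]=\mathbb{E}[|M\cap P_1|]+\mathbb{E}[|M\cap P_2|]+1/4$. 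Substituting the inductive bounds $p_{L_i\bmod 2}(L_i)$ and comparing against $p_{n\bmod 2}(n)$ reduces to a short arithmetic check over the four parity combinations of $(L_1,L_2)$, which confirms the required inequality in every case (tight when both $L_i$ are even in branch (D), or when $L_1$ is even in branch (A)).

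The step I expect to be most delicate is branch (D), since this is the only place where independence is genuinely used and the inductive inequality is tight; here one must invoke precisely the cross-path independence guaranteed by the second observation, which holds because $P_1$ and $P_2$ are separate components of the graph at the moment before $e^{\star}$ arrives. Elsewhere the argument is either pure linearity or a deterministic identity, so no further analysis of the algorithm is needed once one notes that after $e^{\star}$ has been processed every subsequent input edge lies in a component disjoint from $P$ and cannot perturb $M\cap P$.
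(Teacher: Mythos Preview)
Your argument is correct and follows essentially the same route as the paper: split on the last edge $e^\star$ of $P$ to arrive, apply the bound to the two pieces $P_1,P_2$, and add the expected contribution of $e^\star$ in each of the three relevant algorithmic branches, then do the parity arithmetic. Two small remarks: the paper phrases the induction over the number of edges in the input rather than over $n$, which makes the application of the hypothesis to $P_1,P_2$ (maximal only at the earlier time, not in the final graph) cleaner; and the paper does not need $n=4$ as a separate base case, since its inductive step already covers $n\ge 4$ via branches (A) and (C).
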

\begin{proof}For $n=1$ and $n=2$, $ \mathbb{E}[|M\cap P|]=1$, and for $n=3$, $ \mathbb{E}[|M\cap P|]=\frac{3}{2}$. Thus the lemma holds when $n\leq 3$. We will induct on the number of edges in the input. (Case $n=1$ covers the base case.) Suppose the lemma is true before the arrival of the new edge $e$. We prove that the lemma holds even after $e$ has been processed. We may assume that the length $n$ of the new path $P$ resulting from addition of $e$ is at least $4$.
\begin{enumerate}
\begin{comment}
 \item If a path of length $5$ is formed by joining two paths of length $2$ each, by a new edge: With probability $3/4$ atleast one of the joining edges of two paths will be present and with probability $1/4$, none of them will be present.
 \begin{align*}
  E\left[path_{5}^{odd}\right]&= \frac{3}{4}\left(E\left[path_{2}^{even}\right]+E\left[path_{2}^{even}\right]\right) + \frac{1}{4}\left(E\left[path_{e}^{even}\right]+E\left[path_{2}^{even}\right] + 1\right)\\
                    &= 2+\frac{1}{4}\\
                    &= \frac{3}{4}\left(\frac{5+1}{2}\right)
 \end{align*}
\end{comment}
 \item If $n$ is even and $L_2=0$, (therefore $L_1$ is odd, and $L_1 \geq 3$), $\Pr[e_1\notin M]=\frac{1}{2}$. Therefore, $e$ is added to $M$ with probability $\frac{1}{2}$.
 \begin{align*}
  \mathbb{E}\left[|M\cap P|\right] \geq p_1(n-1)+\frac{1}{2} \geq p_0(n)
 \end{align*}

 \item If $n$ is even, $L_1=n-2$ and $L_2=1$.
 \begin{align*}
  \mathbb{E}\left[|M\cap P|\right]&\geq p_0(n-2)+1\\
                    &= \frac{3}{4}\left(\frac{n-2}{2}\right)+\frac{1}{4}+1\\
                    &\geq p_0(n)
 \end{align*}
\begin{comment}
 \item If path of length $n$, such that $n$ is even, is formed by joining a path of length $n-3$ and a path of length $2$ by a new edge: With probability $3/4$ atleast one of the joining edges of two paths will be present and with probability $1/4$, none of them will be present.
 \begin{align*}
  E\left[path_{n}^{even}\right]&= \frac{3}{4}\left(E\left[path_{n-3}^{odd}\right]+E\left[path_{2}^{even}\right]\right) + \frac{1}{4}\left(E\left[path_{n-3}^{odd}\right]+E\left[path_{2}^{even}\right] + 1\right)\\
                    &\geq \frac{3}{4}\left(\frac{n-2}{2}\right)+1+\frac{1}{4}\\
                    &= \frac{3}{4}\left(\frac{n}{2}\right)+\frac{1}{2}
 \end{align*}
\end{comment}
 \item If $n$ is even, and $L_2>1$, where $n=L_1+L_2+1$, $L_1$ is even, and $L_2$ is odd. $\Pr[e_1\notin M, e_2\notin M]=\frac{1}{4}$
 \begin{align*}
  \mathbb{E}\left[|M\cap P|\right]&\geq p_0(L_1)+p_1(L_2)+\frac{1}{4}\\
                    &= \frac{3}{4}\left(\frac{L_1+L_2+1}{2}\right)+\frac{1}{4}+\frac{3}{8}-\frac{3}{8}+\frac{1}{4}\\
                    &\geq p_0(n)
 \end{align*}

 \item If $n$ is odd, and $L_2=0$, (therefore $L_1$ is even, and $L_1 \geq 3$), $\Pr[e_1\notin M]=\frac{1}{2}$. Therefore, $e$ is added to $M$ with probability $\frac{1}{2}$.
 \begin{align*}
  \mathbb{E}\left[|M\cap P|\right] \geq p_0(n-1)+\frac{1}{2} = p_1(n)
 \end{align*}

 \item If $n$ is odd, $L_1=n-2$ and $L_2=1$.
 \begin{align*}
  \mathbb{E}\left[|M\cap P|\right]&\geq p_1(n-2)+1\\
                    &= \frac{3}{4}\left(\frac{n-2}{2}\right)+\frac{3}{8}+1\\
                    &\geq p_1(n)
 \end{align*}
\begin{comment}
 \item If path of length $n$, such that $n$ is odd, is formed by joining a path of length $n-3$ and a path of length $2$ by a new edge: With probability $3/4$ atleast one of the joining edges of two paths will be present and with probability $1/4$, none of them will be present.
 \begin{align*}
  E\left[path_{n}^{odd}\right]&= \frac{3}{4}\left(E\left[path_{n-3}^{even}\right]+E\left[path_{2}^{even}\right]\right) + \frac{1}{4}\left(E\left[path_{n-3}^{even}\right]+E\left[path_{2}^{even}\right] + 1\right)\\
                    &\geq \frac{3}{4}\left(\frac{n-3}{2}\right)+\frac{1}{2}+1+\frac{1}{4}\\
                    &\geq \frac{3}{4}\left(\frac{n+1}{2}\right)
 \end{align*}
\end{comment}
 \item If $n$ is odd, and $L_2>1$, where $n=L_1+L_2+1$, $L_1$ is even, and $L_2$ is even. $\Pr[e_1\notin M, e_2\notin M]=\frac{1}{4}$
 \begin{align*}
  \mathbb{E}\left[|M\cap P|\right]&\geq p_0(L_1)+p_0(L_2)+\frac{1}{4}\\
                    &= \frac{3}{4}\left(\frac{L_1+L_2+1}{2}\right)+\frac{1}{4}+\frac{1}{4}-\frac{3}{8}+\frac{1}{4}\\
                    &= p_1(n)
 \end{align*}

 \item If $n$ is odd, and $L_2>1$, where $n=L_1+L_2+1$, $L_1$ is odd, and $L_2$ is odd. $\Pr[e_1\notin M, e_2\notin M]=\frac{1}{4}$
 \begin{align*}
  \mathbb{E}\left[|M\cap P|\right]&\geq p_1(L_1)+p_1(L_2)+\frac{1}{4}\\
                    &= \frac{3}{4}\left(\frac{L_1+L_2+1}{2}\right)+\frac{3}{8}+\frac{3}{8}-\frac{3}{8}+\frac{1}{4}\\
                    &\geq p_1(n)
 \end{align*}

\end{enumerate}
This completes the induction and hence implies a $\frac{4}{3}$-competitive ratio for this algorithm.
\end{proof}

\end{document}